\documentclass[11pt, reqno]{article}
\setlength{\parindent}{0in}
\setlength{\parskip}{0.07in}
%%%%%%%%%%%%%%%%%%%%%%%%%%%%%%%%%%%%%%%%%
%%        usepackages
%%%%%%%%%%%%%%%%%%%%%%%%%%%%%%%%%%%%%%%%%

\usepackage[left=24mm,top=2cm,right=24mm,bottom=3cm, footskip=1cm, nohead]{geometry}
\usepackage[utf8]{inputenc}
\usepackage{graphicx}
\usepackage[usenames,dvipsnames]{color}
\usepackage{amsmath}
\usepackage{amssymb}
\usepackage{amsthm}  
\usepackage[colorlinks=true,linkcolor=black,citecolor=black,plainpages=false,pdfpagelabels]{hyperref}
\usepackage{color}
\usepackage{times}
\usepackage{float}

%%%%%%%%%%%%%%%%%%%%%%%%%%%%%%%%%%%%%%%%%
%%        newtheorems
%%%%%%%%%%%%%%%%%%%%%%%%%%%%%%%%%%%%%%%%%

\newtheorem{thm}{Theorem}[section]
\newtheorem{cor}[thm]{Corollary}
\newtheorem{lem}[thm]{Lemma}
\newtheorem{defin}{Definition}[section]

\numberwithin{equation}{section}

%%%%%%%%%%%%%%%%%%%%%%%%%%%%%%%%%%%%%%%%%
%%        shortcuts
%%%%%%%%%%%%%%%%%%%%%%%%%%%%%%%%%%%%%%%%%

%polices

\newcommand{\mcal}{\mathcal}
%polices

\newcommand{\bra}[1]{\langle #1|}
\newcommand{\ket}[1]{|#1 \rangle}
\newcommand{\braket}[2]{\langle #1|#2\rangle}
\newcommand{\ketbra}[1]{\ket{#1}\bra{#1}}
\newcommand{\ketb}[2]{\ket{#1}\bra{#2}}
\newcommand{\ident}{\mathbb{I}}
\DeclareMathOperator{\tr}{Tr}
\DeclareMathOperator{\rank}{rank}
\newcommand{\mdag}{^{\dag}} % dag operator
\newcommand{\demi}{\frac{1}{2}}

\def\ox{\otimes}
\def\>{\rangle}
\def\<{\langle}

\def\Tr{\mathrm{Tr}}
\def\Pr{\mathrm{Pr}}

\def\lp{\left(}
\def\rp{\right)}
\def\ls{\left[}
\def\rs{\right]}
\def\lb{\left\{}
\def\rb{\right\}}

% Specifique a la these

\DeclareMathOperator{\op}{op}

\DeclareMathOperator{\Herm}{Herm}
\DeclareMathOperator{\Pos}{Pos}

\DeclareMathOperator{\acc}{acc}
\DeclareMathOperator{\polylog}{polylog}

%mathbb's

\newcommand{\mbE}{\mathbb{E}}

\newcommand{\mbR}{\mathbb{R}}
\newcommand{\mbS}{\mathbb{S}}

%mathbf's

%mathfrak's

\newcommand{\mfX}{\mathfrak{X}}
\newcommand{\mfY}{\mathfrak{Y}}

%mathsf's
\newcommand{\sfA}{\mathsf{A}}

\newcommand{\expU}{\underset{U}{\mbE}}
\newcommand{\PrU}{\underset{U}{\Pr}}

\newcommand{\lno}{\left\|}
\newcommand{\rno}{\right\|}

\begin{document}

%%%%%%%%%%%%%%%%%%
\title{Locking classical information}
%%%%%%%%%%%%%%%%%%

\author{
	Fr\'ed\'eric Dupuis\thanks{Institute for Theoretical Physics, ETH Zurich, Switzerland} \and
	Jan Florjanczyk\thanks{School of Computer Science, McGill University, Montreal, Canada} \and
	Patrick Hayden$^\dagger$\thanks{Perimeter Institute for Theoretical Physics, Waterloo, Canada} \and
	Debbie Leung\thanks{Institute for Quantum Computing, University of Waterloo, Waterloo, Canada}\,\,$^\ddagger$
}

\date{4 November 2010}

\pagestyle{plain}

\maketitle

\begin{abstract}
It is known that the maximum classical mutual information that can be achieved between measurements on a pair of quantum systems can drastically underestimate the quantum mutual information between those systems. In this article, we quantify this distinction between classical and quantum information by demonstrating that after removing a logarithmic-sized quantum system from one half of a pair of perfectly correlated bitstrings, even the most sensitive pair of measurements might only yield outcomes essentially independent of each other. This effect is a form of information locking but the definition we use is strictly stronger than those used previously. Moreover, we find that this property is generic, in the sense that it occurs when removing a random subsystem. As such, the effect might be relevant to statistical mechanics or black hole physics. Previous work on information locking had always assumed a uniform message. In this article, we assume only a min-entropy bound on the message and also explore the effect of entanglement. We find that classical information is strongly locked almost until it can be completely decoded. As a cryptographic application of these results, we exhibit a quantum key distribution protocol that is ``secure'' if the eavesdropper's information about the secret key is measured using the accessible information but in which leakage of even a logarithmic number of key bits compromises the secrecy of all the others.

\vskip 4pt
{\bf Keywords:} information locking, quantum information, encryption, discord, measure concentration, black holes
\end{abstract}

\section{Introduction}
One of the most basic and intuitive properties of most information measures is that the amount of information carried by a physical system must be bounded by its size. For example, if one receives ten physical bits, then one's information, regardless of what that information is ``about'', should not increase by more than ten bits. While this is true for most information measures, in quantum mechanics there exist natural ways of measuring information that violate this principle by a wide margin. In particular, this violation occurs when one defines the information contained in a quantum system as the amount of classical information that can be extracted by the best possible measurement. To construct examples of this effect, we take a classical message and encode it into a two-part quantum message: a \emph{cyphertext}, which is roughly as large as the message, and a much smaller \emph{key}. Given both the cyphertext and the key, the message can be perfectly retrieved. We can then look at the amount of information that can be extracted about the message by a measurement given only access to the cyphertext. Locking occurs if this amount of information is less than the amount of information in the message minus the size of the key.

In previous work on locking \cite{dhlst03,hlsw03}, this amount of information was taken to be the accessible information, the maximum (classical) mutual information between the message and the result of a measurement. In \cite{dhlst03}, the authors constructed the first example of locking as follows: the cyphertext consists of the uniformly random message, encoded in one of two mutually unbiased bases, and the (one-bit) key reveals the basis in which the encoding was done. In this example, given only the cyphertext, the classical mutual information is only $\frac{n}{2}$ for an $n$-bit message. Hence, the one-bit key can increase the classical mutual information by another $\frac{n}{2}$ bits. In \cite{hlsw03}, the authors considered a protocol in which one encodes a classical message using a fixed basis, and then applies one of $k$ fixed unitaries (where $k = O(\polylog n + \log \frac{1}{\varepsilon})$); the classical key reveals which unitary was applied. If the unitaries are chosen according to the Haar measure, then with high probability, the accessible information was shown to be at most $\varepsilon n$ when one only has the cyphertext.

In this paper, we present stronger and more general locking results, and show that this effect is generic. Our results will be stronger in the sense that instead of using the accessible information, we will define locking in terms of the trace distance between measurement results on the real state and measurement results on a state completely independent of the message (see Definition \ref{def:locking}). Unlike the accessible information, this has a very natural operational interpretation: it bounds the largest probability with which we can guess, given a message $m$ and the result $x$ of a measurement done on a cyphertext, whether $x$ comes from a valid cyphertext for $m$ or from a cyphertext generated independently of $m$. In other words, one could almost perfectly reproduce any measurement results made on a valid cyphertext without having access to the cyphertext at all. Moreover, we recover a strengthened version the earlier statements about the accessible information. Whereas previously the accessible information was shown to be at most 3 bits, our techniques show that the accessible information can be made arbitrarily small. (A follow-up paper further strengthens the definition and explores connections to low-distortion embeddings~\cite{FawHaySen10}.) 

Despite this stronger definition, we will be able to show that the locking phenomenon is generic. Instead of having a classical key reveal the basis in which the information is encoded, as in \cite{dhlst03,hlsw03}, we consider the case where there is a single unitary, and the key is simply a small part of the quantum system after the unitary is applied. This means that we can make not only cryptographic statements, but also statements about the dynamics of physical systems, where the unitary represents the evolution of the system. In particular, we will be able to show that locking occurs with high probability in physical systems whose internal dynamics are sufficiently generic to be adequately modelled by a Haar-distributed unitary. This can therefore give interesting results in the context of thermodynamics, or of the black hole information problem.

In that vein, we will also allow the measuring device to share entanglement with the cyphertext-key compound system. While this may not correspond to a very meaningful cryptographic scenario, it allows us to study the behavior of entanglement in physical systems, and to study the extent to which the presence of entanglement interferes with this locking effect.

Finally, in contrast to previous studies, we will not limit the message (or the entanglement) to be uniform; the size of the key will instead depend on the min-entropy of the message. This assumption is easier to justify in cryptographic applications. Indeed, while the locking results we present here can be interpreted as demonstrating the possibility of encrypting classical messages in quantum systems using only very small keys, care must be taking when composing such encryption with other protocols. We use our results to exhibit a quantum key distribution protocol, for example, that appears to be secure if the eavesdropper's information about the secret key is measured using the accessible information, but in which leakage of a logarithmic amount of key causes the entire key to be compromised.

%Finally, our locking results can be viewed as a kind of converse for a particular class of coding strategies for quantum channels, namely those for which the code is chosen according to the Haar measure. In this context, our result implies that if one tries to code above capacity with this particular scheme, then the receiver will be unable to get any classical correlations with the message sent. 

%%%%%%%%%%%%%%%%%%%%%%%%%%%%%%%%%%%%%%%%%%%%%%%%

\subsection{Transmitting information through a generic unitary} \label{sec:how-to-lock}

To end the introduction, we introduce
the physical scenario that will occupy us throughout the article. The situation is depicted in Figure~\ref{fig:scenario}. 

\begin{figure}[h!]
	\begin{center}
	\includegraphics[width=0.65\linewidth]{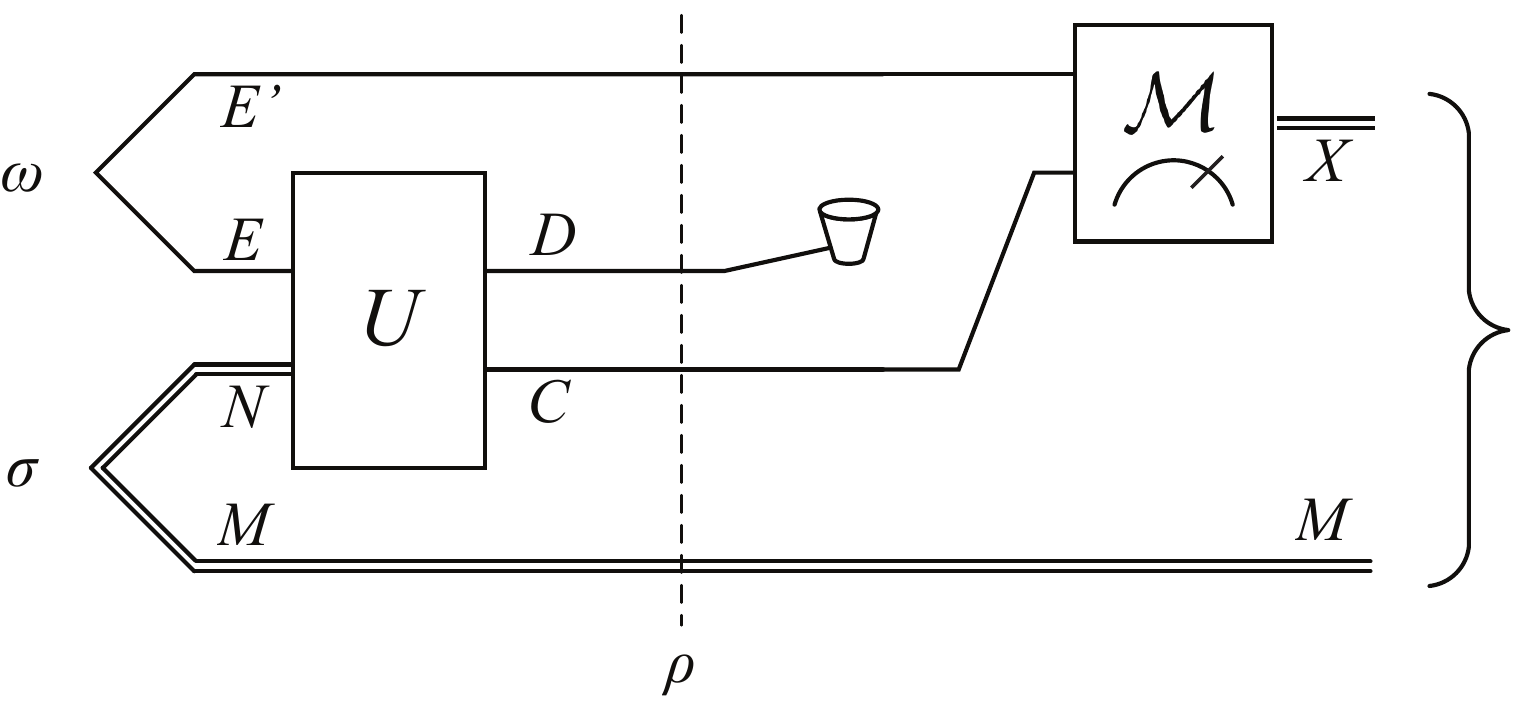}
	\end{center}
	\caption{A quantum circuit depicting the physical scenario. The classical message $M$ gets encoded in $N$, and the unitary then mixes it with the $E$ part of the shared entanglement. If the information is locked, any joint measurement $\mathcal{M}$ on $C$ and $E'$ will yield a result $X$ that is almost independent of the message. On the other hand, if $C$ is large enough, there will be a joint measurement $\mathcal{M}$ reliably decoding $M$.}
\label{fig:scenario}
\end{figure}

%The systems above represent, respectively, the \textbf{M}essage, \textbf{C}yphertext, \textbf{K}ey, and two parts of an \textbf{E}ntangled system. 
%Our construction for a locking scheme is probabilistic and we show that with high probability we achieve the above criterion. Our scheme is implemented by applying a random unitary operator chosen from the Haar measure to the quantum system $CKE$, then tracing out the key system $K$ and the entanglement system $E$. Naturally then, $\rho^{MCE'}$ is the state which is ``locked" after this process. 
Now, let $\{ \ket{\psi_m} : 1 \leqslant m\leqslant |M| \}$ be any orthonormal basis for $N$. The analysis will focus on the properties of the states 
\begin{eqnarray}
	\sigma^{MN} & := & \sum_{m=1}^{|M|} p_m \ketbra{m}^M \otimes \ketbra{\psi_m}^{N}  
		\quad \mbox{and} \label{eqn:sigmadef} \\
	\rho^{MCDE'} & := & \lp \ident^{ME'} \ox U^{NE\rightarrow CD} \rp \lp \sigma^{MN} \ox \omega^{EE'} \rp
		\lp \ident^{ME'} \ox U^{NE\rightarrow CD} \rp^\dagger. \label{eqn:rhodef} 
\end{eqnarray}

Our objective is to demonstrate that until $C$ is large enough that there exists a measurement on $CE'$ capable of revealing \emph{all} the information about the message $M$, no measurement will reveal \emph{any} information about the message. This can't quite be true, of course, so what we will demonstrate is that the jump from no information to complete information involves enlarging $C$ by a number of qubits logarithmic in the size of the message $M$ and the amount of entanglement $E$.

Assume for simplicity both that $M$ is uniformly distributed and that the state $\omega^{EE'}$ is maximally entangled. As a first step, it is necessary to determine how large $C$ needs to be in order for there to exist a measurement on $CE'$ that will reveal the message $M$. Begin by purifying the state $\sigma$ to
\begin{equation}
|\sigma\rangle^{RMN} = \frac{1}{\sqrt{|M|}} \sum_{m=1}^{|M|}
	| m \rangle^{R} \otimes |m\rangle^{M} \otimes |\psi_m\rangle^{N}.
\end{equation}
Even more demanding than performing a measurement to reveal $m$ is the task of transmitting the quantum information about $RM$ through $U$, allowing the decoder, who has access only to $CE'$, to recover a high fidelity copy of the state $|\sigma\rangle^{RMN}$. If $U$ is selected according to the Haar measure, then Theorem IV.1 of \cite{FQSW} implies that there is a quantum operation $\mathcal{D}^{CE' \rightarrow N}$ acting only on $CE'$ such that
\begin{equation} \label{eqn:purified-decoding}
	\left\| \mathcal{D} \left( \tr_{D} \left[ U^{NE\rightarrow CD}( \sigma^{RMN} \otimes \omega^{EE'}) (U^{NE\rightarrow CD})^\dagger \right] \right) 
	- \sigma^{RMN} \right\|_1 
	\leq 2 \sqrt{ \frac{M}{C} }.
\end{equation}
Because the trace distance is monotonic under quantum operations, it will not increase by taking the partial trace over $R$ and measuring in the basis $\{ | \psi_m \rangle \}$~\cite{NC2000}. If we let $p(m' | m)$ be the probability of getting an outcome $|\psi_{m'}\rangle$ when the message was in fact $m$, Equation (\ref{eqn:purified-decoding}) therefore implies that
\begin{equation} \label{eqn:measurement-existence}
\frac{1}{M} \sum_m \sum_{m' \neq m} p(m' | m ) \leq \sqrt{ \frac{M}{C} }.
\end{equation}
In words, the probability of the measurement yielding the incorrect outcome, averaged over all messages, is at most $\sqrt{M/C}$, so as soon as $C$ is significantly larger than $M$, a measurement on $CE'$ can be found that will reveal the message. Our goal in this article will be to demonstrate that until this condition is met, no measurement will reveal any significant information about the message.

%%%%%%%%%%%%%%%%%%%%%%%%%%%%%%%%%%
%%%%%%%%%%%%%%%%%%%%%%%%%%%%%%%%%%
%%%%%%%%%%%%%%%%%%%%%%%%%%%%%%%%%%
%%%%%%%%%%%%%%%%%%%%%%%%%%%%%%%%%%

\subsection{Structure of the paper}
The next subsection explains the notation used throughout the paper, and we then move on to the formal definition of locking as well as other important concepts in Section \ref{sec:definitions}. Section \ref{sec:main-results} will state the main results and give a high-level overview of the proof, and Section \ref{sec:preliminary-calculations} will begin the proof with some key lemmas. Section \ref{sec:projective-measurements} will deal with the proofs of our theorems in the easier case where the measurement device is restricted to making only projective measurements, and Section \ref{sec:povms} will deal with the case of general measurements (POVMs). We then show in Section \ref{sec:packing} that, in many regimes, as soon as the information is not locked, it is completely decodable. Implications for the security definitions of quantum cryptographic protocols will be presented in Section \ref{sec:qkd}, and we conclude the paper with a discussion in Section \ref{sec:discussion}.

%%%%%%%%%%%%%%%%%%%%%%%%%%%%%%%%%%
%%%%%%%%%%%%%%%%%%%%%%%%%%%%%%%%%%
%%%%%%%%%%%%%%%%%%%%%%%%%%%%%%%%%%
%%%%%%%%%%%%%%%%%%%%%%%%%%%%%%%%%%
%\newpage
\subsection{Notation}
%%%%%%%%% Notation %%%%%%%%%%%%
\label{notation-at-beginning-firstpage}

\newcommand{\tabstart}[1]{\noindent \begin{center}\begin{tabular}{p{2.95cm}p{10cm}}
    \multicolumn{2}{l}{{\bf #1}} \\ \hline \\[-2.5ex] } 

\newcommand{\tabstop}{\\ \hline \end{tabular}\end{center}}

\newcommand{\tabinter}{\vspace{2ex}}

\tabstart{General}
$\log$ & Logarithm base 2.\\
$\mbE_U[f(U)]$ & Expectation value of $f(U)$ over the random variable $U$. \\
$AB$ & Composite quantum system whose associated Hilbert space is $A \otimes B$. We frequently identify quantum systems with their associated Hilbert spaces.\\
$|A|$ & Dimension of Hilbert space $A$. However, we will often drop the $|\cdot|$. For example, the dimension of the composite system $MCK$ is denoted by $MCK$ (a scalar value).\\
$A^{\ox 2}$ & Two identical copies of $A$ the second of which is denoted by $\overline{A}$. \\
$\ket{\psi}^A, \ket{\varphi}^A, \dots$ & Vectors in $A$.\\
$\psi^A, \varphi^A, \dots$ & The ``unketted'' versions denote their associated density matrices: $\psi^A = \ketbra{\psi}$. Furthermore, if we have defined a state $\psi^{AB}$, then $\psi^A = \tr_B[\psi^{AB}]$.\\
$\pi^A$ & The maximally mixed state $\frac{\ident^A}{|A|}$ . \\
$\mcal{U}(A)$ & The unitary group on $A$. \\
$\Pos(A)$ & The subset of Hermitian operators from $A$ to $A$ consisting of positive semidefinite matrices.\\
$\mcal{L}(s,\eta)$ & The set of all $(s,\eta)$-quasi-measurements, see Definition \ref{def:quasi-measurement}.
\tabstop

\tabinter

\tabstart{Operators}
$\ident^A$ & Identity operator on $A$.\\
$M^{A \rightarrow B}$ & Indicates that the operator $M$ is a transformation from states on $A$ to states on $B$. \\
$\mcal{M}^{A \rightarrow B}$ & Indicates that the superoperator $\mcal{M}$ is a transformation from operators on $A$ to operators on $B$. $M$ and $\mcal{M}$ will be freely identified with their extensions (via tensor product with the identity) to larger systems.\\
$M \cdot N$ & $MNM\mdag$\\
$M \leqslant N$ & If $M, N \in \Herm(\sfA)$, this means that $N - M \in \Pos(A)$. \\
$\sqrt{M}$ & If $M \in \Pos(A)$ has spectral decomposition $M = \sum_i \lambda_i \ketbra{\psi_i}$, then $\sqrt{M} = \sum_i \sqrt{\lambda_i} \ketbra{\psi_i}$. \\
$\Pi^{A}_{\pm}$ & Projector onto the symmetric ($+$) or antisymmetric ($-$) subspace of $A^{\ox2}$. \\
$\op_{A \rightarrow B}(\ket{\psi}^{AB})$ & Turns a vector into an operator. See Definition \ref{def:opAB}.
\tabstop

\tabinter

\tabstart{Norms and Entropies}
$\left\| M^{A \rightarrow B} \right\|_1$ & $\tr\sqrt{M\mdag M}$\\
$\left\| \ket{\psi} \right\|_2$ & $\sqrt{|\braket{\psi}{\psi}|}$\\
$\left\| M^{A \rightarrow B} \right\|_2$ & $\sqrt{\tr[M \mdag M]}$\\
$\left\| M^{A \rightarrow B} \right\|_{\infty}$ & Largest singular value of $M$, \emph{i.e.} the operator norm of $M$.\\
$H_2(A)_{\rho}$ & Renyi 2-entropy of $A$, defined as $-\log \Tr [\rho^2]$. \\
$H_{\min}(A)_{\rho}$ & Quantum min-entropy of $A$, defined as $-\log \min_{\lambda \in \mbR} \{ \lambda : \rho^A \leqslant \lambda \ident^A \}$.\\
$H_{\max}(A)_{\rho}$ & Quantum max-entropy of $A$, defined as $2 \log \tr \sqrt{\rho^A}$.\\
$I(A; B)_{\rho}$ & Mutual information of $A$ and $B$, defined as $H(A)_{\rho}+ H(B)_{\rho} - H(AB)_{\rho}$. \\
$I_{\mathrm{acc}}(A; B)_{\rho}$ & Accessible information, see Definition \ref{def:accessibleinformation}. 
\tabstop
\tabinter

\section{Definitions}\label{sec:definitions}

%\begin{lem}[see cupbook.pdf 4.4.28]
%\label{lem:Levy}
%Consider a function $f:\mcal{U}(n) \rightarrow \mbR$, where $\mcal{U}(n)$ are the $n$-dimensional unitary matrices and the function is Lipschitz continuous on $\mcal{U}(n)$ under the Hilbert-Schmidt norm with constant $C$. If we consider the probability measure induced by the Haar measure on $\mcal{U}(n)$ then we have a Levy-type concentration estimate
%\begin{equation*}
	%\PrU \lb |f(U) - \expU f| > \epsilon \rb \leqslant \mathrm{exp} \lp -\frac{n\epsilon^2}{4C^2} \rp
%\end{equation*}
%\end{lem}

%%%%%%%%%%%%%%%%%%%%%%%%%%%%%%%%%%%%%%%%%%%%%%%%
This section will present the basic definitions needed to state our results. First, it will be very convenient for us to represent measurements via superoperators in the following manner:
\begin{defin}[Measurement superoperator]
	We call a completely positive, trace-preserving (CPTP) map $\mathcal{M} : \mathcal{B}(A) \rightarrow \mathcal{B}(X)$ a \emph{measurement superoperator} if it is of the form $\mathcal{M}(\rho) = \sum_{i=1}^N \ketbra{i}^X \tr[M_i^A \rho]$, where $\{ \ket{i}^A : i \in \{1,\dots,N\} \}$ is an orthonormal basis for $X$, each $M_i^A$ is positive semidefinite, and $\sum_{i=1}^N M_i^A = \ident^A$.
\end{defin}

These play a central role in the definition of accessible information.
\begin{defin}[Accessible information~\cite{fuchs-phd}]\label{def:accessibleinformation}
Let $\rho^{AB}$ be a quantum state. Then, the accessible information $I_{\acc}(A;B)$ is defined as
\[ I_{\acc}(A;B)_{\rho} := \sup_{\mathcal{A}, \mathcal{B}} I(X;Y)_{(\mathcal{A} \otimes \mathcal{B})(\rho)}, \]
where $\mathcal{A}^{A \rightarrow X}$ and $\mathcal{B}^{B \rightarrow Y}$ are measurement superoperators, and the supremum is taken over all possible superoperators. In other words, the accessible information is the largest possible mutual information between the results of measurements made on $A$ and $B$.
\end{defin}
The accessible information was originally defined only for states in which the $A$ subsystem was classical. In that case, no measurement on $A$ is necessary in the optimization. This quantity is also known as the classical mutual information of a quantum state~\cite{OlZur01,HenVed01}.

We also need to introduce the concept of \emph{quasi-measurements} for our analysis. They are, as their name indicates, almost measurements, but differ in three ways: they only contain rank-one elements of equal weight, they have exactly $n$ outcomes, and the sum of all the elements does not necessarily equal the identity, but is instead bounded by $k\ident$:
\begin{defin}[Quasi-measurement]\label{def:quasi-measurement}
	We call a superoperator $\mathcal{M}^{A \rightarrow B}$ an \emph{$(s,\eta)$-quasi-measurement} if it is of the form $\mathcal{M}(\rho) = \frac{|A|}{s} \sum_{i=1}^s \ketb{i}{\chi_i} \rho \ketb{\chi_i}{i}$ where the $\ket{i}$ index an orthonormal basis for $B$, and $\frac{|A|}{s}\sum_{i=1}^s  \ketbra{\chi_i} \leqslant \eta \ident^{A}$.  We call the set of all $(s,\eta)$-quasi-measurements on a given system, $\mcal{L}(s,\eta)$.
\end{defin}
The reason for introducing these, as will soon become apparent, is that they are almost equivalent to POVMs for our purposes while being much easier to handle mathematically. It can easily be seen that projective measurements are simply $(A,1)$-quasi-measurements.
%\begin{defin}[Projective measurement]
%We call a superoperator $\mathcal{M}^{CE' \rightarrow X}$ a projective measurement if it is a $(CE,1)$-quasi measurement.
%\end{defin}
%%%%%%%%%%%%%%%%%%%%%%%%%%%%%%%%%%%%%%%%%%%%%%%%
%%%%%%%%%%%%%%%%%%%%%%%%%%%%%%%%%%%%%%%%%%%%%%%%
%Alternatively, we can write $\mcal{M}$ as
%\begin{eqnarray*}
	%\mcal{M}^{CE' \rightarrow X}(\rho) & = & \frac{CE}{n}\sum_{i=1}^{n} \ketbra{i}^X \Tr_{CE'} \ls \ketbra{\chi_i} \rho \rs \\
%\end{eqnarray*}
%%%%%%%%%%%%%%%%%%%%%%%%%%%%%%%%%%%%%%%%%%%%%%%%

We now give the formal, strengthened definition of locking. The states in question were introduced in Section \ref{sec:how-to-lock}. However, because the cyphertext will always be smaller than or equal to the message when locking occurs, certain identifications become possible. In particular, we can assume without loss of generality that $N \cong C \otimes K$ and $D \cong E \otimes K$. Since the analysis will be performed using only $C$, $K$ and $E$, we reproduce the illustration of the physical scenario with the identifications made in Figure~\ref{fig:locking}.
\begin{figure}[h!]
	\begin{center}
	\includegraphics[width=0.65\linewidth]{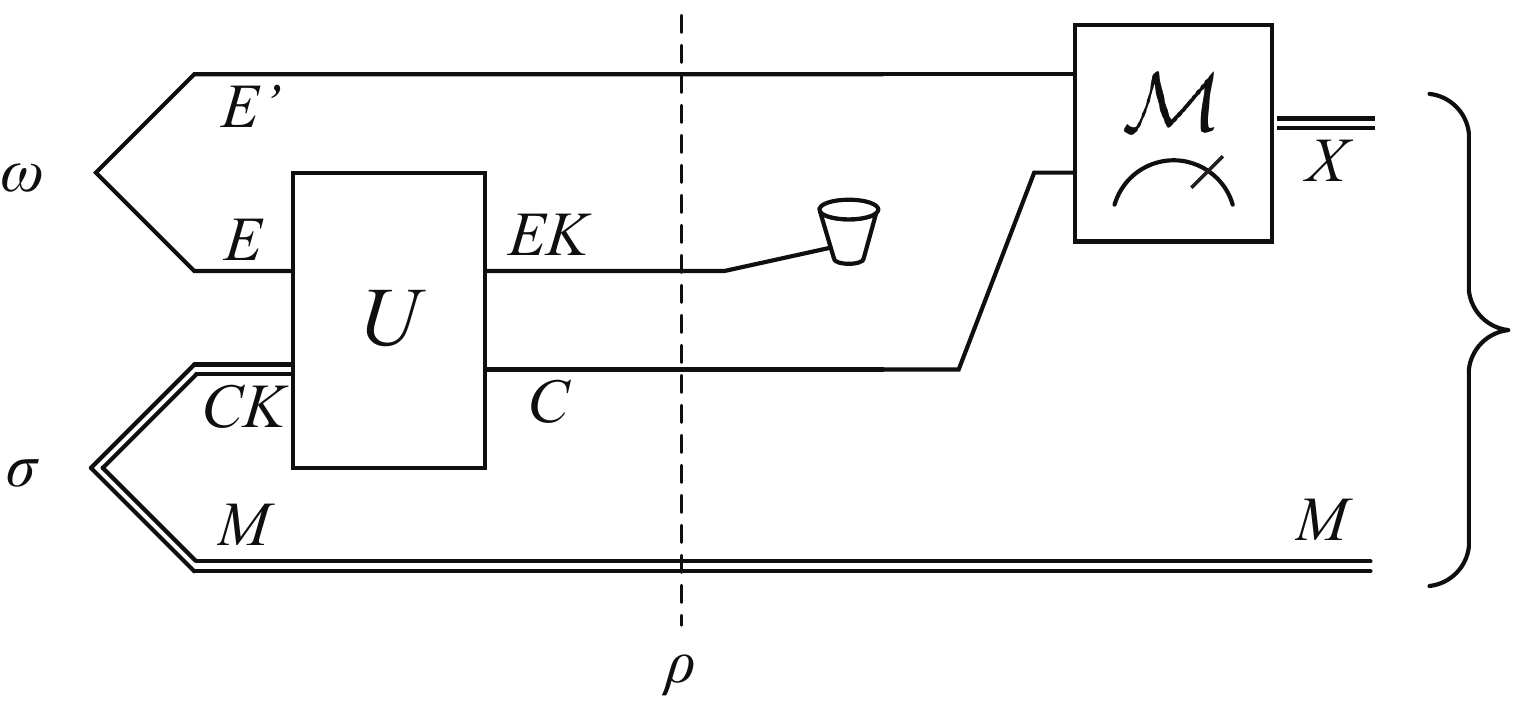}
	\end{center}
	\caption{A quantum circuit depicting the physical scenario with the locking-specific identifications $N \cong C \otimes K$ and $D \cong E \otimes K$ made.}
\label{fig:locking}
\end{figure}
%%%%%%%%%%%%%%%%%%%%%%%%%%%%%%%%%%%%%%%%%%%%%%%%
\begin{defin}[$\varepsilon$-locking scheme]\label{def:locking}
	Let $M, C, K, E$ and $E'$ be quantum systems. Let $\rho^{MCKEE'}$ be a quantum state of the form
	\begin{equation}\label{eqn:def-rho}
	\rho^{MCKEE'} = \sum_m p_m U^{CKE} \left(\ketbra{m}^M \otimes \ketbra{\psi_m}^{CK} \otimes \ketbra{\omega}^{EE'} \right) {U^{CKE}}^{\dagger},
\end{equation}
	where the $|\psi_m\rangle$ are orthogonal and $U^{CKE}$ is unitary. Then we call $\rho$ an $\varepsilon$-locking scheme if for any measurement superoperator $\mathcal{M}^{CE' \rightarrow X}$, we have that
	\[ \left\| \mathcal{M}\left( \rho^{MCE'} \right) - \mathcal{M}\left( \rho^M \otimes \rho^{CE'} \right) \right\|_1 \leqslant \varepsilon. \]
\end{defin}

Note that this definition of locking is rather different from that used in previous work in the area (\cite{dhlst03,hlsw03}). Their definition involved the \emph{accessible information} between the cyphertext and the message. We can show that our definition implies the older one:
\begin{lem} \label{lem:alicki-fannes}
	Let $\xi^{MB}$ be a cq-state such that $\left\| \mathcal{M}(\xi^{MB}) - \xi^M \otimes \mathcal{M}(\xi^B) \right\|_1 \leqslant \varepsilon$ for all measurement superoperators $\mathcal{M}^{B \rightarrow X}$. Then,
\[I_{\acc}(M;B)_{\xi} \leqslant 4\varepsilon \log M + 2\eta (1-\varepsilon) + 2\eta (\varepsilon),\]
where $\eta(x) := -x \log x$ and $\eta(0) = 0$.
\end{lem}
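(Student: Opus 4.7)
The plan is to reduce the statement to the Alicki–Fannes inequality for conditional entropy. Since $\xi^{MB}$ is classical on $M$, the measurement on $M$ that optimizes the accessible information may be fixed to be the projective measurement in the classical basis (any coarser measurement only decreases mutual information). Hence
\[
I_{\acc}(M;B)_\xi \;=\; \sup_{\mathcal{M}^{B\to X}} I(M;X)_{\tau}, \qquad \tau^{MX} := \mathcal{M}(\xi^{MB}).
\]
Fix an arbitrary measurement superoperator $\mathcal{M}^{B\to X}$ and also define the product comparison state $\tau'^{MX} := \xi^M \otimes \mathcal{M}(\xi^B) = \mathcal{M}(\xi^M \otimes \xi^B)$. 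The hypothesis of the lemma gives $\|\tau - \tau'\|_1 \leqslant \varepsilon$.

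The key observation is that essentially all of the difference $I(M;X)_\tau - I(M;X)_{\tau'}$ is captured by the conditional entropy $H(M|X)$. Indeed, $I(M;X)_{\tau'} = 0$ because $\tau'$ is a product, and the marginals on $M$ agree, $\tau^M = \tau'^M = \xi^M$, so $H(M)_\tau = H(M)_{\tau'}$. Expanding $I(M;X) = H(M) - H(M|X)$ in both states, I would conclude
\[
I(M;X)_\tau \;=\; I(M;X)_\tau - I(M;X)_{\tau'} \;=\; H(M|X)_{\tau'} - H(M|X)_\tau.
\]

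Now I would invoke the Alicki–Fannes inequality: for any two bipartite states $\rho, \sigma$ on systems $AB$ with $\|\rho - \sigma\|_1 \leqslant \varepsilon$ (with $\varepsilon$ small enough), one has $|H(A|B)_\rho - H(A|B)_\sigma| \leqslant 4\varepsilon \log |A| + 2 h(\varepsilon)$, where $h(\varepsilon) = \eta(\varepsilon) + \eta(1-\varepsilon)$. Applying this to $\tau$ and $\tau'$ on system $MX$, viewing $M$ as the conditioned-on-the-right-hand subsystem (so that the dimension factor is $|M|$, not $|X|$), yields
\[
\bigl| H(M|X)_{\tau'} - H(M|X)_\tau \bigr| \;\leqslant\; 4\varepsilon \log M + 2\eta(\varepsilon) + 2\eta(1-\varepsilon).
\]
Taking the supremum over $\mathcal{M}$ delivers the claimed bound on $I_{\acc}(M;B)_\xi$.

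There is no substantive obstacle beyond invoking the right continuity bound. The only subtlety worth flagging is the choice of which subsystem to place on the left in the Alicki–Fannes inequality: writing $I(M;X) = H(M) - H(M|X)$ (rather than $H(X) - H(X|M)$) ensures that the dimension factor in the continuity bound is $\log |M|$, matching the form of the stated inequality, and sidesteps any need to know or bound $|X|$.
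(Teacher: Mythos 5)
Your proof is correct and is essentially the paper's argument: the paper's proof is the single line ``direct application of the Alicki--Fannes inequality,'' and your write-up supplies exactly the intended details --- reducing to $I(M;X)_\tau = H(M|X)_{\tau'} - H(M|X)_\tau$ for the product comparison state $\tau'$ and applying the continuity bound with the conditioning arranged so the dimension factor is $\log|M|$. No gaps.
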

\begin{proof}
	This is a direct application of the Alicki-Fannes inequality \cite{alicki-fannes}.
\end{proof}

%To show that locking has indeed occured, our calculations will concern the function
%\begin{eqnarray}
%	f & : & \mcal{U}(CKE) \times \mcal{L}(s,\eta) \longrightarrow \mbR \nonumber \\
%	f(U, \mcal{M}) & = &\lno \mcal{M} (\rho^{MCE'}) - \mcal{M} (\rho^M \ox \rho^{CE'}) \rno_1 . \label{eqn:fdefn}
%\end{eqnarray}
%On occasion, when it does not cause confusion, we will denote by $g_{\mcal{M}}(U)$ the function $f(U, \mcal{M})$ with fixed $\mcal{M}$, and by $h_U(\mcal{M})$ the function $f(U, \mcal{M})$ with fixed $U$. We note that equation (\ref{eqn:fdefn}) is a measure of independence of the message $M$ to the results of the measurement $X$. 

Four quantities will be particularly useful for quantifying variations from uniform messages and maximal entanglement,
\begin{eqnarray}
	\Delta_{M, \infty} & := & 2^{\log M - H_{\mathrm{min}}(M)_{\sigma}}, \label{eqn:deltaMmindef} \\
	\Delta_{M, 2} & := & 2^{\log M - H_{2}(M)_{\sigma}}, \label{eqn:deltaM2def} \\
	\Delta_{E, \infty} & := & 2^{\log E - H_{\mathrm{min}}(E)_{\omega}}, \label{eqn:deltaEmindef} \\
	\Delta_{E, 2} & := & 2^{\log E - H_{2}(E)_{\omega}}. \label{eqn:deltaE2def}
\end{eqnarray}
For a point mass distribution $p_m$, $\Delta_{M, \infty} = \Delta_{M, 2} = |M|$ and for the uniform distribution $\Delta_{M, \infty} = \Delta_{M, 2} = 1$. To give an interpretation of the $\Delta_E$ quantities, we can note that for a bipartite $\ket{\omega}^{EE'}$ with no entanglement, $\Delta_{E, \infty} = \Delta_{E, 2} = |E|$. However, if $\ket{\omega}^{EE'}$ is the maximally entangled state, then $\Delta_{E, \infty} = \Delta_{E, 2} = 1$, which we call maximal entanglement. The case of a uniformly distributed message and maximal entanglement will give the simplest expressions for minimum key size. The $\Delta$ terms are used in the calculations to provide more general statements relating the entropy of the message and entanglement to the key size.

%%%%%%%%%%%%%%%%%%%%%%%%%%%%%%%%%%%%%%%%%%%%%%%%

\section{Main results and proof sketch}\label{sec:main-results}
The locking scheme we study is a scheme where the unitary in Definition \ref{def:locking} is chosen according to the Haar measure. Let $c,e$ and $n$ be the logarithms of $|C|,|E|$, and $|M|=|N|$ respectively.  In particular, the message is $n$ bits long. Define $K = M/C$ and $k = \log K$. Then $k = n - c$ is the difference in size between the message and cyphertext, that is, the size of the key.  Our main theorem is the following:
\begin{thm} \label{thm:locking-intro}
	If $U$ is chosen according to the Haar measure, then the scheme described in Definition \ref{def:locking} is an $\varepsilon$-locking scheme with probability at least $1-2^{-9(|C||E|)^2}$ if
	\begin{equation*}
	 k > \demi \Big(n - H_{\min}(M)_{\sigma}\Big) + \demi \Big(e - H_{\min}(E)_{\omega}\Big)\\
	+ \log(c+e) + 2 \log(1/\varepsilon) + 11
	\end{equation*}
	as long as $\varepsilon > 16 \Delta_{E,\infty}/\sqrt{|KE|}$.
\end{thm}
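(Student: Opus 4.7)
My plan is to bound the worst-case trace distance $\|\mathcal{M}(\rho^{MCE'}) - \mathcal{M}(\rho^M \otimes \rho^{CE'})\|_1$ over all measurement superoperators $\mathcal{M}$ on $CE'$ via reduction, Haar-averaging, concentration, and a net argument. The first step is to rewrite the target quantity by using that the $M$ register is classical: the trace distance splits as
$$ \sum_m p_m \bigl\| \mathcal{M}(\rho^{CE'}_m) - \mathcal{M}(\rho^{CE'}) \bigr\|_1 , $$
where $\rho^{CE'}_m$ is the cyphertext-reference state conditioned on message $m$ and $\rho^{CE'} = \sum_m p_m \rho^{CE'}_m$. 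Thus I only need to control how far, on average in $m$, a measurement of the cyphertext can tell apart $\rho^{CE'}_m$ from its mean.

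Next I would reduce arbitrary POVM measurements to the $(s,\eta)$-quasi-measurements of Definition \ref{def:quasi-measurement}. A standard discretization/pruning of POVM elements (splitting them into rank-one pieces of roughly equal weight and padding to exactly $s$ outcomes) shows that uniform control over $\mathcal{L}(s,\eta)$ with well-chosen $s,\eta$ suffices, at the cost of a constant factor in $\varepsilon$. This reduction is what produces the technical side-condition $\varepsilon > 16\,\Delta_{E,\infty}/\sqrt{|KE|}$: it guarantees that the rank-one discretization does not lose more than $\varepsilon$.

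The core of the proof is a pointwise estimate for a fixed $\mathcal{M} \in \mathcal{L}(s,\eta)$. I would upper bound $\| \mathcal{M}(\rho^{CE'}_m) - \mathcal{M}(\rho^{CE'}) \|_1$ by $\sqrt{s}\,\| \mathcal{M}(\rho^{CE'}_m) - \mathcal{M}(\rho^{CE'}) \|_2$ via Cauchy–Schwarz, and then compute the Haar expectation of the squared $2$-norm using the standard second-moment identity
$$ \mathbb{E}_U\bigl[ U^{\otimes 2} X U^{\dagger\otimes 2}\bigr] = \alpha_+ \Pi^{CKE}_+ + \alpha_- \Pi^{CKE}_- , $$
where $\alpha_\pm$ are determined by $\Tr[\Pi_\pm X]$. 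Evaluating this pairing against the quasi-measurement operators and the differences $\rho^{CK}_m - \rho^{CK}$ produces terms of the form $\Tr[(\rho^{CK}_m - \rho^{CK})^2]$ and $\Tr[\omega^2]$, which are exactly what the quantities $\Delta_{M,2}$ and $\Delta_{E,2}$ in \eqref{eqn:deltaM2def}, \eqref{eqn:deltaE2def} encapsulate. Summing over $m$, taking square roots, and using $H_{\min} \leq H_2$ to turn the Rényi-$2$ quantities into the min-entropies that appear in the statement yields an expected bound of order $2^{-k + \frac12(n - H_{\min}(M)) + \frac12(e - H_{\min}(E))}$ (times harmless polynomial factors), explaining the precise form of the key-size threshold.

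To promote this to a high-probability, uniform statement, I would first invoke Levy's lemma on $\mathcal{U}(|C||E|)$: the map $U \mapsto \|\mathcal{M}(\rho^{CE'}_m) - \mathcal{M}(\rho^{CE'})\|_1$ has Lipschitz constant $O(1)$ (controlled by the norms of the quasi-measurement operators and the embedded states), giving concentration with Gaussian exponent $\Omega(|C|^2|E|^2 \varepsilon^2)$. Then I would build an $\varepsilon$-net over $\mathcal{L}(s,\eta)$ of size at most $(C_0/\varepsilon)^{O(s\,|C||E|)}$, using Lipschitz continuity in the measurement operators to pass from net points to arbitrary quasi-measurements, and union bound. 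Choosing $s = \mathrm{poly}(|C||E|)$ makes the net cardinality negligible against the Levy factor $\exp(-\Omega((|C||E|)^2))$, yielding the claimed failure probability $2^{-9(|C||E|)^2}$. The main obstacle I anticipate is getting the constants tight enough that the extra $\log(c+e)$ term in the key-size threshold (which must absorb both the $\sqrt{s}$ Cauchy–Schwarz penalty and the $\log$ of the net cardinality) comes out as stated rather than as a larger polylogarithm; doing so will likely require choosing $s$ and $\eta$ carefully at the POVM-to-quasi-measurement reduction step and being sharp with the Lipschitz estimates.
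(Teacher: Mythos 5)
Your overall architecture (an expectation bound for a fixed quasi-measurement, Levy concentration in $U$, an $\varepsilon$-net over $\mathcal{L}(s,\eta)$ with a union bound, and a reduction from POVMs to quasi-measurements) is exactly the paper's, but two of your quantitative claims are wrong in ways that break the argument. First, the Lipschitz constant of $U \mapsto \|\mathcal{M}(\rho^{MCE'}) - \mathcal{M}(\rho^M\otimes\rho^{CE'})\|_1$ is not $O(1)$: Lemma \ref{lipfUunent} shows it is $4\eta\sqrt{\Delta_{M,\infty}\Delta_{E,\infty}/(ME)} = 4\eta\, 2^{-\frac{1}{2}H_{\min}(M)-\frac{1}{2}H_{\min}(E)}$, and this smallness is essential. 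With a Lipschitz constant of $O(1)$, Theorem \ref{thm:levy} on $\mathcal{U}(CKE)$ gives a Gaussian exponent of order $CKE\varepsilon^2 = ME\varepsilon^2$, not the $\Omega((CE)^2\varepsilon^2)$ you assert; in the locking regime $K$ is polylogarithmic while the net has cardinality $\exp\big(\Theta(sCE\log(s/\varepsilon))\big)$ with $s=\Theta(CE\log CE)$, so the union bound fails completely and no nontrivial key size can be extracted. Relatedly, your claim that the key-size threshold $\frac{1}{2}(n-H_{\min}(M))+\frac{1}{2}(e-H_{\min}(E))$ is explained by the expected distinguishability is incorrect: the paper's expectation bound (Lemma \ref{expfUunent}) is $2\Delta_{E,\infty}/\sqrt{KE}$ and contains no $\Delta_M$ factor at all (the $p_m$-dependence cancels via the weighted norm inequality of Lemma \ref{lem:tight12} with $\gamma=\ident^X\otimes\sigma^M$); the expectation is only responsible for the side condition $\varepsilon > 16\Delta_{E,\infty}/\sqrt{|KE|}$, which you misattribute to the POVM reduction. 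The $\frac{1}{2}(n-H_{\min}(M))$ term enters the threshold through the Lipschitz constant in the concentration exponent --- precisely the quantity you set to $O(1)$.

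Second, the reduction from general POVMs to $(s,\eta)$-quasi-measurements is not a ``standard discretization/pruning at the cost of a constant factor in $\varepsilon$.'' An arbitrary POVM has unboundedly many outcomes, and after splitting into rank-one pieces there is no deterministic way to select or pad to exactly $s$ of them while guaranteeing the global operator bound $\frac{|CE'|}{s}\sum_{i=1}^{s}\ketbra{\chi_i}\leqslant\eta\ident$. The paper instead samples $s$ rank-one elements i.i.d.\ with probabilities proportional to their weights, uses the operator Chernoff bound (Lemmas \ref{lem:locking-operator-chernoff} and \ref{lem:general-to-nk}) to show the random tuple is an $(s,\eta)$-quasi-measurement except with probability $2CE'e^{-s(\eta-1)^2/(CE'2\ln 2)}$, and absorbs the complementary event into an additive error term; this is where the choices $\eta=2$ and $s=(6\ln 2)CE\ln(CE)$ actually come from. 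Both of these steps need to be supplied, with the correct scalings, before your outline becomes a proof.
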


For the cryptographically relevant case in which there is no entanglement shared with the measuring device, we therefore get:
\begin{cor}
	If $U$ is chosen according to the Haar measure, then the scheme described in Definition \ref{def:locking} without shared entanglement is an $\varepsilon$-locking scheme with probability at least $1-2^{-9|C|^2}$ if 
	\begin{equation*}
		k > \demi \Big(n - H_{\min}(M)_{\sigma}\Big) + \log c\\
		+ 2\log(1/\varepsilon) + 11
	\end{equation*}
	as long as $\varepsilon > 16/\sqrt{|K|}$.
\end{cor}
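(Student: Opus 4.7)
The plan is to derive the corollary as a direct specialization of Theorem \ref{thm:locking-intro} to the case where the measurement device shares no entanglement with the ciphertext-key compound. Concretely, I would take the reference systems $E$ and $E'$ to be trivial one-dimensional Hilbert spaces, so that $\omega^{EE'}$ reduces to the unique scalar state on a trivial bipartite system. The state $\rho^{MCKEE'}$ in Definition \ref{def:locking} then reduces to $\rho^{MCK}$, the measurement system $CE'$ collapses to $C$, and the conclusion of Theorem \ref{thm:locking-intro} becomes exactly the trace-distance condition required for an $\varepsilon$-locking scheme with no entanglement assistance.

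Under this substitution, each of the quantities appearing in Theorem \ref{thm:locking-intro} simplifies in the obvious way. First, $e = \log|E| = 0$ and $H_{\min}(E)_{\omega} = 0$ (a state on a one-dimensional space has vanishing min-entropy), so the entanglement contribution $\demi(e - H_{\min}(E)_{\omega})$ disappears and the additive $\log(c+e)$ becomes $\log c$. Second, $\Delta_{E,\infty} = 2^{\log|E| - H_{\min}(E)_{\omega}} = 1$, so the smallness hypothesis $\varepsilon > 16\Delta_{E,\infty}/\sqrt{|KE|}$ reduces to $\varepsilon > 16/\sqrt{|K|}$. Third, the failure probability $2^{-9(|C||E|)^2}$ becomes $2^{-9|C|^2}$. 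Collecting these substitutions reproduces the statement of the corollary line-by-line.

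The one substantive step is to confirm that the proof of Theorem \ref{thm:locking-intro} remains valid in the degenerate case $|E|=1$, i.e.\ that no intermediate lemma in Sections \ref{sec:preliminary-calculations}--\ref{sec:povms} tacitly assumes $|E| \geqslant 2$. I would quickly scan the measure-concentration arguments and the key bounds on $\mathbb{E}_U \|\mathcal{M}(\rho^{MCE'}) - \mathcal{M}(\rho^M \otimes \rho^{CE'})\|_1$: all of these involve ratios and exponents that depend continuously on $|E|$ and degrade monotonically as $|E|$ grows, so the $|E|=1$ case is in fact the most favorable one, with strictly tighter concentration on $\mathcal{U}(CK)$ than on $\mathcal{U}(CKE)$. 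Consequently, there is no obstacle: the corollary follows from Theorem \ref{thm:locking-intro} by direct specialization, and no separate argument is required.
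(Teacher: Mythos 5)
Your proposal is correct and matches the paper's own treatment: the corollary is stated there as an immediate specialization of Theorem \ref{thm:locking-intro} to trivial $E$ and $E'$, with exactly the substitutions you list ($e=0$, $H_{\min}(E)_\omega=0$, $\Delta_{E,\infty}=1$), and no separate argument is given. Your extra check that the intermediate lemmas survive the degenerate case $|E|=1$ is a sensible precaution the paper omits, but it does not change the proof.
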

Hence, the size of the key must be at least as large as half the ``min-entropy deficit'' ($n - H_{\min}(M)_{\sigma}$) of the message plus a term of the order of the logarithm of the size of the message. In particular, for a uniform message, the min-entropy deficit is zero, and a key of size at least $\log c + 2\log(1/\varepsilon) + 11$ is sufficient for locking.

Conversely, we can show that in certain regimes, if the information is not locked, then it is completely decodable, with almost no middle ground. More precisely, we have the following:

\begin{thm} \label{thm:decode-intro}
	If $U$ is chosen according to the Haar measure, then the information in the scheme described in Figure~\ref{fig:scenario} without shared entanglement is asymptotically almost surely decodable to within $\varepsilon$ in trace distance for a receiver with only $C$ as long as
\begin{equation*}
	k  \leqslant \demi \Big( n - H_{\max}(M)_{\sigma} \Big) - \demi \Big( e - H_2(E)_{\omega} \Big) - 2 \log(1/\varepsilon) - 4.
\end{equation*}
\end{thm}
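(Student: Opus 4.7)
\emph{Proof plan.} Theorem~\ref{thm:decode-intro} is the converse to the locking theorem: once the cyphertext register $C$ is large enough, a decoder acting on $C$ alone recovers the classical message with trace-distance error $\varepsilon$. My plan is to extend the FQSW-style argument that produced~(\ref{eqn:purified-decoding}) to non-uniform message distributions and non-maximally-entangled ancillas, and to boost the resulting Haar-expectation bound to an almost-sure statement via Levy's lemma.

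First I would purify the classical-quantum message state to $|\sigma\rangle^{RMN}=\sum_m \sqrt{p_m}\,|m\rangle^R|m\rangle^M|\psi_m\rangle^N$, and observe that the post-encoding state $|\rho\rangle^{RMCDE'}=(\ident^{RME'}\ox U^{NE\to CD})(|\sigma\rangle^{RMN}\ox|\omega\rangle^{EE'})$ is pure. Exactly as in the derivation of~(\ref{eqn:measurement-existence}), a CPTP map $\mathcal{D}^{C\to N}$ satisfying $\|\mathcal{D}(\rho^{RMC})-\sigma^{RMN}\|_1\le\varepsilon$ automatically yields a classical decoder for $M$ of average error at most $\varepsilon$: trace out $R$ and measure $N$ in the basis $\{|\psi_{m'}\rangle\}$, then invoke monotonicity of the trace distance under these CPTP operations. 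The problem therefore reduces to constructing a good quantum decoder for the purification.

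By Uhlmann's theorem, such a $\mathcal{D}$ exists whenever
\[
\bigl\|\rho^{RMDE'}_{U}\,-\,\sigma^{RM}\ox\pi^{D}\ox\omega^{E'}\bigr\|_{1}\le\varepsilon^{2}/4,
\]
where the reference state on the right is precisely the Haar expectation of $\rho^{RMDE'}_{U}$. I would establish this decoupling bound by running the standard Haar second-moment calculation (as in Theorem IV.1 of~\cite{FQSW}) on the product state $\sigma^{RMN}\ox|\omega\rangle\langle\omega|^{EE'}$ with reference register $RM$, followed by a Cauchy--Schwarz step from $2$-norm to $1$-norm. The $H_{2}(E)_\omega$ contribution enters directly via $\tr[(\omega^{E})^{2}]=2^{-H_{2}(E)_\omega}$; the dimensional factors $|D|$, $|NE|$ produce the $k$, $e$ and $n$ terms. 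To replace the $H_{2}(M)_\sigma$ quantity that $\tr[(\sigma^{RM})^{2}]$ naturally supplies by the sharper $H_{\max}(M)_\sigma$ of the theorem, I would smooth $\sigma^{MN}$ to a nearby state $\tilde\sigma^{MN}$ whose $M$-marginal is (essentially) flat on its $\varepsilon$-typical support of size $\approx 2^{H_{\max}(M)_\sigma}$, apply the decoupling bound to $\tilde\sigma$, and absorb the $O(\varepsilon)$ smoothing cost by the triangle inequality.

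Finally, the function $U\mapsto\|\rho^{RMDE'}_{U}-\sigma^{RM}\ox\pi^{D}\ox\omega^{E'}\|_{1}$ is $O(1)$-Lipschitz on $\mathcal{U}(NE)$ in the operator norm (the input state and the partial trace are fixed), so Levy's lemma yields Gaussian concentration around its mean at scale $\exp(-\Omega(|NE|\varepsilon^{4}))$, which is asymptotically negligible and supplies the ``almost surely'' clause. Solving the resulting inequality for $k$, and absorbing the Uhlmann factor of $2$, the smoothing slack, and the Levy deviation into a single additive constant, yields exactly the stated bound $k\le\demi(n-H_{\max}(M)_\sigma)-\demi(e-H_{2}(E)_\omega)-2\log(1/\varepsilon)-4$. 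The main obstacle is the decoupling estimate of the third paragraph: the proof needs to produce $H_{\max}$ rather than $H_{2}$ on the message side with coefficient $\demi$, which is precisely what the smoothing step accomplishes. Once the decoupling bound is in hand, the Uhlmann conversion, the classical-from-quantum reduction, and the Levy concentration are all routine.
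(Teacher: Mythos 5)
Your overall architecture is the same as the paper's proof of Theorem~\ref{thm:decode}: purify the message, prove a Haar-averaged decoupling bound for the complementary system, upgrade it to an almost-sure statement, convert decoupling into a decoder via Uhlmann, and then do the arithmetic with $\log D = k+e$ and $c = n-k$. The Uhlmann step, the classical-from-quantum reduction, and the concentration step are all fine as sketched (the paper itself only asserts the concentration without proof).

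The weak point is your third paragraph, and it is a genuine one. First, with the paper's definition $H_{\max}(A)_\rho = 2\log\tr\sqrt{\rho^A}$ (the R\'enyi-$1/2$ entropy), we have $H_{\max}(M)_\sigma \geqslant H_2(M)_\sigma$, and since the message entropy enters the key bound with a positive coefficient, a bound featuring $H_{\max}$ is \emph{weaker} than one featuring $H_2$ --- so if the second-moment calculation really produced $2^{\frac{1}{2}H_2(M)_\sigma}$ you would be done by monotonicity of the R\'enyi entropies, and no smoothing would be needed. In fact the naive calculation does not produce $\tr[(\sigma^{RM})^2]$: the quantity that controls decoupling is the conditional R\'enyi-$2$ entropy of the \emph{input} state, and because $\sigma^{RMN}$ is pure this is dual to the R\'enyi-$1/2$ entropy of $\sigma^{N}$, i.e.\ the factor $2^{\frac{1}{2}H_{\max}(M)_\sigma}$ appears directly (this is what Theorem 3.7 of~\cite{fred-these}, the weighted Cauchy--Schwarz of Lemma~\ref{lem:tight12} with $\gamma=\sigma^{RM}$, delivers). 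Second, the smoothing step as described does not work: a distribution cannot in general be moved within trace distance $\varepsilon$ to one that is flat on a support of size $2^{H_{\max}(M)_\sigma}$ (consider $p=(\tfrac12,\tfrac1{2(M-1)},\dots)$); only \emph{truncation} to a high-probability set is an $\varepsilon$-perturbation, and that controls the support size $2^{H_0^\varepsilon}\lesssim 2^{H_{\max}(M)_\sigma}/\varepsilon^{2}$ rather than flattening the distribution. A patched version of your route --- truncate, then use $\|X\|_1\leqslant\sqrt{\rank X}\,\|X\|_2$ with the rank counted on the support of $\sigma^{RM}$ --- does yield the theorem, but with an extra $\log(1/\varepsilon)$ loss in the key bound, so it does not reproduce the stated constant $-2\log(1/\varepsilon)-4$. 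The clean fix is to replace the smoothing paragraph by a direct appeal to a decoupling theorem stated in terms of conditional R\'enyi-$2$ entropy, as the paper does.
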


Note that decoding the message will often require that the cyphertext be longer than the message, in which case $k$ will be negative. Comparing Theorems \ref{thm:locking-intro} and \ref{thm:decode-intro} reveals that the difference between being $\varepsilon$-locked and being able to decode quantum information to within $\varepsilon$ is determined by at most
\begin{equation*}
\frac{1}{2} \left[ H_{\max}(M)_\sigma - H_{\min}(M)_\sigma \right]
+ \left[ e - H_{\min}(E)_\omega \right] + \log(c+e) + 4 \log(1/\varepsilon) + 15
\end{equation*}
qubits,
where the inequality $H_2 \geq H_{\min}$ has been used to simplify the expression. 

In other words, if we consider the case of maximal entanglement, then the gap between locking and decodability can only be as wide as the difference between the min- and max-entropy of the message modulo logarithmic terms. One should note that this gap is real, and not only an artifact of our proof technique. To see this, consider an $n$-bit message distributed such that with probability $\demi$, the first bit is uniform and the rest of the string is always zero, and with probability $\demi$ the whole string is uniform. The max-entropy of such a message is $n$, but the min-entropy is tiny. Now, to be able to decode, one must be able to decode the entire string in the ``worst-case'' scenario where the whole string is uniform, so the max-entropy is relevant in this case. But in the locking case, we must be able to lock in the worst-case scenario of only one bit being random, so the min-entropy is the relevant quantity here.

The effect of non-maximal entanglement is not entirely clear, however. There is a fairly large gap between our locking and decodability results here, but the locking side is almost certainly not tight in general. For instance, we can easily set up the system in such a way that there is a part of $E'$ that is clearly useless, but our proof technique forces us to take this part into account, which artificially hurts our bound. This extreme case can be ruled out by restricting $E'$ to the support of $\omega^{E'}$, but it seems likely that more gains could be found in the general case.

Finally, in addition to studying locking for its own sake, we use our results to exhibit a quantum key distribution protocol that appears to be secure if the eavesdropper's information about the secret key is measured using the accessible information, but in which leakage of a logarithmic amount of key causes the entire key to be compromised. This is done in Section \ref{sec:qkd}.

\subsection{Proof sketch}
We will give here a very high-level overview of the proof. The basic idea is to start from the fact that, given a fixed measurement superoperator, the probability over the choice of unitaries that this measurement yields non-negligible correlations is extremely small. Then, we would like to discretize the space of all measurement superoperators and use the union bound to show that the probability that \emph{any} measurement superoperator yields non-negligible correlations is still very small. For this to work, the ``number'' of measurements has to be much smaller than the reciprocal of the probability of getting a bad $U$. However, the set of measurement superoperators cannot be discretized directly, since (among other things) they contain a potentially unbounded number of outputs. Hence, we will instead use the above argument on $(s,\eta)$-quasi-measurements, which \emph{can} be discretized easily (see Lemma \ref{sizeofnet}), and then show that the best measurement cannot beat the best $(s,\eta)$-quasi-measurement by too much. Along the way, we also prove the special case where the measurement device is constrained to making projective measurements, which can be viewed simply as $(CE',1)$-quasi-measurements.

The basic ingredient of the proof is the following concentration of measure theorem on Haar-distributed unitaries: 
\begin{thm}[Corollary 4.4.28 in \cite{zeitouni-random-matrices}]\label{thm:levy}
	Let $f : \mathcal{U}(d) \rightarrow \mbR$ be a function with Lipschitz constant $\theta$ (see Definition \ref{def:lipschitz}; the Lipschitz constant is taken with respect to the Hilbert-Schmidt distance on unitaries). Then,
	\[ \Pr_U \left\{ |f(U) - \mbE_{U} f | > \varepsilon  \right\} \leqslant \exp\left( -\frac{d \varepsilon^2}{4\theta^2} \right). \]
\end{thm}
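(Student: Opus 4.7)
The plan is to deduce this concentration inequality from a lower bound on the Ricci curvature of the unitary group together with the standard Bakry--Émery / Herbst machinery that converts curvature bounds into Gaussian tail bounds. First I would equip $\mcal{U}(d)$ with its natural bi-invariant Riemannian metric induced by the Hilbert--Schmidt inner product $\langle A,B\rangle = \tr(A\mdag B)$ on the Lie algebra $\mfu(d)$. This makes $\mcal{U}(d)$ a compact Riemannian manifold whose geodesic distance is controlled (up to a multiplicative constant that one must track) by the Hilbert--Schmidt distance appearing in the Lipschitz hypothesis of the theorem, and whose normalized volume measure is exactly the Haar measure.

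Next I would compute (or invoke the standard computation of) the Ricci tensor on $\mcal{U}(d)$. For a compact Lie group endowed with a bi-invariant metric, the Ricci curvature is constant and computable from the structure constants: writing $\mrm{Ric}(X,X) = \tfrac{1}{4}\sum_i \|[X,e_i]\|^2$ in an orthonormal basis $\{e_i\}$ of $\mfu(d)$ yields the bound $\mrm{Ric} \geqslant \kappa\, g$ with $\kappa$ of order $d$. This linear-in-$d$ lower bound is precisely what produces the factor of $d$ in the exponent of the concentration inequality and is the only place where the dimension enters essentially.

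From $\mrm{Ric} \geqslant \kappa g$, the Bakry--Émery criterion delivers a logarithmic Sobolev inequality on $\mcal{U}(d)$ with constant $2/\kappa$. Applying this LSI to the tilted density $e^{\lambda f}$ for a $\theta$-Lipschitz $f$ gives a differential inequality for the log-moment-generating function of $f - \mbE_U f$, whose solution (Herbst's argument) is the subgaussian bound $\mbE_U e^{\lambda(f - \mbE_U f)} \leqslant \exp(\lambda^2 \theta^2 / \kappa)$. Optimizing Markov's inequality over $\lambda > 0$ then gives the stated two-sided tail $\Pr\{|f - \mbE_U f| > \varepsilon\} \leqslant \exp(-\kappa \varepsilon^2 / (4\theta^2))$, with $\kappa$ replaced by $d$ after fixing normalizations.

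The main obstacle is book-keeping the normalizations: the stated constant $d/(4\theta^2)$ in the exponent depends on a consistent choice among (i) the Hilbert--Schmidt normalization of $\mfu(d)$, (ii) the resulting geodesic distance vs.\ the chord (Hilbert--Schmidt) distance used to define $\Lip(f)$, and (iii) the precise numerical constant in the Bakry--Émery LSI. Tracking these carefully is the only real work; the conceptual content is curvature $\Rightarrow$ LSI $\Rightarrow$ concentration, which is why the result appears merely as a corollary of a more general concentration theorem in \cite{zeitouni-random-matrices}.
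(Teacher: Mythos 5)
The paper offers no proof of this statement: it is imported verbatim as Corollary 4.4.28 of the Anderson--Guionnet--Zeitouni book, so the only meaningful comparison is with the argument in that source. Your route --- bi-invariant metric, Ricci lower bound via $\mathrm{Ric}(X,X)=\tfrac14\sum_i\|[X,e_i]\|^2$, Bakry--\'Emery to get a log-Sobolev inequality, Herbst to get a subgaussian moment generating function, Markov plus optimization in $\lambda$ --- is exactly the chain of reasoning used there, and the constants do come out right on the semisimple part: for $\mathfrak{su}(d)$ with the Hilbert--Schmidt metric one gets $\mathrm{Ric}\geqslant (d/2)g$, and the Herbst bound then yields precisely $\exp(-d\varepsilon^2/(4\theta^2))$ for the one-sided tail. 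Your handling of the chordal-versus-geodesic distance is also fine, since the geodesic distance dominates the Hilbert--Schmidt distance, so Hilbert--Schmidt-Lipschitz implies geodesic-Lipschitz with the same constant.

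The one genuine gap is that the claim ``$\mathrm{Ric}\geqslant\kappa g$ with $\kappa$ of order $d$'' is false for $\mathcal{U}(d)$: the Lie algebra $\mathfrak{u}(d)=\mathfrak{su}(d)\oplus i\mbR\,\ident$ has a one-dimensional center, $[X,i\ident]=0$ for all $X$, and hence $\mathrm{Ric}(i\ident,i\ident)=0$. The Bakry--\'Emery criterion therefore does not apply directly to $\mathcal{U}(d)$, only to $\mathcal{SU}(d)$ (and to the cosets $\{U:\det U=\omega\}$). The standard repair, which the cited source also needs, is to decompose Haar measure on $\mathcal{U}(d)$ as $U=e^{i\phi}V$ with $V$ in a fixed-determinant coset, apply the curvature argument conditionally on $\det U$, and then control the contribution of the central direction separately; the saving grace is that the quotient circle $\mathcal{U}(d)/\mathcal{SU}(d)$ has Hilbert--Schmidt diameter $O(1/\sqrt{d})$, so a $\theta$-Lipschitz function varies by only $O(\theta/\sqrt{d})$ across it, which is within the concentration scale $\varepsilon\sim\theta/\sqrt{d}$ at which the bound becomes nontrivial (at the cost of slightly worse constants). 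Your sketch omits this step entirely, and since the paper applies the theorem to the full unitary group $\mathcal{U}(CKE)$, it is not a removable hypothesis. With that step added, and with the factor-of-two bookkeeping for the two-sided tail, the proposal is the correct and standard proof.
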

We apply the theorem to the function 
\[ g_{\mathcal{M}}(U) = \left\| \mathcal{M}\left(\rho^{MCE'}\right) - \mathcal{M}\left(\rho^M \otimes \rho^{CE'}\right) \right\|_1 \]
for any fixed $(s,\eta)$-quasi-measurement $\mathcal{M}$, where $\rho$ depends on $U$ as in Equation (\ref{eqn:def-rho}). To do this, we need to bound two quantities from above: the expected value $\mbE g_{\mathcal{M}}(U)$ and the Lipschitz constant $\theta$. The bounds appear in Lemmas \ref{expfUunent} and \ref{lipfUunent} respectively, and the resulting concentration statement looks like (see Equation (\ref{eqn:bound-fixed-measurement})):
\[ \Pr_U \left\{ g_{\mathcal{M}}(U) > \varepsilon  \right\} \leqslant \exp\left( - \frac{2^{H_{\min}(M) + H_{\min}(E)}CKE}{2^6 \eta^2} \left( \varepsilon - \frac{2\Delta_{E,\infty}}{\sqrt{KE}} \right)^2 \right). \]
Now we are in a position to use our $\varepsilon$-net over $(s,\eta)$-quasi-measurements (Lemma \ref{sizeofnet}) and the union bound to get a bound on the probability that there exists an $(s,\eta)$-quasi-measurement $\mathcal{M}$ for which $g_{\mathcal{M}}(U) > \varepsilon$; this is done in Theorem \ref{thm:concentration}.

At this point, the proof splits into an ``easy'' and a ``hard'' branch. The easy branch (Section \ref{sec:projective-measurements}) applies Theorem \ref{thm:concentration} to projective measurements. The result is immediate, since a projective measurement is simply a $(CE',1)$-quasi-measurement. The hard branch (Section \ref{sec:povms}) goes for the full prize: showing that $g_{\mathcal{M}}(U)$ is small for every POVM with high probability. For this, we essentially show that a POVM corresponds (for the purposes of this proof) to a distribution over sequences of $s$ states. The operator Chernoff bound can then be used to show that this distribution is almost entirely supported on sequences that are $(s,\eta)$-quasi-measurements, for $s = O(CE \log(CE))$ and $\eta = O(1)$. We then apply Theorem \ref{thm:concentration} on these sequences, conditioned on the sequence being an $(s,\eta)$-quasi-measurement. A trivial bound is sufficient to cover the other case.

All that is then left to do to get the statements in the theorems stated above is to calculate conditions on the various parameters to make the exponent a reasonably large negative number.

\section{Concentration of the distinguishability from independence}\label{sec:preliminary-calculations}
To be able to use the general concentration of measure theorem (Theorem \ref{thm:levy}) on $g_{\mathcal{M}}(U)$, we must first be able to upper-bound the expectation of $g_{\mathcal{M}}(U)$ with respect to $U$. The following lemma does this:
\begin{lem}[Distinguishability for a fixed measurement]
	\label{expfUunent}
	If $\mcal{M}^{CE' \rightarrow X}$is an $(s,\eta)$-quasi-measurement, then
	\begin{equation*}
		\expU \lno \mcal{M} \lp \rho^{MCE'} \rp - \mcal{M} \lp \rho^{M} \ox \rho^{CE'} \rp  \rno_1 \leqslant \frac{2\Delta_{E, \infty}}{\sqrt{KE}}.
	\end{equation*}
\end{lem}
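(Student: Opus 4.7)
The idea is to reduce the expected trace distance to a Frobenius-norm quantity that can be computed explicitly by a Haar second-moment calculation. The argument proceeds in two main steps, together with the observation of key cancellations that produce the small $\Delta_{E,\infty}/\sqrt{KE}$ factor.

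\emph{Step 1: Reduction to Frobenius norm.} Set $\Lambda := \rho^{MCE'}-\sigma^M\otimes\rho^{CE'}$; since $U$ does not act on $M$, we have $\rho^M=\sigma^M$. Because $\mcal{M}$ is an $(s,\eta)$-quasi-measurement, the output $(\mcal{M}\otimes\mrm{id}^M)(\Lambda)$ is block-diagonal in the classical register $X$: explicitly, $(\mcal{M}\otimes\mrm{id})(\Lambda) = \frac{|CE'|}{s}\sum_i|i\rangle\langle i|^X\otimes T_i^M$ with $T_i := \langle\chi_i|^{CE'}\Lambda|\chi_i\rangle^{CE'}\in\Herm(M)$. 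Hence $\|(\mcal{M}\otimes\mrm{id})(\Lambda)\|_1 = \frac{|CE'|}{s}\sum_i\|T_i\|_1$. Applying $\|T_i\|_1\leqslant\sqrt{|M|}\|T_i\|_2$, Cauchy--Schwarz in $i$, and Jensen's inequality (for the concave square root) yields
\begin{equation*}
\mbE_U\lno(\mcal{M}\otimes\mrm{id})(\Lambda)\rno_1 \;\leqslant\; \frac{|CE'|\sqrt{|M|}}{\sqrt{s}}\sqrt{\mbE_U\textstyle\sum_i\|T_i\|_2^2}.
\end{equation*}
The swap identity $\|T\|_2^2=\tr[(T\otimes T)F^{M\bar M}]$ then rewrites the quantity inside the square root as $\tr[(F^{M\bar M}\otimes Q)\,\mbE_U[\Lambda\otimes\Lambda]]$, where $Q:=\sum_i|\chi_i\chi_i\rangle\langle\chi_i\chi_i|$ satisfies the operator bound $Q\leqslant (\eta s/|CE'|)^2\,I^{\otimes 2}$ via the quasi-measurement constraint.

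\emph{Step 2: Haar integration and cancellation.} Expand $\Lambda\otimes\Lambda$ as a signed sum of four tensor products $\rho^{(\cdot)}\otimes\rho^{(\cdot)}$; after pulling out the deterministic $\sigma^M$ factors, each summand reduces to a partial trace of $\mbE_U[\rho^{MCE'}\otimes\rho^{\overline{MCE'}}]$. I would compute this two-copy expectation via the standard Schur--Weyl formula: the twirl of $U^{\otimes 2}$ over $\mcal{U}(NE)$ projects onto $\mrm{span}\{I,F\}$ on $(NE)^{\otimes 2}$, with $F^{(NE)^{\otimes 2}}=F^N\otimes F^E$. The expectation is that the identity-type contributions of the four summands --- which correspond to the product-of-marginals piece encoded in $\sigma^M\otimes\rho^{CE'}$ --- cancel exactly, leaving only swap-type contributions. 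These, after tracing over $(KE)^{\otimes 2}$ against $F^N\otimes F^E$ and combining with the $F^{M\bar M}\otimes Q$ contraction, should produce factors involving $\tr[(\sigma^M)^2]$ and $\tr[(\omega^E)^2]$; bounding the latter by $\|\omega^E\|_\infty=\Delta_{E,\infty}/|E|$ (equivalently $H_2(E)_\omega\geqslant H_{\min}(E)_\omega$) converts everything to $\Delta_{E,\infty}$. Combined with the Schur--Weyl denominators ($\sim 1/|NE|^2$), the $Q$ bound, and the $|CE'|\sqrt{|M|}/\sqrt{s}$ prefactor from Step 1, the $s$, $\eta$, $|M|$, $|C|$ dependence cancels cleanly, yielding the stated bound $2\Delta_{E,\infty}/\sqrt{KE}$.

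\emph{Main obstacle.} The technical crux is the cancellation of the identity-type Haar contributions in Step 2: each of the four terms of $\mbE[\Lambda\otimes\Lambda]$ individually produces an $O(1)$ quantity after Haar averaging, and the smallness of the final bound arises only through precise matching of these large pieces. Careful tracking of the $(|NE|^2-1)^{-1}$ Schur--Weyl denominators and the non-maximal structure of $\omega^{EE'}$ is required to ensure the surviving terms combine into the clean $\Delta_{E,\infty}/\sqrt{KE}$ form; the identity $\rho^M=\sigma^M$ is crucial here. A secondary subtlety is that the bound $Q\leqslant(\eta s/|CE'|)^2 I^{\otimes 2}$ is combined with the non-positive operator $F^{M\bar M}$, so one cannot apply operator monotonicity directly to $\Lambda\otimes\Lambda$; the Haar average must be taken first, at which point the cancellations produce an effectively positive contracted operator on which the bound on $Q$ becomes useful.
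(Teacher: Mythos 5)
Your overall architecture matches the paper's: reduce the trace norm to a two-copy quantity via Cauchy--Schwarz and the swap trick, evaluate the Haar average with the Schur--Weyl twirl, observe that the identity-component contribution vanishes because each $\psi_m - \sum_{m''} p_{m''}\psi_{m''}$ is traceless, and bound the surviving swap component by $\lno \omega^E \rno_\infty^2 = 2^{-2H_{\min}(E)_\omega}$. The cancellation you flag as the main obstacle is exactly the one the paper exploits, and your remark that $\rho^M = \sigma^M$ is essential is correct.

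However, there is a genuine gap in Step 1. You pass from $\lno T_i \rno_1$ to $\sqrt{|M|}\,\lno T_i \rno_2$, which plants a factor of $|M|$ inside the square root. After the Haar average, the $M$-system contribution of the surviving swap term is (in your unweighted normalization) $\Tr[\sigma_\circ F^M] = \sum_m p_m^2\bigl(1 - 2p_m + \sum_{m''}p_{m''}^2\bigr) \leqslant 2\sum_m p_m^2 = 2\cdot 2^{-H_2(M)_\sigma}$, so the combination $|M|\cdot\Tr[\sigma_\circ F^M]$ is of order $\Delta_{M,2}$, not of order $1$. This equals $1$ for a uniform message but can be as large as $|M|/2$ (e.g.\ for a message supported on two equiprobable strings), leaving a spurious $\sqrt{\Delta_{M,2}}$ in your final bound. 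Since the lemma's bound $2\Delta_{E,\infty}/\sqrt{KE}$ is completely independent of the message distribution, your reduction cannot deliver it for general $p_m$. The paper avoids this by using the \emph{weighted} norm inequality (Lemma \ref{lem:tight12}) with $\gamma = \ident^X \otimes \sigma^M$, i.e.\ conjugating the difference operator by $(\sigma^M)^{-1/4}$ before taking the $2$-norm: the dimension factor $|M|$ is replaced by $\Tr[\gamma] = s$, and the reweighted $M$-factor $\Tr[\tilde\sigma_\circ F^M]$ is bounded by the absolute constant $2$ for every distribution. A secondary concern is your bound $Q \leqslant (\eta s/|CE'|)^2\, \ident^{\ox 2}$: the final estimate for the expectation contains no $\eta$ at all, and the paper obtains it by keeping the sum over $i$ and using only that each $\chi_i$ is a rank-one projector, which yields $\Tr[(\chi_i\otimes\ident^{KE})^{\ox 2}F^{CKE}\ox\Omega] \leqslant KE\lno\Omega\rno_\infty$ versus $(KE)^2\lno\Omega\rno_\infty$ for the identity term; the near-cancellation between the $\Pi_+$ and $\Pi_-$ contributions that produces the final $1/(C^2KE)$ would be destroyed by replacing $Q$ with a multiple of the identity, so this part of your plan needs to be replaced by the term-by-term rank-one estimates.
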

\begin{proof}
We begin by expanding and simplifying the original expression
\begin{eqnarray}
	\lefteqn{\expU \lno \mcal{M} \lp \rho^{MCE'} \rp - \mcal{M} \lp \rho^{M} \ox \rho^{CE'} \rp \rno_1 = \expU \lno \mcal{M} \lp \rho^{MCE'} - \rho^{M} \ox \rho^{CE'} \rp \rno_1} \\
	& \leqslant & \expU \sqrt{s \Tr \ls \lp (\sigma^M)^{-1/4} \: \mcal{M} \lp \rho^{MCE'} - \rho^{M} \ox \rho^{CE'} \rp \: (\sigma^M)^{-1/4} \rp^2 \rs } \nonumber \\
	& \leqslant & \sqrt{s \expU \Tr \ls \lp (\sigma^M)^{-1/4} \: \mcal{M} \lp \rho^{MCE'} - \rho^{M} \ox \rho^{CE'} \rp \: (\sigma^M)^{-1/4} \rp^2 \rs }. \label{eqn:preswapdef}
\end{eqnarray}
In the manipulations above we have used the linearity of the superoperator $\mcal{M}$ in the first line. In the second line we have used Lemma \ref{lem:tight12} with $\gamma = \ident^X \otimes \sigma^M$, noting that $|X|=s$. The third line follows from the concavity of the square root. We will now use a helpful identity for the trace of an operator squared: $\Tr Z^2 = \Tr ( Z \otimes Z)F$, where $F$ is defined as follows.
\begin{defin} \label{def:swap}
The \emph{swap operator} on $A^{\otimes 2}$, which is written as $A \otimes \overline{A}$, is the unique linear operator $F^A$ satisfying 
\begin{equation*}
	F^A \lp \ket{\psi}^A \ket{\phi}^{\overline{A}} \rp =  \ket{\phi}^A \ket{\psi}^{\overline{A}} \hspace{0.5in} \forall \ket{\psi}, \ket{\phi}.
\end{equation*}
\end{defin}

Expressing Equation (\ref{eqn:preswapdef}) using the swap operator gives
\begin{eqnarray}
	\lefteqn{\expU \lno \mcal{M} \lp \rho^{MCE'} \rp - \mcal{M} \lp \rho^{M} \ox \rho^{CE'} \rp \rno_1} \\
	& \leqslant & \sqrt{s \Tr \ls \expU \lp (\sigma^M)^{-1/4} \: \mcal{M} \lp \rho^{MCE'} - \rho^{M} \ox \rho^{CE'} \rp \: (\sigma^M)^{-1/4} \rp^{\ox 2} F^{XM} \rs } \nonumber \\
%	& = & \sqrt{s \lp \frac{CE}{s}\rp^2 \sum_{i=1}^s \Tr \ls \lp \Tr_{(CE')^{\ox 2}} \ls  \lp \chi_i^{CE'} \rp^{\ox 2} \expU \Upsilon^{\ox 2} \rs \rp F^{M} \rs } \\
	&=& \sqrt{ \frac{(CE')^2}{s} \sum_{i=1}^s \Tr \ls \lp F^M \otimes \lp \chi_i^{CE'} \rp^{\ox 2} \rp  \expU \ls  (\sigma^M)^{-1/4}  ( \rho^{MCE'} - \rho^{M} \ox \rho^{CE'} ) (\sigma^M)^{-1/4} \rs^{\otimes 2} \rs } 
	 \label{eqn:tracedistanceF} 
\end{eqnarray}
Equation (\ref{eqn:tracedistanceF}) follows from the fact that results of the measurement $\mcal{M}$ are stored in an orthonormal basis of system $X$. We will proceed by evaluating $ \expU ( (\sigma^M)^{-1/4} \: ( \rho^{MCE'} - \rho^{M} \ox \rho^{CE'} ) \:  (\sigma^M)^{-1/4} )^{\ox 2}$, but before continuing we absorb the two $\sigma^{-1/4}$ into the operator $\rho$. That is we define,
\begin{eqnarray} 
	\tilde{\sigma}^{MCK} & := & \sum_{m=1}^{|M|} \sqrt{p_m} \ketbra{m}^M \otimes \ketbra{\psi_m}^{CK}  \label{def:sigmagamma} \quad \mbox{and} \\
	\tilde{\rho}^{MCKEE'} & := &  (\sigma^M)^{-1/4} \: \rho^{MCKEE'} \: (\sigma^M)^{-1/4} =\lp \ident^{ME} \ox U^{CKE} \rp \cdot \lp \tilde{\sigma}^{MCK} \ox \omega^{EE'} \rp. \label{def:rhogamma} 
\end{eqnarray}
With these two definitions in hand we can expand $\expU \lp \tilde{\rho}^{MCE'} - \tilde{\rho}^{M} \ox \rho^{CE'} )^{\otimes 2}\rp$ as
\begin{eqnarray}
	\lefteqn{\expU \lp \tilde{\rho}^{MCE'} - \tilde{\rho}^{M} \ox \rho^{CE'} \rp^{\ox 2}} \\
	& = & \expU \lp \Tr_{KE} \ls U^{CKE} \cdot \lp \lp \tilde{\sigma}^{MCK} - \tilde{\sigma}^{M} \ox \sigma^{CK}\rp \ox \omega^{EE'} \rp \rs \rp^{\ox 2} \nonumber \\
	& = & \Tr_{KE\overline{KE}} \ls \expU  \lp U^{CKE} \cdot \lp \lp \tilde{\sigma}^{MCK} - \tilde{\sigma}^{M} \ox \sigma^{CK} \rp \ox \omega^{EE'} \rp \rp^{\ox 2} \rs \nonumber \\
	& = & \Tr_{KE\overline{KE}} \ls \int \lp U^{CKE} \ox U^{\overline{CKE}} \ox \ident^{ME'\overline{ME'}} \rp \cdot \lp \lp \tilde{\sigma}^{MCK} - \tilde{\sigma}^{M} \ox \sigma^{CK} \rp \ox \omega^{EE'} \rp^{\ox 2} \mathrm{d}U \rs. \label{eqn:integralsource}
\end{eqnarray}

To evaluate the integral with Lemma \ref{Schur}, we will need to calculate the projections of our operator onto the symmetric and antisymmetric subspaces of $(CKE)^{\ox 2}$. Since the projectors onto the symmetric and antisymmetric subspaces can be written as $\Pi_{\pm} = \demi (\ident \pm F)$, we can arrive at same results by working with $\ident$ and $F$. We begin with $\ident$:
\begin{eqnarray}
	\lefteqn{\Tr_{CKE\overline{CKE}} \ls \lp \tilde{\sigma}^{MCK} - \tilde{\sigma}^{M} \ox \sigma^{CK} \rp^{\ox 2} \ox \lp \omega^{EE'} \rp^{\ox 2} \ident^{CKE\overline{CKE}} \rs} \label{eqn:schurweylI} \\
	& = & \sum_m \sqrt{p_m} \ketbra{m}^M \ox \sum_{m'} \sqrt{p_{m'}} \ketbra{m'}^{\overline{M}} \Tr_{CK} \ls \psi_m - \sum_{m''} p_{m''} \psi_{m''} \rs^2 \ox \lp \omega^{E'} \rp^{\ox 2} \nonumber \\
	& = & (1-1)^2 \cdot (\tilde{\sigma}^M \ox \omega^{E'} )^{\ox 2} = 0 \nonumber.
\end{eqnarray}
The projection onto $F$ requires a more subtle calculation,
\begin{eqnarray}
	\lefteqn{\Tr_{CKE\overline{CKE}} \ls \lp \tilde{\sigma}^{MCK} - \tilde{\sigma}^{M} \ox \sigma^{CK} \rp^{\ox 2} \ox \lp \omega^{EE'} \rp^{\ox 2} F^{CKE} \rs} \label{eqn:schurweylF} \\
	& = & \sum_m \sqrt{p_m} \ketbra{m}^M \ox \sum_{m'} \sqrt{p_{m'}} \ketbra{m'}^{\overline{M}} \nonumber \\
	& & \;\;\; \cdot \Tr_{CK} \ls \lp \psi_m - \sum_{m''} p_{m''} \psi_{m''} \rp \lp \psi_{m'} - \sum_{m'''} p_{m'''} \psi_{m'''} \rp \rs \ox \Tr_{E\overline{E}} \ls \lp \omega^{EE'} \rp^{\ox 2} F^{E} \rs \nonumber.
\end{eqnarray}

By taking a closer look at Equation (\ref{eqn:schurweylF}) we can make the simplification
\begin{eqnarray}
	\lefteqn{\Tr_{CK} \ls \lp \psi_m - \sum_{m''} p_{m''} \psi_{m''} \rp \lp \psi_{m'} - \sum_{m'''} p_{m'''} \psi_{m'''} \rp \rs} \nonumber \\
	& = & \Tr_{CK} \ls \psi_m \psi_{m'} - \sum_{m''} p_{m''} \psi_{m''} \psi_{m'} - \sum_{m'''} p_{m'''} \psi_m \psi_{m'''} + \sum_{m'', m'''} p_{m''} p_{m'''} \psi_{m''} \psi_{m'''} \rs \nonumber \\
	& = & \delta_{mm'} - p_{m'} - p_m + \sum_{m''} p_{m''}^2. \label{eqn:psimreductionresult}
\end{eqnarray}

Now we define $\tilde{\sigma}_{\circ}^{M\overline{M}}$ as the quantity evaluated in Equation (\ref{eqn:schurweylF}). Substituting the result of Equation  (\ref{eqn:psimreductionresult}) gives
\begin{eqnarray*}
	\tilde{\sigma}_{\circ}^{M\overline{M}} & : = & \Tr_{CK\overline{CK}} \ls \lp \tilde{\sigma}^{MCK} - \tilde{\sigma}^{M} \ox \sigma^{CK} \rp^{\ox 2} F^{CK} \rs \\
	& = & \lp \sum_m p_m \lp \ketbra{m} \rp^{\ox 2} - \tilde{\sigma}^M \ox (\tilde{\sigma}^{\overline{M}})^3  - (\tilde{\sigma}^M)^3 \ox \tilde{\sigma}^{\overline{M}} +  \lp \sum_m p_m^2 \rp \tilde{\sigma}^{M} \ox \tilde{\sigma}^{\overline{M}} \rp.
\end{eqnarray*}

We also define $\Omega^{E'\overline{E'}}$ as the operator acting on system $E'\overline{E'}$ in Equation (\ref{eqn:schurweylF}), or $\Omega^{E'\overline{E'}} = \Tr_{E\overline{E}} [( \omega^{EE'} )^{\ox 2} F^{E} ]$. At this point, Lemma \ref{Schur} can be used to evaluate the integral in Equation (\ref{eqn:integralsource}). We can make significant simplifications by first expanding the $\alpha_{\pm}$ and then using our result from Equation (\ref{eqn:schurweylI}) to show that 
\begin{eqnarray*}
	\alpha_{\pm} & = & \frac{1}{\mathrm{rank}(\Pi^{CKE}_{\pm})} \Tr_{CKE\overline{CKE}} \ls \lp \tilde{\sigma}^{MCK} - \tilde{\sigma}^{M} \ox \sigma^{CK} \rp^{\ox 2} \ox \lp \omega^{EE'} \rp^{\ox 2} \lp \Pi^{CKE}_{\pm} \ox \ident^{ME'} \rp \rs \\
	& = & \frac{\pm \lp \tilde{\sigma}_{\circ}^{M\overline{M}} \ox \Omega^{E'\overline{E'}} \rp}{CKE(CKE\pm1)},
\end{eqnarray*}
where the terms $\Pi^{CKE}_{\pm}$ are the projectors onto the symmetric and antisymmetric subspaces of $(CKE)^{\ox 2}$, that is $\demi (\ident^{CKE\overline{CKE}} \pm F^{CKE})$. In particular, because $\alpha_+$ is proportional to $\alpha_-$, the integral will have the product form
\begin{equation*}
	\tilde{\sigma}_{\circ}^{M\overline{M}} \ox \Omega^{E'\overline{E'}} \ox \lp \frac{\Pi^{CKE}_+}{CKE(CKE+1)} -  \frac{\Pi^{CKE}_-}{CKE(CKE-1)} \rp,
\end{equation*}
so the calculation of the trace in Equation (\ref{eqn:tracedistanceF}) will factor into a product
over the systems $(M)^{\ox 2}$ and $(CKEE')^{\ox 2}$. Thus,
\begin{eqnarray}
	\lefteqn{\Tr_{(MKE)^{\ox 2}} \ls \lp \Tr_{(CE')^{\ox 2}} \ls  \lp \chi_i^{CE'} \rp^{\ox 2} \expU \lp \tilde{\rho}^{MCE'} - \tilde{\rho}^{M} \ox \rho^{CE'} \rp^{\ox 2} \rs \rp F^{M} \rs} \label{eqn:factorizedint} \\
	& = & \Tr \ls \tilde{\sigma}_{\circ}^{M\overline{M}}  F^{M} \rs \cdot \Tr \ls \lp \chi_i^{CE'} \ox \ident^{KE} \rp^{\ox 2} \lp \frac{\Pi^{CKE}_+ \ox \Omega^{E'\overline{E'}}}{CKE(CKE+1)} -  \frac{\Pi^{CKE}_- \ox \Omega^{E'\overline{E'}}}{CKE(CKE-1)} \rp \rs. \nonumber
\end{eqnarray}
The first first factor in Equation (\ref{eqn:factorizedint}) can easily be bounded:
\begin{eqnarray}
	\Tr_{M\overline{M}} \ls \tilde{\sigma}_{\circ}^{M\overline{M}}  F^{M} \rs & = & \sum_m p_m - \sum_m p_m^{3/2} - \sum_m p_m^{3/2} + \sum_m p_m^2  \nonumber \\
	& \leqslant & 2 \sum_m p_m \nonumber 
	=  2. \label{eqn:factorizedone}
\end{eqnarray}

To estimate the second factor in Equation (\ref{eqn:factorizedint}) we will need to observe two facts. First, that
\begin{equation}
	\Tr \ls ( \chi_i^{CE'} \ox \ident^{KE} )^{\ox 2} \: \ident^{CKE\overline{CKE}} \ox \Omega^{E'\overline{E'}} \rs \leqslant (KE)^2 \lno \Omega^{E'\overline{E'}} \rno_{\infty}, \label{eqn:finalexppartone}
\end{equation}
which follows from the fact that $\chi_i^{CE'}$ is a rank $1$ projector. Second, that
\begin{eqnarray}
	\Tr \ls ( \chi_i^{CE'} \ox \ident^{KE} )^{\ox 2} \: F^{CKE} \ox \Omega^{E'\overline{E'}} \rs & = & KE \; \Tr_{E'\overline{E'}} \ls \lp \Tr_{C\overline{C}} \ls (\chi_i^{CE'})^{\ox2} F^C \rs \rp \Omega^{E'\overline{E'}} \rs \nonumber \\
	& \leqslant & KE \lno \Omega^{E'\overline{E'}} \rno_{\infty}. \label{eqn:finalexpparttwo}
\end{eqnarray}
If we use Equations (\ref{eqn:finalexppartone}) and (\ref{eqn:finalexpparttwo}) to estimate the second factor of Equation (\ref{eqn:factorizedint}) we get the bound
\begin{eqnarray}
	\lefteqn{\Tr \ls \lp \chi_i^{CE'} \ox \ident^{KE} \rp^{\ox 2} \lp \frac{\Pi^{CKE}_+ \ox \Omega^{E'\overline{E'}}}{CKE(CKE+1)} -  \frac{\Pi^{CKE}_- \ox \Omega^{E'\overline{E'}}}{CKE(CKE-1)} \rp \rs} \nonumber \\
	& \leqslant & \lp \frac{(KE)^2 + KE}{2CKE(CKE+1)} - \frac{(KE)^2 - KE}{2CKE(CKE-1)} \rp \cdot \lno \Omega^{E'\overline{E'}} \rno_{\infty} \nonumber \\
	& \leqslant & \frac{2}{C^2KE} \cdot \lno \Omega^{E'\overline{E'}} \rno_{\infty} \label{eqn:factorizedtwo} .
\end{eqnarray}
This can be rewritten in a more familiar form using
\begin{eqnarray*}
	 \lno \Omega^{E'\overline{E'}} \rno_{\infty} & = & \lno \Tr_{E'\overline{E'}} \ls \lp \omega^{EE'} \rp^{\ox 2} F^{E} \rs \rno_{\infty} \\
	& = & \lno \lp \omega^{E} \rp^{\ox 2} F^{E} \rno_{\infty} = \lno \omega^E \rno^2_{\infty} = 2^{-2H_{\mathrm{min}}(E)_{\omega}}.
\end{eqnarray*}
In the above, the third equality follows from the fact that the operator norm is right-invariant under unitary transformations and $F$ is a unitary matrix. Combining the results in Equations (\ref{eqn:factorizedone}) and (\ref{eqn:factorizedtwo}), as well as the above identity, we obtain an upper bound for the trace distance through Equation (\ref{eqn:tracedistanceF}),
\begin{eqnarray*}
	\expU \lno \mcal{M} \lp \rho^{MCE'} \rp - \mcal{M} \lp \rho^{M} \ox \rho^{CE'} \rp \rno_1 & \leqslant & \sqrt{s \lp \frac{CE'}{s}\rp^2  \sum_{i=1}^s 2 \frac{2 \cdot 2^{-2H_{\mathrm{min}}(E)_{\omega}}}{(C)^2KE}} \\
	& \leqslant & \frac{2\Delta_{E, \infty}}{\sqrt{KE}}.
\end{eqnarray*}
\end{proof}
%%%%%%%%%%%%%%%%%%%%%%%%%%%%%%%%%%%%%%%%%%%%%%%%

%%%%%%%%%%%%%%%%%%%%%%%%%%%%%%%%%%%%%%%%%%%%%%%%%
\begin{lem}
	\label{lipfUunent}
	$g_{\mcal{M}}(U)$, the trace distance to independence for a fixed $(s,\eta)$-quasi-measurement, is Lipschitz continuous on the space $(\mcal{U}(CKE), \lno \cdot \rno_2)$ with constant $4\eta \sqrt{\Delta_{M, \infty} \; \Delta_{E, \infty} / ME}$.
\end{lem}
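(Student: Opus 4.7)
The plan is to estimate the Lipschitz constant of $g_{\mcal{M}}$ by mirroring the structure of Lemma~\ref{expfUunent}, replacing the Haar integration over $U$ by a unitary telescoping in the difference $U - V$. First, the reverse triangle inequality (using that $\rho^M = \sigma^M$ is $U$-independent with $\lno \rho^M \rno_1 = 1$) yields
\[
|g_{\mcal{M}}(U) - g_{\mcal{M}}(V)| \;\leqslant\; \lno \mcal{M}\lp \rho_U^{MCE'} - \rho_V^{MCE'} \rp \rno_1 \;+\; \lno \mcal{M}\lp \rho_U^{CE'} - \rho_V^{CE'} \rp \rno_1,
\]
so it suffices to bound each trace norm. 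To each I apply Lemma~\ref{lem:tight12} to convert to a Hilbert--Schmidt norm at the price of a factor of $\sqrt{s}$, taking $\gamma = \ident^X \otimes \sigma^M$ in the first case and $\gamma = \ident^X$ in the second. This absorbs $(\sigma^M)^{-1/4}$ factors into $\rho$ to produce $\tilde\rho_U^{MCKEE'} = W_U \lp \tilde\sigma^{MCK} \otimes \omega^{EE'} \rp W_U^\dagger$, with $W_U = \ident^{ME'} \otimes U^{CKE}$ and $\tilde\sigma^{MCK}$ as in Equation~(\ref{def:sigmagamma}); the $(\sigma^M)^{-1/4}$ factors commute through $W_U$.

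Next, I telescope: writing $\tilde\tau = \tilde\sigma^{MCK} \otimes \omega^{EE'}$ and $\Delta W = \ident^{ME'} \otimes (U - V)$,
\[
\tilde\rho_U^{MCKEE'} - \tilde\rho_V^{MCKEE'} \;=\; W_U\, \tilde\tau\, \Delta W^\dagger \;+\; \Delta W\, \tilde\tau\, W_V^\dagger,
\]
so by linearity and the triangle inequality for $\lno \cdot \rno_2$, it suffices to bound $\lno \mcal{M}\lp \tr_{KE}\ls W_U \tilde\tau \Delta W^\dagger \rs \rp \rno_2$ and the conjugate piece. The rank-one structure of the quasi-measurement gives
\[
\lno \mcal{M}\lp Y^{MCE'} \rp \rno_2^2 \;=\; \frac{(CE')^2}{s^2} \sum_{i=1}^s \lno \bra{\chi_i} Y \ket{\chi_i}^M \rno_2^2,
\]
and this sum is then evaluated by the swap-operator identity $\sum_i \lno \bra{\chi_i} Y \ket{\chi_i} \rno_2^2 = \tr\ls \lp R \otimes F^M \rp \lp Y \otimes Y^\dagger \rp \rs$, with $R = \sum_i \ketbra{\chi_i}^{\otimes 2}$, mimicking the manipulations leading to Equation~(\ref{eqn:factorizedint}). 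The quasi-measurement constraint $\sum_i \ketbra{\chi_i} \leqslant (s\eta/CE')\,\ident$ supplies the factor of $\eta$; the operator norm $\lno \tilde\sigma^{MCK} \rno_\infty = \sqrt{\Delta_{M,\infty}/M}$ provides the message-entropy dependence; and the partial trace $\tr_{E\overline{E}} \ls (\omega^{EE'})^{\otimes 2} F^E \rs$ introduces $\lno \omega^E \rno_\infty = \Delta_{E,\infty}/E$ (rather than the trivial $\lno \omega^{EE'} \rno_\infty = 1$), exactly as in the estimate of $\lno \Omega^{E'\overline{E'}} \rno_\infty$ in the proof of Lemma~\ref{expfUunent}. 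A final H\"older step on the remaining $\Delta W$ peels off a factor of $\lno U - V \rno_2$. Summing the two telescope pieces and the two reverse-triangle terms yields the stated Lipschitz constant $4\eta \sqrt{\Delta_{M,\infty} \Delta_{E,\infty} / (ME)}$.

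The main obstacle is the bookkeeping in the last step: one must arrange the swap-trick evaluation so that the $\Delta W$ factor contributes only $\lno U - V \rno_2$ rather than the much looser $\sqrt{ME'}\,\lno U - V \rno_2$ that a naive operator-norm estimate on $\ident^{ME'} \otimes (U-V)$ would give, and so that the $KE$ partial trace produces $\Delta_{E,\infty}/E$ through the $E$-system swap identity rather than a trivial dimension factor. Following the $(MKE)^{\otimes 2}$ versus $(CE')^{\otimes 2}$ factorization used in Equation~(\ref{eqn:factorizedint})---with the difference-of-unitaries operator $W_U \tilde\tau \Delta W^\dagger$ in place of the Haar-averaged operator---is the cleanest route to recover exactly the same entropic factors as in Lemma~\ref{expfUunent}.
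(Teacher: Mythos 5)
Your overall skeleton (triangle inequality on the two terms of $g_{\mcal{M}}$, then a bound on $\lno \mcal{M}(\rho_U - \rho_V)\rno_1$) matches the paper's, but the core of your argument has a genuine gap. You propose to redo the second-moment/swap-operator computation of Lemma~\ref{expfUunent} with $W_U\,\tilde\tau\,\Delta W^\dagger$ ``in place of the Haar-averaged operator'' and then invoke the $(M)^{\ox 2}$-versus-$(CKEE')^{\ox 2}$ factorization of Equation~(\ref{eqn:factorizedint}). That factorization is not a generic feature of the swap trick: it is produced by Lemma~\ref{Schur}, i.e.\ by the Haar integral collapsing the $(CKE)^{\ox 2}$ marginal onto the span of $\ident$ and $F^{CKE}$ (equivalently $\Pi_\pm$), which is what lets the trace split into $\Tr[\tilde\sigma_\circ F^M]$ times a $CKEE'$ factor and what makes identities like (\ref{eqn:finalexppartone})--(\ref{eqn:finalexpparttwo}) applicable. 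For fixed $U$ and $V$ there is no integral, the operator $\lp W_U\tilde\tau\Delta W^\dagger\rp\ox\lp\Delta W\tilde\tau W_U^\dagger\rp$ has no such product structure, and the two steps you yourself flag as ``the main obstacle'' --- extracting $\lno U-V\rno_2$ rather than $\sqrt{ME'}\,\lno U-V\rno_2$ from $\Delta W$, and extracting $\Delta_{E,\infty}/E$ rather than a dimension factor from the $KE$ partial trace --- are precisely the content of the lemma and are asserted rather than proved. A secondary issue: substituting the operator inequality $\sum_i \chi_i^{\ox 2}\leqslant (s\eta/CE')\ident$ inside a trace against $\tilde Y\ox\tilde Y^\dagger\cdot F^M$ is not justified, since that operator is not positive semidefinite.

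The paper's route avoids all of this and is worth internalizing because it is much lighter. From the quasi-measurement condition one gets the trace-norm contraction $\lno\mcal{M}(\zeta)\rno_1\leqslant\eta\lno\zeta\rno_1$ directly (this is where the single factor of $\eta$ comes from --- no Lemma~\ref{lem:tight12}, no factor of $\sqrt{s}$). Monotonicity under the partial trace over $KE$ then reduces both terms to $2\eta\lno U\cdot(\sigma\ox\omega)-V\cdot(\sigma\ox\omega)\rno_1$ on the \emph{full} system. Purifying $\sigma$ and applying Lemma~\ref{1normto2norm} converts this to $4\eta\lno(U-V)\ox\ident\,\ket{\sigma}\ket{\omega}\rno_2$, and the vector--operator correspondence of Definition~\ref{def:opAB} together with H\"older gives $\lno(U-V)\,\op(\ket{\sigma}\ket{\omega})\rno_2\leqslant\lno U-V\rno_2\,\lno\op(\ket{\sigma}\ket{\omega})\rno_\infty$ with $\lno\op(\ket{\sigma}\ket{\omega})\rno_\infty^2=\lno\sigma^{CK}\ox\omega^E\rno_\infty=\max_m p_m\cdot 2^{-H_{\min}(E)_\omega}$, which is exactly $\Delta_{M,\infty}\Delta_{E,\infty}/(ME)$. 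If you want to salvage your telescoping idea, the place to insert the analogous move is $\lno\Delta W\,\tilde\tau^{1/2}\rno_2^2=\Tr\bigl[\tilde\tau\lp\ident^{ME'}\ox(U-V)^\dagger(U-V)\rp\bigr]\leqslant\lno\tilde\tau^{CKE}\rno_\infty\lno U-V\rno_2^2$, but you would still owe a complete accounting of the companion factor, and the result would be strictly more work than the paper's argument.
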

\begin{proof}
We wish to analyze the behaviour of the trace distance with respect to the unitary matrix defining the channel. Recall the definition of function $g_{\mcal{M}}(U)$,
\begin{equation*}
	g_{\mcal{M}}(U) =  \lno \mcal{M} \lp \rho^{MCE'} \rp - \mcal{M} \lp \rho^{M} \ox \rho^{CE'} \rp \rno_1. 
\end{equation*}
If we denote by $\rho_U$ and $\rho_V$ the states $\Tr_{K} \ls U \cdot \sigma \rs$ and $\Tr_{K} \ls V \cdot \sigma \rs$ respectively, we can bound the deviation of $g_{\mcal{M}}$ using the triangle inequality by 
\begin{eqnarray}
	\left| g_{\mcal{M}}(U) - g_{\mcal{M}}(V) \right| & \leqslant & \lno \mcal{M} \lp \rho_U^{MCE'} \rp - \mcal{M} \lp \rho_V^{MCE'} \rp \rno_1 \nonumber \\
	& & \hspace{0.5cm} +  \lno \mcal{M} \lp \rho_U^{M} \ox \rho_U^{CE'} \rp - \mcal{M} \lp \rho_V^{M} \ox \rho_V^{CE'} \rp \rno_1 \label{eqn:Lipun:1}\\
	& = & \lno \mcal{M} \lp \rho_U^{MCE'} - \rho_V^{MCE'} \rp \rno_1 +  \lno \mcal{M} \lp \sigma^{M} \ox \lp \rho_U^{CE'} - \rho_V^{CE'} \rp \rp \rno_1 \nonumber,
\end{eqnarray}
where the second line follows from the linearity of the superoperator. We note that for any hermitian operator $\zeta$,
\begin{eqnarray*}
	\lno \mcal{M} \lp \zeta \rp \rno_1 & = & \lno \frac{CE'}{s} \sum_{i=1}^s \ketb{i}{\chi_i} \zeta \ketb{\chi_i}{i} \rno_1 \\
	& = & \frac{CE'}{s} \sum_{i=1}^s \left| \bra{\chi_i} \zeta \ket{\chi_i} \right| \leqslant \frac{CE'}{s} \sum_{i=1}^s \bra{\chi_i} | \zeta | \ket{\chi_i} \\
	& = & \frac{CE'}{s} \sum_{i=1}^s \Tr \ls \chi_i |\zeta| \rs \leqslant \eta \lno \zeta \rno_1,
\end{eqnarray*}
where the last inequality follows from the definition of $(s,\eta)$-quasi-measurements. Applying this new fact, our bound in Equation (\ref{eqn:Lipun:1}) becomes,
\begin{eqnarray}
	\left| g_{\mcal{M}}(U) - g_{\mcal{M}}(V) \right| & \leqslant & \eta \lno \rho_U^{MCE'} - \rho_V^{MCE'} \rno_1 +  \eta \lno \sigma^{M} \ox \lp \rho_U^{CE'} - \rho_V^{CE'} \rp \rno_1 \label{eqn:Lipun:2} \\
	& \leqslant & 2\eta \lno \rho_U^{MCKEE'} - \rho_V^{MCKEE'} \rno_1 \nonumber \\ 
	& = & 2\eta \lno \lp U-V \rp \cdot \sigma \ox \omega \rno_1, \nonumber
\end{eqnarray}
where the second line follows from monotonicity. We introduce a purification of $\sigma^{MCK}$ in a new but temporary system $N$ such that $\mathrm{dim}(N)=\mathrm{dim}(M)$. We also recall that $\omega$ is pure. This permits us to use Lemma \ref{1normto2norm} and arrive at the following consequence of Equation (\ref{eqn:Lipun:2}),
\begin{equation}
	\left| g_{\mcal{M}}(U) - g_{\mcal{M}}(V) \right| \leqslant 4\eta \lno \lp U^{CKE} - V^{CKE}\rp \ox \ident_{MNE'} \ket{\sigma}^{MNCK}\ket{\omega}^{EE'} \rno_2 . \label{eqn:Lipun:3}
\end{equation}
We now introduce a helpful operation.
%%%%%%%%%%%%%%%%%%%%%%%%%%%%%%%%%%%%%%%%%%%%%%%%
\begin{defin}[Vector-operator correspondence] \label{def:opAB}
	Endow systems $A$ and $B$ with fixed orthonormal bases $\{ \ket{a_i}^A \}_i$ and $\{ \ket{b_i}^B \}_i$ respectively, and let $\mathrm{op}_{A \rightarrow B} : A \ox B \rightarrow \mathrm{L}(A,B)$, the space of linear transformations from $A$ to $B$, be defined as
\begin{equation*}
	\mathrm{op}_{A \rightarrow B} \lp \ket{a_i} \ket{b_j} \rp = \ket{b_j} \bra{a_i} \hspace{0.5in} \forall i,j
\end{equation*}
This operation depends on the choice of basis; therefore, whenever it is used, a particular choice of basis is implied. Since this choice will never matter in our calculations, we shall not explicitly define these bases.
\end{defin}
Useful properties of the correspondence can be found in~\cite{fred-these}.

%%%%%%%%%%%%%%%%%%%%%%%%%%%%%%%%%%%%%%%%%%%%%%%%
We can think of the operator $ (U^{CKE} - V^{CKE}) \ox \ident^{MNE'}$ as bipartite over composite systems $MNE'$ and $CKE$. Since the $2$-norm depends only on the Schmidt coefficients of the states, it will be invariant under the $\mathrm{op}$ operation defined in Definition \ref{def:opAB}. Our bound from Equation (\ref{eqn:Lipun:3}) then becomes,
\begin{eqnarray*}
	\left| g_{\mcal{M}}(U) - g_{\mcal{M}}(V) \right| & \leqslant & 4\eta \lno \mathrm{op}_{MNE' \rightarrow CKE} \lp \lp U^{CKE} - V^{CKE}\rp \ox \ident^{MNE'} \ket{\sigma}^{MNCK} \ket{\omega}^{EE'} \rp \rno_2 \\
	& = & 4\eta \lno \lp U - V \rp \mathrm{op}_{MNE' \rightarrow CKE} \lp \ket{\sigma}\ket{\omega} \rp \rno_2, 
\end{eqnarray*}
where the second line follows from the fact that $\mathrm{op}_{MNE' \rightarrow CKE}$ is linear and commutes with unitary transformations on $CKE$. We are left with a few easy steps to bound the Lipschitz constant. 
%Lemma I.5 from~\cite{fred-these},
\begin{eqnarray*}
	\left| g_{\mcal{M}}(U) - g_{\mcal{M}}(V) \right| & \leqslant & 4\eta \lno U - V \rno_2 \lno \mathrm{op}_{MNE' \rightarrow CKE} \lp \ket{\sigma} \ket{\omega} \rp \rno_{\infty} \\
	& = & 4\eta \lno U - V \rno_2 \sqrt{\lno \sigma^{CK} \ox  \omega^{E} \rno_{\infty}} \\
	& = & 4\eta \lno U - V \rno_2 \sqrt{\lno \sum_m p_m \ketbra{\psi_m}^{CK}\rno_{\infty} \lno \omega^E \rno_{\infty}} \\
	& = & 4\eta \lno U - V \rno_2 \sqrt{\mathrm{max} \; p_m \; \cdot 2^{-H_{\mathrm{min}}(E)_{\omega}}} \\
	& = & 4\eta \lno U - V \rno_2 2^{-\tfrac{1}{2} H_{min}(M)_{\sigma}} 2^{- \tfrac{1}{2} H_{\mathrm{min}}(E)_{\omega}}\\
	& = & \frac{4\eta \sqrt{\Delta_{M, \infty} \; \Delta_{E, \infty} }}{\sqrt{ME}} \lno U - V \rno_2.
\end{eqnarray*}
A proof of the inequality can be found, for example, in~\cite{fred-these}.
The second line follows from the fact the Schmidt coefficients of $\ket{\sigma}^{MNCK}$ are the square roots of the eigenvalues of $\sigma^{CK}$. The last line follows from the definition of $\Delta_{\mathrm{min}}$.
\end{proof}
%%%%%%%%%%%%%%%%%%%%%%%%%%%%%%%%%%%%%%%%%%%%%%%%%

In order to discretize the set of all $(s,\eta)$-quasi-measurements, we require a distance measure for the set.
%%%%%%%%%%%%%%%%%%%%%%%%%%%%%%%%%%%%%%%%%%%%%%%%
\begin{defin}[Metric on the set of $(s,\eta)$-quasi-measurements, $\mcal{L}(s,\eta)$]
	\label{definDistonL}
Consider $\mcal{M}$, $\mcal{N}$ $\in \mcal{L}(s,\eta)$ defined as
\begin{equation*}
	\mcal{M} \lp \sigma \rp = \frac{|CE'|}{s} \sum_{i=1}^{s} \ketb{i}{ \chi_i } \sigma \ketb{ \chi_i }{i} , \hspace{0.5in} \mcal{N} \lp \sigma \rp = \frac{|CE'|}{s} \sum_{i=1}^{s} \ketb{i}{ \nu_i } \sigma \ketb{ \nu_i }{i}.
\end{equation*}
We define the distance between these two elements as
\begin{equation*}
	d(\mcal{M},\mcal{N}) := \sum_{i=1}^s \lno \chi_i - \nu_i \rno_2.
\end{equation*}
\end{defin}
%%%%%%%%%%%%%%%%%%%%%%%%%%%%%%%%%%%%%%%%%%%%%%%%

Now letting $\mcal{M}$ vary instead of $U$, we define a new function $h_{U}(\mcal{M})$ by
\begin{equation*}
	h_{U}(\mcal{M}) = \lno \mcal{M} \lp \rho^{MCE'} \rp - \mcal{M} \lp \rho^{M} \ox \rho^{CE'} \rp \rno_1.
\end{equation*}

%%%%%%%%%%%%%%%%%%%%%%%%%%%%%%%%%%%%%%%%%%%%%%%%
\begin{lem}
	\label{lipfLunent}
	$h_{U}(\mcal{M})$  is Lipschitz continuous on the space $(\mcal{L}(s,\eta), d)$ with constant $\frac{2\sqrt{CE'}}{s} \sqrt{\Delta_{M,2} \Delta_{E,2}}$.
\end{lem}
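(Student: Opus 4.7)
The plan is to parallel the structure of the proof of Lemma \ref{lipfUunent}: split via the triangle inequality, exploit the block-diagonal output structure of measurement superoperators, expand the difference in terms of $\xi_i := \chi_i - \nu_i$, and then convert the resulting expressions into the target norms through a weighted 1-to-2-norm conversion and a Cauchy-Schwarz bound tailored to the purification of $\rho$.

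First, I apply the reverse triangle inequality to the definition of $h_U$:
\begin{equation*}
|h_U(\mcal{M}) - h_U(\mcal{N})| \leq \lno(\mcal{M}-\mcal{N})\lp\rho^{MCE'}\rp\rno_1 + \lno(\mcal{M}-\mcal{N})\lp\rho^M\ox\rho^{CE'}\rp\rno_1,
\end{equation*}
so that it suffices to bound $\lno(\mcal{M}-\mcal{N})(\tau)\rno_1$ for the two relevant states $\tau$, both of which have marginal $\tau^M = \sigma^M$. Since $\mcal{M}-\mcal{N}$ outputs in the orthonormal basis $\{\ket{i}^X\}$, the trace norm decouples as $\frac{|CE'|}{s}\sum_i\lno\bra{\chi_i}\tau\ket{\chi_i} - \bra{\nu_i}\tau\ket{\nu_i}\rno_1$.

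Second, using the identity $\ketbra{\chi_i} - \ketbra{\nu_i} = \ketb{\chi_i}{\xi_i} + \ketb{\xi_i}{\nu_i}$, each block decomposes into two off-diagonal partial contractions of $\tau$. Purifying $\tau$ and invoking $\lno AB\mdag\rno_1 \leq \lno A\rno_2 \lno B\rno_2$ on rank-one pieces yields the Cauchy-Schwarz-type estimate
\begin{equation*}
\lno\bra{v}^{CE'}\tau^{MCE'}\ket{u}^{CE'}\rno_1 \leq \sqrt{\bra{u}\tau^{CE'}\ket{u}\,\bra{v}\tau^{CE'}\ket{v}},
\end{equation*}
after which each block is bounded by $\sqrt{\bra{\xi_i}\tau^{CE'}\ket{\xi_i}}\bigl(\sqrt{\bra{\chi_i}\tau^{CE'}\ket{\chi_i}} + \sqrt{\bra{\nu_i}\tau^{CE'}\ket{\nu_i}}\bigr)$.

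Third, to extract the correct $s$-dependence and the 2-entropy quantities, I apply the weighted 1-to-2-norm conversion $\lno Z\rno_1 \leq \sqrt{s}\,\lno\gamma^{-1/4}Z\gamma^{-1/4}\rno_2$ with $\gamma = \ident^X \ox \sigma^M$ (so $\tr\gamma = s$), mirroring Lemma \ref{expfUunent}. Since $(\sigma^M)^{-1/4}$ commutes with $\mcal{M}$ and $\mcal{N}$, this effectively replaces $\tau$ by $\tilde\tau := (\sigma^M)^{-1/4}\tau(\sigma^M)^{-1/4}$. Applying the swap-trick identity $\tr[A^2] = \tr[(A\ox A)F]$ to each $\lno\tilde Z_i\rno_2^2$ and combining with the Cauchy-Schwarz bound $|\tr[D_i\rho_m]|^2 \leq \lno D_i\rno_2^2\,\lno\rho_m^{CE'}\rno_2^2 \leq 2\lno\xi_i\rno_2^2\,\lno\rho_m^{CE'}\rno_2^2$ gives an expression linear in $\sum_i \lno\xi_i\rno_2^2$ (which is at most $d(\mcal{M},\mcal{N})^2$ since $\lno\xi_i\rno_2 \geq 0$), times a weighted average of $\lno\rho_m^{CE'}\rno_2^2$.

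Finally, I bound this weighted average using the purification $\ket{\rho}^{MNCKEE'} = (\ident^{MN}\ox U^{CKE})\ket{\sigma}^{MNCK}\ket{\omega}^{EE'}$: since $U$ acts only on $CKE$ and leaves $N, E'$ untouched, tracing out $ME'$ before applying $U$ produces $\sigma^{NCK}\ox\omega^E$, so that $\lno\rho^{NCKE}\rno_2 = \lno\sigma^M\rno_2\,\lno\omega^E\rno_2$; the inequality $\lno\rho^A\rno_2 \leq \sqrt{|B|}\,\lno\rho^{AB}\rno_2$ (tight for maximally mixed partners) then yields the desired $\sqrt{\Delta_{M,2}\Delta_{E,2}}$ factor along with the $\sqrt{|CE'|}$ dimensional prefactor. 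The main obstacle will be step four: choosing the weighted 2-norm conversion and applying Cauchy-Schwarz in the sum over $i$ so as to avoid spurious $\eta$ dependence (which would be picked up by the naive bound $\sum_i \ketbra{\chi_i} \leq (s\eta/|CE'|)\ident$) and to extract $\lno\omega^E\rno_2$ rather than the trivial $\lno\omega^{EE'}\rno_2 = 1$; this requires carefully tracking the $E'$ partial trace through the purification.
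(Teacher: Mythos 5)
Your opening (triangle inequality, then exploiting that $\mcal{M}-\mcal{N}$ outputs block-diagonally in the $\ket{i}^X$ basis so the trace norm splits into a sum over $i$) matches the paper, and your closing ingredient (a R\'enyi-2 bound on $\rho^{MCE'}$ via the swap trick) is also the right one. But the middle of your argument is where the proof actually has to happen, and you have not carried it out -- you yourself flag "step four" as the main obstacle. As written, the plan imports the wrong machinery: the weighted $1\to2$-norm conversion with $\gamma=\ident^X\ox\sigma^M$ and the conditional decomposition into states $\rho_m^{CE'}$ are the tools for bounding the \emph{expectation} in Lemma \ref{expfUunent}, where one must beat the trivial bound by a factor of $\sqrt{KE}$. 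Here no such delicacy is needed, and your stated worries (spurious $\eta$ dependence from $\sum_i\ketbra{\chi_i}\leqslant(s\eta/CE')\ident$; losing $\lno\omega^E\rno_2$ to $\lno\omega^{EE'}\rno_2=1$) are artifacts of that detour rather than genuine obstacles. There is also a consistency issue: you define $\xi_i$ as a \emph{vector} difference and bound $\sum_i\lno\xi_i\rno_2^2$ by $d(\mcal{M},\mcal{N})^2$, whereas the metric $d$ as used downstream is the sum of $2$-norms of the \emph{operator} differences $\chi_i-\nu_i$; these are comparable but not equal, and the vector-level splitting $\ketbra{\chi_i}-\ketbra{\nu_i}=\ketb{\chi_i}{\xi_i}+\ketb{\xi_i}{\nu_i}$ is an unnecessary complication.

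The missing idea is that the paper's proof needs no conditioning, no weighting, and no purification: after the block split, bound $\lno\Tr_{CE'}[(\chi_i-\nu_i)\tau]\rno_1\leqslant\lno(\chi_i-\nu_i)\tau\rno_1$ by monotonicity, then apply the operator Cauchy--Schwarz inequality $\lno AB\rno_1\leqslant\lno A\rno_2\lno B\rno_2$ with $A=\chi_i-\nu_i$ (an operator) and $B=\tau$. This immediately yields $\frac{CE'}{s}\lno\tau\rno_2\sum_i\lno\chi_i-\nu_i\rno_2=\frac{CE'}{s}\lno\tau\rno_2\,d(\mcal{M},\mcal{N})$ for each of the two states $\tau\in\{\rho^{MCE'},\,\rho^M\ox\rho^{CE'}\}$. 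The only remaining computation is
\begin{equation*}
\lno\rho^{MCE'}\rno_2^2=\Tr\!\ls\lp U^{\ox2}\cdot(\sigma^{MCK}\ox\omega^{EE'})^{\ox2}\rp F^{MCE'}\rs\leqslant\Tr\!\ls(\sigma^{MCK})^{\ox2}F^M\rs\Tr\!\ls(\omega^{EE'})^{\ox2}F^{E'}\rs=2^{-H_2(M)_\sigma-H_2(E)_\omega},
\end{equation*}
where one uses cyclicity to isolate $U^{\ox2}F^CU^{\dag\ox2}\leqslant\ident$ and the swap on $E'$ survives to give $\Tr[(\omega^{E'})^2]=2^{-H_2(E)_\omega}$ because $\omega$ is pure -- this resolves your $E'$-versus-$E$ concern directly. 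Converting $2^{-\demi H_2(M)-\demi H_2(E)}$ to $\sqrt{\Delta_{M,2}\Delta_{E,2}/ME}$ and dropping a factor $1/\sqrt{K}$ gives the stated constant. I recommend replacing your steps three and four with this direct argument.
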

\begin{proof}
As for Lemma \ref{lipfUunent}, we can use the triangle inequality to rewrite the variation of the trace distance as follows,
\begin{eqnarray}
	\lefteqn{|h_{U}(\mcal{M}) - h_{U}(\mcal{N})|} \nonumber \\
	& \leqslant & \lno \mcal{M} \lp \rho^{MCE'} \rp - \mcal{N} \lp \rho^{MCE'} \rp \rno_1 +  \lno \mcal{M} \lp \rho^{M} \ox \rho^{CE'} \rp - \mcal{N} \lp \rho^{M} \ox \rho^{CE'} \rp \rno_1 \nonumber \\
	& = & \frac{CE'}{s} \sum_{i=1}^n \lp \lno \Tr_{CE'} \ls \lp \chi_i^{CE'} - \nu_i^{CE'} \rp \rho^{MCE'} \rs \rno_1 +  \lno \Tr_{CE'} \ls \lp \chi_i^{CE'} - \nu_i^{CE'} \rp \rho^{M} \ox \rho^{CE'} \rs \rno_1 \rp \nonumber \\
	& \leqslant & \frac{CE'}{s} \sum_{i=1}^s \lno \lp \chi_i^{CE'} - \nu_i^{CE'} \rp \rho^{MCE'} \rno_1 + \frac{CE'}{s} \sum_{i=1}^s \lno \lp \chi_i^{CE'} - \nu_i^{CE'} \rp \rho^{M} \ox \rho^{CE'} \rno_1 \nonumber \\
	& \leqslant & \frac{CE'}{s} \sum_{i=1}^s \lno \chi_i^{CE'} - \nu_i^{CE'} \rno_2 \lno \rho^{MCE'} \rno_2 + \frac{CE'}{s} \sum_{i=1}^s \lno \chi_i^{CE'} - \nu_i^{CE'} \rno_2 \lno \rho^{M} \ox \rho^{CE'} \rno_2, \label{eqn:LipM:1}
\end{eqnarray}
where the last line follows from the operator version of the Cauchy-Schwarz inequality (see Equation (IX.32) in \cite{bhatia}). Consider momentarily the second factor in the first term in Equation (\ref{eqn:LipM:1}), 
\begin{eqnarray}
	\lno \rho^{MCE'} \rno_2 & = & \lno \Tr_{KE} \ls U_{CKE} \cdot \lp \sigma^{MCK} \ox \omega^{EE'} \rp \rs \rno_2 \nonumber \\
	& = & \sqrt{\Tr \ls \lp U_{CKE}^{\ox 2} \cdot \lp \sigma^{MCK} \ox \omega^{EE'} \rp^{\ox 2} \rp F^{MCE'} \rs } \nonumber \\
	& = & \sqrt{\Tr \ls \lp U_{CKE}^{\ox 2} \; F^C \; U^{\dag \ox 2}_{CKE}\rp \lp \sum_m p_m^2 (\ketbra{\psi_m}^{CK})^{\ox 2} \ox (\omega^{EE'})^{\ox 2} \rp  F^{E'} \rs } \nonumber \\
	& \leqslant & \sqrt{\Tr \ls \sum_m p_m^2 (\ketbra{\psi_m}^{CK})^{\ox 2} \ox (\omega^{EE'})^{\ox 2} F^{E'} \rs } \nonumber \\
	& = & \sqrt{\Tr \ls (\omega^{EE'})^{\ox 2} F^{E'} \rs \sum_m p_m^2 } = 2^{-\demi H_2(M)_{\sigma} - \demi H_2(E)_{\omega}}. \label{eqn:LipM:2}
\end{eqnarray}
The third line is true by the cyclic property of the trace. The inequality, however, is true by the following observation: since $F^2=\ident$ we know that $F$ has eigenvalues $\pm 1$ and so $F\leq \ident$. We can make a similar evaluation for the last factor in Equation (\ref{eqn:LipM:1}), 
\begin{equation}
	\lno \rho^{M} \ox \rho^{CE'} \rno_2 \leqslant 2^{-\demi H_2(M)_{\sigma} - \demi H_2(E)_{\omega}} \label{eqn:LipM:3} ,
\end{equation}
since this inequality is a just a special case of the calculations leading to Equation (\ref{eqn:LipM:2}). If we apply Equations (\ref{eqn:LipM:2}) and (\ref{eqn:LipM:3}) to Equation (\ref{eqn:LipM:1}), we can extract a very simple bound on the variation of the trace distance
\begin{eqnarray*}
	|h_{U}(\mcal{M}) - h_{U}(\mcal{N})| & \leqslant & \frac{2CE'}{s} 2^{-\demi H_2(M)_{\sigma} - \demi H_2(E)_{\omega}}  \sum_{i=1}^s \lno \chi_i^{C} - \nu_i^{C} \rno_2 \\
	& \leqslant & \frac{2\sqrt{CE'}}{s} \sqrt{\Delta_{M,2} \Delta_{E,2}}  \: d(\mcal{M}, \mcal{N}), 
\end{eqnarray*}
where the last line follows from the definition of our metric on $\mcal{L}(s,\eta)$. We have also ignored a factor of $1/\sqrt{K}$ above when expressing the bound in terms of $\Delta_{M,2}$. We do this to simplify future calculations and it only gives a slightly less tight bound here. 
%Finally, noting that without loss of generality $|E'| \leq |E|$ we get the desired result.
\end{proof}
%%%%%%%%%%%%%%%%%%%%%%%%%%%%%%%%%%%%%%%%%%%%%%%%
%%%%%%%%%%%%%%%%%%%%%%%%%%%%%%%%%%%%%%%%%%%%%%%%
\begin{lem}
\label{sizeofnet}
Given system $A$, there exists a $\varepsilon$-net $\mcal{J}$ over the set  $\mcal{L}(s,\eta)$ of all $(s,\eta)$-quasi-measurements on $A$, such that each element $L \in \mcal{L}(s,\eta)$ is at most $\varepsilon$-distant from an element of $J \in \mcal{J}$ with respect to the metric $d(\cdot, \cdot)$. The size of this net can be taken to be
\begin{equation*}
	|\mcal{J}| \leqslant \lp \frac{10 s}{\varepsilon} \rp^{2s|A|}.
\end{equation*}
\end{lem}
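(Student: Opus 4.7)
The plan is to reduce the covering problem on $\mathcal{L}(s,\eta)$ to a standard covering problem on rank-one projectors and then take an $s$-fold Cartesian product. The metric $d$ decomposes as a sum of $s$ Hilbert-Schmidt distances between the individual rank-one components $\chi_i = \ketbra{\chi_i}$, so if we approximate each $\chi_i$ to Hilbert-Schmidt distance at most $\varepsilon/s$ by an element of a finite set $\mathcal{P}$ of rank-one projectors on $A$, the resulting $s$-tuple will be within $d$-distance $\varepsilon$ of the original. We then take $\mathcal{J}$ to be the set of all $s$-tuples with entries from $\mathcal{P}$. Note that we do not require the elements of $\mathcal{J}$ to themselves satisfy the $\eta\ident$ constraint, since the statement only asks for a net that covers $\mathcal{L}(s,\eta)$, not one that sits inside it; this is the key observation that makes the counting tractable.

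First I would build a net on unit vectors of $A$ by a standard Euclidean volume argument: the unit sphere of $\mathbb{C}^{|A|}$, viewed as the unit sphere of $\mathbb{R}^{2|A|}$, admits a $\delta$-net in the $2$-norm of cardinality at most $(1+2/\delta)^{2|A|} \leq (5/\delta)^{2|A|}$. Next, I would lift this to a net on rank-one projectors using the identity
\[
\ketbra{\psi}-\ketbra{\phi} = (\ket{\psi}-\ket{\phi})\bra{\psi} + \ket{\phi}(\bra{\psi}-\bra{\phi}),
\]
which, together with $\|\ket{\psi}\bra{\phi}\|_2 = \|\ket{\psi}\|_2\|\ket{\phi}\|_2$ for unit vectors, yields $\|\ketbra{\psi}-\ketbra{\phi}\|_2 \leq 2\|\ket{\psi}-\ket{\phi}\|_2$. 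Choosing $\delta = \varepsilon/(2s)$ therefore produces a set $\mathcal{P}$ of rank-one projectors of size at most $(10s/\varepsilon)^{2|A|}$ such that every rank-one projector on $A$ lies within Hilbert-Schmidt distance $\varepsilon/s$ of some element of $\mathcal{P}$.

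Putting the pieces together, every $\mathcal{M} \in \mathcal{L}(s,\eta)$ has a coordinatewise approximant in $\mathcal{J} = \mathcal{P}^s$ with $d$-distance at most $s \cdot (\varepsilon/s) = \varepsilon$, and $|\mathcal{J}| = |\mathcal{P}|^s \leq (10s/\varepsilon)^{2s|A|}$, matching the claimed bound. There is no real obstacle here: the argument is essentially routine, and the only slightly delicate point is the factor-of-two slack in passing from a unit-vector net to a projector net, which is exactly what forces the constant to be $10$ rather than $5$ in the final expression.
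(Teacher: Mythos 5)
Your proof of the lemma as stated is correct, and it follows essentially the same route as the paper: a $(5/\delta)^{2|A|}$-sized net on the unit sphere of $A$ viewed as $\mbS_{2|A|}$ (the paper cites Lemma II.4 of \cite{hlsw03} for exactly this), an $s$-fold product, and a rescaling of $\delta$ by $2s$ to produce the constant $10$. The one genuine divergence is where the factor of $2$ gets spent. You spend it on the vector-to-projector conversion $\left\| \ketbra{\psi}-\ketbra{\phi} \right\|_2 \leqslant 2 \left\| \ket{\psi}-\ket{\phi} \right\|_2$ and accept a net $\mcal{J}=\mcal{P}^s$ whose elements need not lie in $\mcal{L}(s,\eta)$. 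The paper instead spends it on a re-centering step: it keeps only the net points close to $\mcal{L}(s,\eta)$ and replaces each by a genuine $(s,\eta)$-quasi-measurement from its cell, doubling the mesh via the triangle inequality but guaranteeing $\mcal{J}\subseteq\mcal{L}(s,\eta)$. (The paper silently drops the vector-to-projector factor, so on that point your accounting is the more careful of the two.) However, your claim that dropping the constraint $\frac{|A|}{s}\sum_i\ketbra{\chi_i}\leqslant\eta\ident$ on the net elements is ``the key observation that makes the counting tractable'' is not right --- the paper obtains the identical bound with an internal net --- and the choice is not harmless downstream: in Theorem \ref{thm:concentration} the union bound applies the fixed-measurement concentration inequality (\ref{eqn:bound-fixed-measurement}) to every element of $\mcal{J}$, and that inequality rests on the Lipschitz bound of Lemma \ref{lipfUunent}, whose proof uses precisely the $\eta\ident$ constraint. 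To use your net there you would need an additional remark, e.g.\ that any tuple within $d$-distance $\varepsilon$ of $\mcal{L}(s,\eta)$ is itself an $(s,\eta+|A|\varepsilon/s)$-quasi-measurement, or else the paper's re-centering step.
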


\begin{proof}
We begin by consider an $\varepsilon$-net $\mcal{K}$ over $\mbS_{2|A|}^{\times s}$ ($s$-tuples of $2|A|$-dimensional Euclidean unit spheres). First, there exists a $\varepsilon$-net over $\mbS_{2|A|}$ of size no more than $(5/\varepsilon)^{2|A|}$. (See, for example, Lemma II.4 in \cite{hlsw03}.) $\mcal{K}$ can then be constructed by assembling the direct product of all the nets on the individual unit spheres. This produces a new net on the set of $s$-tuples of $2|A|$-dimensional unit spheres. Recall the distance measure $d(\cdot, \cdot)$ over $\mcal{L}(s,\eta)$, the set of all $(s,\eta)$-quasi-measurements. This metric can be extended to $s$-tuples. If it is then evaluated for any $s$-tuple $x$ and its representative in the net $y$,
\begin{equation*}
	d(x, y) = \sum_{i=1}^s \lno \chi_i -\nu_i \rno_2  \leq s \varepsilon.
\end{equation*}
Thus the spacing of the net $\mcal{K}$ over $s$-tuples is at most $s\varepsilon$ with respect to the desired metric. Consider the following set:
\begin{equation*}
	\mcal{K}' := \lb y \in \mcal{K} : \exists \: x \in \mcal{L}(s,\eta) , \lno x-y \rno_2 \leqslant s\varepsilon \rb.
\end{equation*}
This is the set of all elements of the net $\mcal{K}$ which are close to $(s,\eta)$-quasi-measurements. In other words, all $(s,\eta)$-quasi-measurements use an element of $\mcal{K}'$ as their ``representative" in the net. Now, divide $\mcal{L}(s,\eta)$ into subsets of elements which share the same representation in $\mcal{K}'$ and construct $\mcal{J}$ by choosing one $L \in \mcal{L}$ from each subset. We then have by the triangle inequality that all $L \in \mcal{L}$ are $2 s \varepsilon$ close to their new representative in $\mcal{J}$. Clearly $|\mcal{J}| \leqslant |\mcal{K}|$ since it was constructed from a subset and if we wish to make an $\varepsilon$-net over $\mcal{L}(s,\eta)$ we need only rescale the $\varepsilon$ from above, giving the result.
\end{proof}
%%%%%%%%%%%%%%%%%%%%%%%%%%%%%%%%%%%%%%%%%%%%%%%%
The Lipschitz constants, expectation value and net size give us all the pieces we need to make the concentration argument. We show that with very high probability, the distinguishability from independence of the joint (potentially unnormalized) distribution of messages and quasi-measurement outcomes  is small. 
%%%%%%%%%%%%%%%%%%%%%%%%%%%%%%%%%%%%%%%%%%%%%%%%
\begin{thm}[Concentration of probability for distinguishability from independence] \label{thm:concentration}
Given the quantum state $\rho^{MCKEE'} = U^{CKE} \cdot (\sigma^{MCK} \ox \omega^{EE'})$ where $U$ is a random unitary operator chosen according to the Haar measure, $\sigma$ is as defined in Equation (\ref{eqn:sigmadef}), $E' \cong E$, and $\omega^{EE'}$ is a bipartite pure state, the following bound holds
\begin{multline*}
	\PrU \lb \underset{\mcal{M} \in \mathcal{L}(s,\eta)}{\mathrm{sup}} \lno \mcal{M} \lp \rho^{MCE'}\rp - \mcal{M} \lp \rho^M \ox \rho^{CE'} \rp \rno_1 >  \varepsilon \rb \\
\leqslant \mathrm{exp} \lp 2sCE \ln \lp\frac{40\sqrt{CE}}{ \varepsilon} \sqrt{\Delta_{M,2}\Delta_{E,2}} \rp - \frac{(CKE)^2}{2^8\eta^2 \Delta_{M, \infty} \Delta_{E, \infty}} \lp \varepsilon - \frac{4 \Delta_{E, \infty}}{\sqrt{KE}} \rp^2 \rp.
\end{multline*}
In the above, $\Delta_{M, \infty}$, $\Delta_{M,2}$, $\Delta_{E,2}$ and $\Delta_{E, \infty}$ are as defined in Equations (\ref{eqn:deltaMmindef}), (\ref{eqn:deltaM2def}), (\ref{eqn:deltaE2def}) and (\ref{eqn:deltaEmindef}). 
\end{thm}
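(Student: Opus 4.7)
The plan is to combine the three ingredients already established—the expectation bound (Lemma \ref{expfUunent}), the Lipschitz-in-$U$ bound (Lemma \ref{lipfUunent}), the Lipschitz-in-$\mathcal{M}$ bound (Lemma \ref{lipfLunent})—with Levy's concentration (Theorem \ref{thm:levy}) and the net size estimate (Lemma \ref{sizeofnet}) via a standard $\varepsilon$-net/union-bound argument. Concretely, the proof splits into two stages: (i) bound the probability of deviation for a single fixed quasi-measurement, and (ii) upgrade this to a uniform bound over $\mathcal{L}(s,\eta)$.

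For stage (i), fix $\mathcal{M}\in\mathcal{L}(s,\eta)$ and consider $g_{\mathcal{M}}(U)$. Lemma \ref{expfUunent} gives $\mathbb{E}_U g_{\mathcal{M}}(U)\leq 2\Delta_{E,\infty}/\sqrt{KE}$, and Lemma \ref{lipfUunent} gives Lipschitz constant $\theta = 4\eta\sqrt{\Delta_{M,\infty}\Delta_{E,\infty}/(ME)}$ on $\mathcal{U}(CKE)$. Applying Theorem \ref{thm:levy} with deviation $\varepsilon/2-2\Delta_{E,\infty}/\sqrt{KE}$ and using the identity $CKE\cdot ME=(CKE)^2$ (since $M=CK$) yields
\[
\Pr_U\!\left\{g_{\mathcal{M}}(U)>\varepsilon/2\right\}\leq \exp\!\left(-\frac{(CKE)^2(\varepsilon-4\Delta_{E,\infty}/\sqrt{KE})^2}{2^8\eta^2\Delta_{M,\infty}\Delta_{E,\infty}}\right),
\]
which is exactly the negative term in the theorem's exponent.

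For stage (ii), invoke Lemma \ref{sizeofnet} to build a $\delta$-net $\mathcal{J}\subset\mathcal{L}(s,\eta)$ under the metric $d(\cdot,\cdot)$ of Definition \ref{definDistonL}. Choose $\delta$ so that Lemma \ref{lipfLunent} gives an approximation error at most $\varepsilon/2$: set
\[
\delta=\frac{\varepsilon s}{4\sqrt{CE}\sqrt{\Delta_{M,2}\Delta_{E,2}}},
\]
so that $|h_U(\mathcal{M})-h_U(\mathcal{M}')|\leq (2\sqrt{CE}/s)\sqrt{\Delta_{M,2}\Delta_{E,2}}\,\delta=\varepsilon/2$ whenever $d(\mathcal{M},\mathcal{M}')\leq\delta$. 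The resulting net has cardinality
\[
|\mathcal{J}|\leq\left(\frac{10s}{\delta}\right)^{2sCE}=\left(\frac{40\sqrt{CE}\sqrt{\Delta_{M,2}\Delta_{E,2}}}{\varepsilon}\right)^{2sCE},
\]
whose logarithm is precisely the positive term in the theorem's exponent.

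To conclude, observe that if $\sup_{\mathcal{M}\in\mathcal{L}(s,\eta)}g_{\mathcal{M}}(U)>\varepsilon$ then for some $\mathcal{M}'\in\mathcal{J}$ we must have $g_{\mathcal{M}'}(U)>\varepsilon/2$, since the net step costs at most $\varepsilon/2$. The union bound then gives
\[
\Pr_U\!\left\{\sup_{\mathcal{M}\in\mathcal{L}(s,\eta)}g_{\mathcal{M}}(U)>\varepsilon\right\}\leq |\mathcal{J}|\cdot\exp\!\left(-\frac{(CKE)^2(\varepsilon-4\Delta_{E,\infty}/\sqrt{KE})^2}{2^8\eta^2\Delta_{M,\infty}\Delta_{E,\infty}}\right),
\]
which rearranges into the stated bound. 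The main obstacle is not conceptual—the three preparatory lemmas do the real work—but rather the bookkeeping of constants: one must choose the net scale $\delta$ and the split of $\varepsilon$ into $\varepsilon/2+\varepsilon/2$ to simultaneously match the coefficient $40$ inside the logarithm, the factor $2sCE$, and the denominator $2^8$ in the Gaussian term. A final minor point is that when $\varepsilon\leq 4\Delta_{E,\infty}/\sqrt{KE}$ the negative term in the exponent is nonpositive and the bound holds trivially (the probability is at most $1$), so no separate argument is needed in that regime.
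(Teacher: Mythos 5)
Your proposal follows essentially the same route as the paper: apply Theorem \ref{thm:levy} to $g_{\mathcal{M}}$ using the expectation and Lipschitz bounds of Lemmas \ref{expfUunent} and \ref{lipfUunent}, then combine Lemma \ref{lipfLunent} with the net of Lemma \ref{sizeofnet} and a union bound; the only cosmetic difference is that you split $\varepsilon$ into $\varepsilon/2+\varepsilon/2$ up front, whereas the paper proves the bound at scale $2\varepsilon$ and rescales at the end, and your constants ($40$, $2sCE$, $2^8$) all come out matching the statement. The one flaw is your closing remark: when $\varepsilon \leqslant 4\Delta_{E,\infty}/\sqrt{KE}$ the negative term in the exponent is still strictly negative (it is minus a square), so the right-hand side need not exceed $1$ and the bound is \emph{not} trivially true there; this regime is where the one-sided Levy step genuinely breaks down, but it is excluded by the hypotheses of every downstream corollary and is equally unaddressed in the paper's own proof.
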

%%%%%%%%%%%%%%%%%%%%%%%%%%%%%%%%%%%%%%%%%%%%%%%%
\begin{proof}
We apply Theorem~\ref{thm:levy} to $g_{\mcal{M}}$ and consider only one direction of the divergence from the expected value. The exact statement can be written as
\begin{equation}\label{eqn:bound-fixed-measurement}
	\PrU \lb g_{\mcal{M}}(U) > \varepsilon \rb \leqslant \mathrm{exp} \lp - \frac{MCKE^2}{64\eta^2 \Delta_{M, \infty} \Delta_{E, \infty} } \lp \varepsilon - \expU g_{\mcal{M}} \rp^2\rp.
\end{equation}
It is convenient to define
\begin{equation*}
f(\mathcal{M},U) = \left\| \mathcal{M}\left(\rho^{MCE'}\right) - \mathcal{M}\left(\rho^M \otimes \rho^{CE'}\right) \right\|_1.
\end{equation*}
Clearly, $g_\mathcal{M}$ and $h_U$ are sections of $f$ and we are interested in bounding
$\PrU \{ \underset{\mcal{M}}{\mathrm{sup}} \; f(\mcal{M},U) >  \varepsilon \}$.
Let
\begin{equation*}
	\varepsilon' = \frac{s \varepsilon}{2\sqrt{CE\Delta_{M,2}\Delta_{E,2}}}, 
\end{equation*}
and consider $\mcal{J}$ an $\varepsilon'$-net over all $(s,\eta)$-quasi-measurements $\mcal{M}$. We found in Lemma \ref{lipfLunent} that if two $(s,\eta)$-quasi-measurements were $\varepsilon'$ apart with respect to the distance measure $d(\cdot, \cdot)$, then for a fixed unitary $U$, the values of $f$ for each measurement would not differ by more than $\varepsilon$. Thus we can state that the supremum deviation of $f$ is not more than twice the maximum deviation found on measurements in the net, 
\begin{equation*}
	\PrU \lb \underset{\mcal{M}}{\mathrm{sup}} f(\mcal{M}, U) >  2\varepsilon \rb \leqslant \PrU \lb \underset{\mcal{M} \in \mcal{J}}{\mathrm{max}} \; f(\mcal{M},U) >   \varepsilon \rb.
\end{equation*}
A union bound argument now bounds the probability of deviation for the maximum measurement by the probability of deviation for a generic measurement, 
\begin{equation*}
	\PrU \lb \underset{\mcal{M} \in \mcal{J}}{\max} f(\mcal{M},U) >  \varepsilon \rb \leqslant \sum_{\mcal{M} \in \mcal{J}} \PrU \lb g_{\mcal{M}}(U) >  \varepsilon \rb.
\end{equation*}
Thankfully, we have an explicit bound for the probability of deviation for an arbitrary measurement and we can make a simplification,
\begin{eqnarray*}
	\lefteqn{\PrU \lb \underset{\mcal{M}}{\mathrm{sup}} f(U, \mcal{M}) > 2 \varepsilon \rb \leqslant \sum_{\mcal{M} \in \mcal{J}} \mathrm{exp}  \lp - \frac{MCKE^2}{64\eta^2 \Delta_{M, \infty} \Delta_{E, \infty} } \lp \varepsilon - \expU f \rp^2\rp} \\
	& \leqslant & \lp \frac{20\sqrt{CE\Delta_{M,2}\Delta_{E,2}}}{ \varepsilon} \rp^{2sCE} \mathrm{exp} \lp - \frac{MCKE^2}{64\eta^2 \Delta_{M, \infty} \Delta_{E, \infty} } \lp \varepsilon - \expU f \rp^2\rp \\
	& \leqslant & \mathrm{exp} \lp 2sCE \ln \lp\frac{20\sqrt{CE\Delta_{M,2}\Delta_{E,2}}}{ \varepsilon} \rp - \frac{MCKE^2}{64\eta^2 \Delta_{M, \infty} \Delta_{E, \infty} } \lp \varepsilon - \expU f \rp^2\rp.
\end{eqnarray*}
Substituting in the fact that $CK=M$ yields the desired inequality.
\end{proof}

%%%%%%%%%%%%%%%%%%%%%%%%%%%%%%%%%%%%%%%%%%%%%%%%
\section{Locking against projective measurements}\label{sec:projective-measurements}
In this section we will only consider projective measurements, in other words $(s,\eta) = (CE', 1)$. We will also state all of the subsequent theorems in terms of qubits. For this reason we will identify $C=2^c$, $K=2^k$ and $E=E'=2^e$. This last assumption, namely that $E$ and $E'$ have the same dimension, is crucial for this section because it restricts the size of the set of measurements sufficiently to allow for a straightforward discretization. The restriction will be lifted when we move on to generalized measurements in the next section, however. 

Our calculations, we will make repeated use of the fact that
\begin{equation}
 	\log(x+y) \leqslant x+ \log(y) \hspace{0.5in} \forall x,y \geqslant 1. \label{eqn:logrel}
\end{equation}
\begin{cor}[Locking for uniform messages with maximal entanglement] \label{cor:unihigh}
Consider the locking scheme described in Definition \ref{def:locking} for a uniform message with maximal entanglement available at the measurement. Choose $p$ and $\epsilon$ such that $\varepsilon > 8 \sqrt{1/KE}$ and $p > 2^{-2(CE)^2}$. Then the scheme will be an $\varepsilon$-locking locking scheme except with probability $p$ so long as the measurement superoperators are restricted to projective measurements and
\begin{equation*}
	k >  9+ 2 \log \frac{1}{\varepsilon} +\demi \log (c+e) \label{eqn:corunihigh}.
\end{equation*}
\end{cor}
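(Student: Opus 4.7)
The corollary is a specialization of Theorem \ref{thm:concentration}: all the heavy lifting (the concentration inequality, the $\varepsilon$-net on quasi-measurements, and the control of expected value and Lipschitz constant) has already been done, so the plan is just to substitute the correct parameters and collect the exponents. First I would note that a uniform $\sigma^M$ together with a maximally entangled $\omega^{EE'}$ reduces every $\Delta$ quantity appearing in Theorem \ref{thm:concentration} to $1$. Second, projective measurements on $CE'$ are exactly $(CE',1)$-quasi-measurements, so I would set $s = CE$ and $\eta = 1$. Plugging these into the theorem collapses the right-hand side of the probability bound to
\[
\exp \lp 2(CE)^2 \ln \frac{40 \sqrt{CE}}{\varepsilon} - \frac{(CKE)^2}{2^8} \lp \varepsilon - \frac{4}{\sqrt{KE}} \rp^{\!2} \rp.
\]

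Next I would dispose of the $4/\sqrt{KE}$ correction using the hypothesis $\varepsilon > 8/\sqrt{KE}$, which yields $(\varepsilon - 4/\sqrt{KE})^2 > \varepsilon^2/4$, so the negative term in the exponent is at least $(CKE)^2 \varepsilon^2 / 2^{10}$. To make the failure probability at most $2^{-2(CE)^2}$ (hence at most $p$), it suffices to force the total exponent below $-2(CE)^2 \ln 2$; dividing through by $(CE)^2$, this reduces to the single inequality
\[
K^2 \varepsilon^2 \geqslant 2^{11} \ln \frac{80 \sqrt{CE}}{\varepsilon}.
\]

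Finally, I would take the base-$2$ logarithm of both sides and, using the estimate $\log(x+y) \leqslant x + \log y$ from (\ref{eqn:logrel}), bound the right-hand side by first expanding $\ln(80\sqrt{CE}/\varepsilon) = \ln 80 + \tfrac{c+e}{2} \ln 2 + \ln(1/\varepsilon)$ and then taking another logarithm, so that $\log \ln(80\sqrt{CE}/\varepsilon) \leqslant 5 + \log(c+e) + \log(1/\varepsilon)$. Collecting terms, the resulting sufficient condition on $k$ rearranges cleanly to $k > 9 + 2 \log(1/\varepsilon) + \tfrac{1}{2} \log(c+e)$, as advertised. The only place requiring any real care is this last chain of logarithmic manipulations, where the additive constants must be tracked so that the promised $9$ and the $\tfrac{1}{2}\log(c+e)$ coefficient emerge in their stated form; conceptually, the proof is just a mechanical substitution into Theorem \ref{thm:concentration}, with no new probabilistic or operator-theoretic content.
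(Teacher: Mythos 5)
Your proposal is correct and follows essentially the same route as the paper's own proof: specialize Theorem \ref{thm:concentration} with $\Delta_{M,\infty}=\Delta_{M,2}=\Delta_{E,\infty}=\Delta_{E,2}=1$, $s=CE$, $\eta=1$, use $\varepsilon>8/\sqrt{KE}$ to absorb the offset and $p>2^{-2(CE)^2}$ to handle the failure probability, and then extract the condition on $k$ via repeated applications of $\log(x+y)\leqslant x+\log y$. The only differences are cosmetic bookkeeping (you keep $(\varepsilon')^2>\varepsilon^2/4$ where the paper loosens to $\varepsilon^2/16$, and you target $2^{-2(CE)^2}$ directly rather than carrying $\ln p$), and both routes land on the stated bound with room to spare.
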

\begin{proof}
Using Theorem \ref{thm:concentration}, we ensure that, except with probability $p$, our state is an $\varepsilon$-locking scheme provided that
\begin{equation*}
	2(CE)^2 \ln \lp \frac{40\sqrt{CE}}{\varepsilon} \rp - \frac{(CE)^2}{2^8} K^2 (\varepsilon')^2 < \ln p, 
\end{equation*}
where we've defined for the time being $\varepsilon'$ as $\varepsilon - 4/\sqrt{KE}$. A quick rearrangement of the terms reveals that the inequality will be satisfied if
\begin{equation}
	\frac{2^9}{(\varepsilon')^2} \ln \lp \frac{40\sqrt{CE}}{\varepsilon} \lp \frac{1}{p} \rp^{1/2(CE)^2} \rp < K^2 \label{eqn:unihighproof}.
\end{equation}
From our choice of $p$ we can easily see $(1/p)^{1/2(CE)^2}<2$ and from our choice of $\varepsilon$ we see that $2^9/(\varepsilon')^2 < 2^{13}/\varepsilon^2$. Thus inequality (\ref{eqn:unihighproof}) is satisfied when
\begin{equation*}
	\log \lp \frac{2^{13}}{\varepsilon^2} \rp + \log \lp \ln2 \log \lp \frac{80\sqrt{CE}}{\varepsilon}\rp \rp < 2k.
\end{equation*}
Finally, two applications of Equation (\ref{eqn:logrel}) reveal that the above is satisfied provided,
\begin{equation*}
	17 + 2 \log \frac{1}{\varepsilon} + \log \log \frac{1}{\varepsilon} + \log(c+e) < 2k.
\end{equation*}
Rearranging the terms we see that the above condition is satisfied provided inequality (\ref{eqn:corunihigh}) is satisfied, and we have completed the proof.
\end{proof}
Corollary \ref{cor:unihigh}, and its extension to arbitrary POVM measurements in Corollary \ref{cor:unihighPOVM} is a mathematical expression that ``generically, information is locked until it can be completely decoded.'' To arrive at this interpretation, recall from Equation (\ref{eqn:purified-decoding}) that to achieve a decoding error of $\epsilon$, the measurement must be supplied with the entanglement  through system $E'$ as well as a system $C$ satisfying $c - n > 2 \log(1/\epsilon)$. Of course, this condition could never be met if the constraint $n=c+k$ is assumed, but the constraint was only made for convenience to prove the locking results. Using it to re-express Corollary \ref{cor:unihigh}, though, we find that the information about the message is $\epsilon$-locked provided $c=n-k < n - 9 - 2 \log(1/\epsilon) - 1/2 \cdot \log(c+e)$. Therefore, regardless of the size of the message or the amount of entanglement, the message goes from being $\epsilon$-locked to being decodable with average probability of error at most $\epsilon$ with the transfer of $9 + 4 \log(1/\epsilon) + 1/2 \cdot \log(c+e)$ qubits.

At this point, we wish to study the dependence of the minimum key size $k$ on the various entropies of the message $M$ and the entanglement $E$.
%%%%%%%%%%%%%%%%%%%%%%%%%%%%%%%%%%%%%%%%%%%%%%%%
\begin{cor}[Locking for messages of bounded entropy with imperfect entanglement] \label{cor:modmod}
Consider the locking scheme described in Definition \ref{def:locking} for a message of bounded entropy with entanglement of a bounded fidelity available at the measurement. Choose $\varepsilon$ and $p$ satisfying
\begin{equation*}
	\varepsilon >\frac{ 8 \Delta_{E,\infty}}{\sqrt{KE}}, \hspace{1.0in} p > 2^{-2(CE)^2}.
\end{equation*}
Then the scheme will be an $\varepsilon$-locking locking scheme except with probability $p$ so long as the measurement superoperators are restricted to projective measurements and
\begin{equation}
	k' + \demi \Big( n - H_{\mathrm{min}}(M)_{\sigma} \Big) + \demi \Big( e - H_{\mathrm{min}}(E)_{\omega} \Big) < k, \label{eqn:cormodmod}
\end{equation}
where we've defined $k'$ as the lower bound given in Corollary \ref{cor:unihigh}, i.e.: $k' = 9 + 2 \log (1/\varepsilon) + 1/2 \cdot \log (c+e)$.
\end{cor}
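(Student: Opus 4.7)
The plan is to mimic the proof of Corollary \ref{cor:unihigh}, but to carry the $\Delta$ factors through the chain of simplifications instead of immediately setting them to $1$. As before, I would specialize Theorem \ref{thm:concentration} to projective measurements, so that $s = CE$ and $\eta = 1$, and rearrange the exponent so that it is below $\ln p$. Writing $\varepsilon' := \varepsilon - 4\Delta_{E,\infty}/\sqrt{KE}$, this reduces to requiring
\begin{equation*}
    \frac{2^{9}\,\Delta_{M,\infty}\Delta_{E,\infty}}{(\varepsilon')^{2}}\,
    \ln\!\lp \frac{40\sqrt{CE\,\Delta_{M,2}\Delta_{E,2}}}{\varepsilon}\lp\frac{1}{p}\rp^{\!1/2(CE)^{2}}\rp < K^{2}.
\end{equation*}

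The next step is to discharge the constants exactly as in Corollary \ref{cor:unihigh}: the assumption $p > 2^{-2(CE)^{2}}$ gives $(1/p)^{1/2(CE)^{2}} < 2$, while the assumption $\varepsilon > 8\Delta_{E,\infty}/\sqrt{KE}$ gives $\varepsilon' > \varepsilon/2$ and hence $2^{9}/(\varepsilon')^{2} < 2^{13}/\varepsilon^{2}$. I would then use the inequality $H_{\min} \leqslant H_{2}$, which translates into $\Delta_{M,2} \leqslant \Delta_{M,\infty}$ and $\Delta_{E,2} \leqslant \Delta_{E,\infty}$, to replace the mixed $\Delta$'s in the logarithm by their min-entropy counterparts. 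Taking $\log_{2}$ of both sides and recalling that $\log \Delta_{M,\infty} = n - H_{\min}(M)_{\sigma}$ and $\log \Delta_{E,\infty} = e - H_{\min}(E)_{\omega}$, the condition becomes
\begin{equation*}
    \lp n - H_{\min}(M)_{\sigma} \rp + \lp e - H_{\min}(E)_{\omega} \rp
    + 13 + 2\log\tfrac{1}{\varepsilon} + \log\log\lp \tfrac{80\sqrt{CE}}{\varepsilon}\rp < 2k,
\end{equation*}
after applying Equation (\ref{eqn:logrel}) twice, just as in the uniform-entropy proof.

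Dividing by $2$ and collecting the purely $(c,e,\varepsilon)$-dependent constants into $k' = 9 + 2\log(1/\varepsilon) + \tfrac{1}{2}\log(c+e)$ yields precisely the bound (\ref{eqn:cormodmod}) claimed in the statement. The main thing to watch is the bookkeeping: the $\Delta_{\infty}$ factors in front of the exponent contribute $\tfrac{1}{2}(n-H_{\min}(M))$ and $\tfrac{1}{2}(e-H_{\min}(E))$ to the required $k$, while the $\Delta_{2}$ factors inside the logarithm only contribute an additive $\log\log$ correction that was already absorbed into $k'$ in the uniform case. No new analytic ingredient is needed beyond $H_{\min} \leqslant H_{2}$ and the two hypotheses on $\varepsilon$ and $p$, which is what makes this corollary follow smoothly from Theorem \ref{thm:concentration} once the uniform case has been handled.
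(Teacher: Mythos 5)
Your overall strategy is exactly the paper's: specialize Theorem \ref{thm:concentration} to $(CE',1)$-quasi-measurements, rearrange to isolate $K^2$, discharge $(1/p)^{1/2(CE)^2}<2$ and $2^9/(\varepsilon')^2<2^{13}/\varepsilon^2$ from the hypotheses, and read off the entropy deficits from the $\Delta_{M,\infty}\Delta_{E,\infty}$ prefactor. The one step that does not go through as written is your treatment of the $\sqrt{\Delta_{M,2}\Delta_{E,2}}$ factor \emph{inside} the logarithm. Substituting $\Delta_{M,2}\leqslant\Delta_{M,\infty}$ and $\Delta_{E,2}\leqslant\Delta_{E,\infty}$ does not make that factor disappear: after taking logs you are left with
\begin{equation*}
\log\lp 7+\log\tfrac{1}{\varepsilon}+\demi(c+e)+\demi\log\lp\Delta_{M,\infty}\Delta_{E,\infty}\rp\rp,
\end{equation*}
whereas your displayed "the condition becomes" line contains only $\log\log\lp 80\sqrt{CE}/\varepsilon\rp$, i.e.\ the extra term has been silently dropped. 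Since $\Delta_{M,\infty}$ and $\Delta_{E,\infty}$ can be as large as $M$ and $E$, the dropped quantity is of order $\demi(n+e)$, which is comparable to $\demi(c+e)$ rather than negligible, so it cannot be waved away as "already absorbed in the uniform case" (in the uniform case it is identically zero).

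The gap is fixable, and the paper's own fix shows what is missing: bound $\Delta_{M,2}\leqslant M$ and $\Delta_{E,2}\leqslant E$ directly (the dimension bounds, not the passage to $\Delta_\infty$), and invoke the standing assumption $k<c$ so that $n<2c$ and hence $\demi\log\lp\Delta_{M,2}\Delta_{E,2}\rp\leqslant\demi(n+e)\leqslant c+e$. The argument of the outer logarithm is then $O(c+e)$, so after one more application of Equation (\ref{eqn:logrel}) it contributes $\log(c+e)$ plus an absolute constant and an extra $\log(1/\varepsilon)$ — all of which fit inside $2k'=18+4\log(1/\varepsilon)+\log(c+e)$ together with the deficits from the prefactor. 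So your conclusion is correct and your bookkeeping of the prefactor is right; what is missing is the explicit dimension bound and the use of $k<c$ needed to control the term inside the logarithm.
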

\begin{proof}
From Theorem \ref{thm:concentration}, we can ensure $\varepsilon$-locking except with probability $p$ by satisfying
\begin{equation*}
	2(CE)^2 \ln \lp \frac{40\sqrt{CE}}{\varepsilon} \sqrt{\Delta_{M,2} \Delta_{E,2}} \rp - \frac{(CE)^2}{2^8\Delta_{M, \infty} \Delta_{E, \infty}} K^2 (\varepsilon')^2 < \ln p, 
\end{equation*}
where we've defined for the time being $\varepsilon'$ as $\varepsilon - 4\Delta_{E, \infty} /\sqrt{KE}$. A quick rearrangement of the terms reveals that the inequality can be satisfied if
\begin{equation}
	\frac{2^9 \Delta_{M, \infty} \Delta_{E, \infty}}{(\varepsilon')^2} \ln \lp \frac{40 \sqrt{CE}}{\varepsilon} \sqrt{\Delta_{E,2} \Delta_{M, 2}} \lp \frac{1}{p} \rp^{1/2(CE)^2} \rp < K^2, \label{eqn:modmodproof}
\end{equation}
From our choice of $p$ we can easily see $(1/p)^{1/2(CE)^2}<2$ and from our choice of $\varepsilon$ we see that $2^9/(\varepsilon')^2 < 2^{13}/\varepsilon^2$. Thus the inequality in Equation (\ref{eqn:modmodproof}) is satisfied when
\begin{equation*}	
	13 + 2 \log \frac{1}{\varepsilon} + \log (\Delta_{M, \infty} \Delta_{E, \infty}) + \log \lp 7 + \log \frac{1}{\varepsilon} + \demi (c+e) + \demi \log (\Delta_{M, 2} \Delta_{E, 2}) \rp<2k.
\end{equation*}
However, we know that the maximum values of $\Delta_{M,2}$ and $\Delta_{E,2}$ are $M$ and $E$ respectively. Combined with our assumption that $k<c$, we can quickly reduce the above to,
\begin{equation*}
	18 + 3 \log \frac{1}{\varepsilon} + \log (c+e) + \Big( n - H_{\mathrm{min}}(M)_{\sigma} \Big) + \Big( e - H_{\mathrm{min}}(E)_{\omega} \Big)<2k.
\end{equation*}
Finally, we can identify $k'$ and give the result as desired.
\end{proof}

%%%%%%%%%%%%%%%%%%%%%%%%%%%%%%%%%%%%%%%%%%%%%%%%%
%%%%%%%%%%%%%%%%%%%%%%%%%%%%%%%%%%%%%%%%%%%%%%%%%
%%%%%%%%%%%%%%%%%%%%%%%%%%%%%%%%%%%%%%%%%%%%%%%%%
\section{Locking against generalized measurements}\label{sec:povms}
%%%%%%%%%%%%%%%%%%%%%%%%%%%%%%%%%%%%%%%%%%%%%%%%%
%%%%%%%%%%%%%%%%%%%%%%%%%%%%%%%%%%%%%%%%%%%%%%%%%
We will now show that the results of the previous section hold not only for projective measurements, but also for general POVMs, up to very minor changes in the various constants. The main difficulty at this point is that we cannot use Theorem \ref{thm:concentration} directly, since it only gives bounds for $(s,\eta)$-quasi-measurements. We must therefore show that a general POVM behaves essentially like an $(s,\eta)$-quasi-measurement for the purposes of the theorem. Our strategy will be probabilistic in nature: we will show that doing a general POVM $\mathcal{M}$ is mathematically equivalent to randomly selecting a measurement constructed from possible sequences of $s$ measurement results obtained from $\mathcal{M}$. With overwhelming probability, the sequence chosen will be an $(s,\eta)$-quasi-measurement, and Theorem \ref{thm:concentration} will then apply in this case.

We start by proving this last fact, namely that with very high probability, a sequence of $s$ measurement results will be an $(s,\eta)$-quasi-measurement, for an appropriately chosen $\eta$.

%The results in this section do not require that $E'$ have the same dimension as $E$, but we assume that is the case in all of our proofs because any POVM measurement involving a larger $E'$ space can be simulated by one with $E \cong E'$ by Naimark's dilation theorem~\cite{NC2000}.

\begin{lem}\label{lem:locking-operator-chernoff}
	Let $\mathcal{M}^{CE' \rightarrow X}$ be any complete measurement superoperator, with $\mathcal{M}(\pi) = \sum_i \alpha_i \ketb{i}{\chi_i}\pi\ketb{\chi_i}{i}$, and consider the operator-valued random variable $Y$ which takes the value $\ketbra{\chi_i}$ with probability $\alpha_i \bra{\chi_i} \pi \ket{\chi_i} = \alpha_i/CE'$. Then, $s$ i.i.d.\ copies of $Y$ will fail to be an $(s,\eta)$-quasi-measurement with probability at most $2CE' e^{-s(\eta-1)^2/CE' 2\ln 2}$.
\end{lem}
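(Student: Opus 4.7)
The plan is a direct application of the Ahlswede--Winter operator Chernoff bound to the empirical mean $\frac{1}{s}\sum_{j=1}^{s}Y_j$. First I would verify the probabilistic setup. Because $\mcal{M}$ is a complete measurement, $\sum_i \alpha_i \ketbra{\chi_i}=\ident^{CE'}$; taking the trace yields $\sum_i \alpha_i = CE'$, so that $\sum_i \alpha_i\bra{\chi_i}\pi\ket{\chi_i} = 1$ and $Y$ is indeed a well-defined probability distribution. Each realization of $Y$ is a rank-one projector, hence $0\leqslant Y \leqslant \ident^{CE'}$ deterministically, and
$$
\mbE[Y] \;=\; \sum_i \frac{\alpha_i}{CE'}\,\ketbra{\chi_i} \;=\; \frac{1}{CE'}\,\ident^{CE'} \;=\; \pi^{CE'},
$$
so every eigenvalue of $\mbE[Y]$ equals $\mu := 1/(CE')$.

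Next I would translate the defining inequality of Definition~\ref{def:quasi-measurement}: the sample $(Y_1,\ldots,Y_s)$ induces an $(s,\eta)$-quasi-measurement if and only if $\frac{CE'}{s}\sum_{j}Y_j \leqslant \eta\,\ident^{CE'}$, equivalently $\lambda_{\max}\bigl(\tfrac{1}{s}\sum_j Y_j\bigr) \leqslant (1+(\eta-1))\,\mu$. The failure event is therefore exactly an upward multiplicative deviation of factor $1+(\eta-1)$ above the mean in the operator order, which is exactly the quantity controlled by the Ahlswede--Winter operator Chernoff bound.

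Finally I would invoke that bound: for i.i.d.\ self-adjoint random variables $Y_j$ with $0\leqslant Y_j \leqslant \ident^d$ and $\mbE[Y_j]=\mu\,\ident^d$, and $\delta\geqslant 0$,
$$
\Pr\!\left[\lambda_{\max}\bigl(\tfrac{1}{s}\textstyle\sum_j Y_j\bigr) > (1+\delta)\mu\right] \;\leqslant\; d\,\exp\!\bigl(-s\mu\,\delta^{2}/(2\ln 2)\bigr).
$$
Substituting $d=CE'$, $\mu=1/(CE')$, and $\delta=\eta-1$ yields $CE'\,\exp\!\bigl(-s(\eta-1)^{2}/(CE'\cdot 2\ln 2)\bigr)$. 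The extra factor of $2$ in front of $CE'$ in the claimed bound absorbs a standard numerical constant from the version of the Ahlswede--Winter inequality being invoked (and covers the regime $\eta-1 > 1$, where the quadratic exponent would otherwise need to be weakened to a linear one, at worst doubling the prefactor).

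The argument is essentially mechanical once the probabilistic setup has been identified, so there is no genuine obstacle. The only point requiring care is matching the normalization of the POVM completeness relation against the condition $\mbE[Y]=\pi$, and keeping the constants in the Ahlswede--Winter exponent straight.
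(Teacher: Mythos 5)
Your proposal is correct and follows essentially the same route as the paper: identify $\mbE Y = \pi^{CE'}$, observe that failing to be an $(s,\eta)$-quasi-measurement is exactly the event $\frac{1}{s}\sum_j Y_j \nleqslant \eta\pi$, and apply the Ahlswede--Winter operator Chernoff bound with deviation parameter $\eta-1$ and $\alpha = 1/(CE')$. The only cosmetic difference is that the version of the bound quoted in the paper's appendix already carries the prefactor $2|A|$, so no separate absorption of a factor of $2$ is needed (though your remark about the regime $\eta - 1 > 1/2$ does touch on a genuine, if unaddressed, fine point, since the quoted bound is stated only for deviations up to $1/2$ while the paper later takes $\eta = 2$).
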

\begin{proof}
	$Y$ fulfills all the conditions for the operator Chernoff bound (Lemma \ref{lem:operator-chernoff}) to apply, with $\mbE Y = \pi^{CE'}$. This yields
\[ \Pr \left\{ \frac{1}{s} \sum_{j=1}^s Y_j \nleqslant \eta\pi \right\} \leqslant 2CE' e^{-s(\eta-1)^2/CE'2\ln 2},  \]
and the probability on the left is an upper bound on the probability that the $s$-tuple $Y_1,\dots, Y_s$ is not an $(s,\eta)$-quasi-measurement.
\end{proof}

We now use this to show that best general POVM cannot do much better than the best $(s,\eta)$-quasi-measurement:

\begin{lem}\label{lem:general-to-nk}
	It is true that
\begin{multline}
	\sup_{\mathcal{M}} \left\| \mathcal{M}\left( \rho^{MCE'} - \rho^M \otimes \rho^{CE'} \right) \right\|_1\\
	\leqslant \max_{\mathcal{M}' \in \mathcal{L}(s,\eta)} \left\| \mathcal{M}'\left( \rho^{MCE'} - \rho^M \otimes \rho^{CE'} \right) \right\|_1 + 4(CE')^2 e^{-s(\eta-1)^2/(CE'(2\ln 2))}, 
\end{multline}
where the supremum on the left-hand side is taken over all measurement superoperators.
\end{lem}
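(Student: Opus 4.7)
The plan is to simulate an arbitrary POVM $\mcal{M}$ by a random $(s,\eta)$-quasi-measurement built from $\mcal{M}$'s own rank-one decomposition, and then control the (rare) event where the random draw fails the quasi-measurement condition via Lemma \ref{lem:locking-operator-chernoff}. Since refining a POVM to rank-one elements cannot decrease $\lno \mcal{M}(\xi) \rno_1$ (any coarsening is a classical CPTP post-processing and so contractive in trace norm), it suffices to treat $\mcal{M}$ of the form $\mcal{M}(\rho) = \sum_i \alpha_i \ketb{i}{\chi_i}\rho\ketb{\chi_i}{i}$ with each $\ket{\chi_i}$ a unit vector, where throughout I write $\xi := \rho^{MCE'} - \rho^M \ox \rho^{CE'}$.

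Let $Y_1,\dots,Y_s$ be i.i.d.\ copies of the operator-valued random variable of Lemma \ref{lem:locking-operator-chernoff}, so $Y_j = \ketbra{\chi_{i_j}}$ with probability $\alpha_{i_j}/|CE'|$, and define the random superoperator $\mcal{N}_{\vec Y}(\rho) := \frac{|CE'|}{s} \sum_{j=1}^{s} \ketb{j}{\chi_{i_j}} \rho \ketb{\chi_{i_j}}{j}$, using fresh output labels $j \in \{1,\dots,s\}$. A direct computation of outcome probabilities shows that the $\vec Y$-dependent classical relabeling $\ket{j} \mapsto \ket{i_j}$, applied to $\mcal{N}_{\vec Y}$ and then averaged over $\vec Y$, reproduces $\mcal{M}$ exactly. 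Convexity of the trace norm combined with its monotonicity under the (CPTP) relabeling then yields
\[
\lno \mcal{M}(\xi) \rno_1 \;\leqslant\; \mbE_{\vec Y}\, \lno \mcal{N}_{\vec Y}(\xi) \rno_1.
\]

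Now split the right-hand side according to the event $A$ that $\vec Y$ satisfies $\frac{|CE'|}{s} \sum_j \ketbra{\chi_{i_j}} \leqslant \eta\, \ident$. On $A$ the random $\mcal{N}_{\vec Y}$ lies in $\mcal{L}(s,\eta)$, so $\lno \mcal{N}_{\vec Y}(\xi) \rno_1 \leqslant \max_{\mcal{M}' \in \mcal{L}(s,\eta)} \lno \mcal{M}'(\xi) \rno_1$. On $A^c$, use the crude uniform estimate: for any unit vector $\ket{\chi}^{CE'}$, the map $\zeta \mapsto \bra{\chi}\zeta\ket{\chi}$ is CP and non-trace-increasing (its adjoint sends $\ident^M$ to $\ketbra{\chi}\leqslant \ident$), so each of the $s$ diagonal blocks of $\mcal{N}_{\vec Y}(\xi)$ has trace norm at most $\lno \xi \rno_1 \leqslant 2$, giving $\lno \mcal{N}_{\vec Y}(\xi) \rno_1 \leqslant 2|CE'|$. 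Combining with $\Pr(A^c) \leqslant 2|CE'|\, e^{-s(\eta-1)^2 / (|CE'|\, 2\ln 2)}$ from Lemma \ref{lem:locking-operator-chernoff} produces the additive error term $4(CE')^2 e^{-s(\eta-1)^2/(CE'(2\ln 2))}$. Taking the supremum over $\mcal{M}$ then finishes the argument.

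The main obstacle is identifying the correct simulation: the naive replacement ``$\mcal{M} \rightsquigarrow \mcal{N}_{\vec Y}$'' fails because the two channels have different output alphabets, so one does not even have $\mbE_{\vec Y} \mcal{N}_{\vec Y} = \mcal{M}$. Inserting the $\vec Y$-dependent classical relabeling $j \mapsto i_j$ fixes this precisely, after which the chain ``$\mcal{M}$ equals an average of CPTP post-processings of $\mcal{N}_{\vec Y}$'' puts monotonicity and convexity of the trace norm to immediate use. Everything else is bookkeeping plus the crude $2|CE'|$ bound that makes the $A^c$ contribution harmless.
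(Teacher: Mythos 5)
Your proof is correct and follows essentially the same route as the paper's: reduce to rank-one measurements, simulate the POVM by $s$ i.i.d.\ draws of its elements, invoke Lemma \ref{lem:locking-operator-chernoff} for the good event, and use the crude $2|CE'|$ bound on the bad event. The only cosmetic difference is that the paper avoids your relabeling step by working directly with the block trace norms $\frac{CE'}{s}\sum_j \lno \Tr_{CE'}[Y_j\xi]\rno_1$, which gives $\lno\mcal{M}(\xi)\rno_1 = \mbE_{\vec Y}\lno\mcal{N}_{\vec Y}(\xi)\rno_1$ as an equality rather than the inequality you derive (either suffices).
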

\begin{proof}
Let $\mathcal{M}^{CE' \rightarrow X}$ be any complete measurement superoperator of the form $\mathcal{M}(\sigma) = \sum_i \alpha_i \ketb{i}{\chi_i}\sigma\ketb{\chi_i}{i}$, and define $Y$ to be the operator-valued random variable which takes value $\chi_i$ with probability $\alpha_i/CE'$. Let $Q$ be the event that $Y_1,\dots,Y_n$ is an $(s,\eta)$-quasi-measurement, where the $Y_i$ are i.i.d.\ with the same distribution as $Y$.
\begin{eqnarray*}
	\lefteqn{\lno \mcal{M} \lp \rho^{MCE'} - \rho^M \ox \rho^{CE'} \rp \rno_1} \\
	& = & \sum_i \alpha_i \lno \Tr_{CE'} \ls \chi_i \lp \rho^{MCE'} - \rho^M \ox \rho^{CE'} \rp \rs \rno_1\\
	& = & CE' \mbE_Y \lno \Tr_{CE'} \ls Y \lp \rho^{MCE'} - \rho^M \ox \rho^{CE'} \rp \rs \rno_1\\
	& = & \frac{CE'}{s} \mbE_{Y_1,\dots,Y_s} \sum_{i}^s \lno \Tr_{CE'} \ls Y_i \lp \rho^{MCE'} - \rho^M \ox \rho^{CE'} \rp \rs \rno_1.
\end{eqnarray*}
At this point we separate the expression into two terms, one for the event $Q$ and another for its complement.
\begin{eqnarray*}
	\lefteqn{\lno \mcal{M} \lp \rho^{MCE'} - \rho^M \ox \rho^{CE'} \rp \rno_1} \\
	& = & \frac{CE'}{s} \Pr\{ Q \} \mbE \left[ \left. \sum_{i=1}^s \lno \Tr_{CE'} \ls Y_i \lp \rho^{MCE'} - \rho^M \ox \rho^{CE'} \rp \rs \rno_1 \right| Q \right]\\
	& & \hspace{5mm} + \frac{CE'}{s} \Pr\{ \bar{Q} \} \mbE \left[ \left. \sum_{i=1}^s \lno \Tr_{CE'} \ls Y_i \lp \rho^{MCE'} - \rho^M \ox \rho^{CE'} \rp \rs \rno_1 \right| \bar{Q} \right]\\	
	& \leqslant &  \max_{\mathcal{M}' \in \mathcal{L}(s,\eta)} \lno \mcal{M}' \lp \rho^{MCE'} - \rho^M \ox \rho^{CE'} \rp \rno_1 \Pr\{ Q \} + 2CE' \Pr\{ \bar{Q} \} \\
	& \leqslant & \max_{\mathcal{M}' \in \mathcal{L}(s,\eta)} \lno \mcal{M}' \lp \rho^{MCE'} - \rho^M \ox \rho^{CE'} \rp \rno_1 + 4(CE')^2 e^{-s(\eta-1)^2/CE'2\ln 2}.
\end{eqnarray*}
In the above, the sum of trace distances given $Q$ was interpreted as executing an $(s,\eta)$-quasi-measurement described by $Y_1,\dots,Y_s$, and the same sum given $\bar{Q}$ was simply bounded by $2\eta$ (there are $s$ terms in the sum, each of which cannot exceed 2). In the last step, we have bounded $\Pr\{ \bar{Q} \}$ using Lemma \ref{lem:locking-operator-chernoff} and made use of the fact that we can assume without loss of generality that $|E|=|E'|$.

Finally, a non-complete measurement superoperator can always be decomposed into a complete one by splitting the POVM elements of rank greater than 1; this process always increases the trace distance.
\end{proof}

What we have achieved with the above statement is to show that the decoupling distance for a general measurement superoperator is very close to the decoupling distance of an $(s,\eta)$-quasi-measurement. All that is now left to do is to use Theorem \ref{thm:concentration} to bound the supremum over $(s,\eta)$-quasi-measurements, and we get the main theorem of this section:

\begin{thm}[Locking theorem for general measurements]\label{thm:concentrationPOVM}
Given the quantum state $\rho^{MCKEE'} = U^{CKE} \cdot (\sigma^{MCK} \ox \omega^{EE'})$ where $U$ is a random unitary operator chosen according to the Haar measure, $\sigma$ is as defined in Equation (\ref{eqn:sigmadef}) and $\omega^{EE'}$ a bipartite pure state, then
\begin{multline*}
	\PrU \lb \underset{\mcal{M}}{\mathrm{sup}} \lno \mcal{M} \lp \rho^{MCE'}\rp - \mcal{M} \lp \rho^M \ox \rho^{CE'} \rp \rno_1 >  \varepsilon \rb \\
	\leqslant \mathrm{exp} \lp 9(CE)^2\ln(CE) \ln \lp\frac{40\sqrt{CE}}{ \varepsilon} \sqrt{\Delta_{M,2}\Delta_{E,2}} \rp - \frac{(CKE)^2}{2^{10} \Delta_{M, \infty} \Delta_{E, \infty}} \lp \varepsilon - \frac{8 \Delta_{E, \infty}}{\sqrt{KE}} \rp^2 \rp.
\end{multline*}
In the above, $\Delta_{M, \infty}$, $\Delta_{M,2}$, $\Delta_{E,2}$ and $\Delta_{E, \infty}$ are as defined in Equations (\ref{eqn:deltaMmindef}), (\ref{eqn:deltaM2def}), (\ref{eqn:deltaE2def}) and (\ref{eqn:deltaEmindef}). 
\end{thm}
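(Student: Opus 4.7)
The plan is to combine Lemma \ref{lem:general-to-nk}, which bridges arbitrary measurement superoperators and $(s,\eta)$-quasi-measurements at the cost of an additive operator-Chernoff error, with Theorem \ref{thm:concentration}, which already bounds the concentration of the distinguishability from independence over $\mathcal{L}(s,\eta)$. The free parameters $s$ and $\eta$ will be tuned so that the Chernoff error is absorbed into the $\tfrac{4\Delta_{E,\infty}}{\sqrt{KE}}$ correction already present in Theorem \ref{thm:concentration}, at the price of slightly weaker constants in the final exponent.

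I would first fix $\eta = 2$. This yields $(\eta-1)^2 = 1$, producing the cleanest form of the Chernoff exponent in Lemma \ref{lem:general-to-nk}, and it gives $\eta^2 = 4$, so that the $2^{8}\eta^2$ in the denominator of Theorem \ref{thm:concentration} becomes the $2^{10}$ claimed in the conclusion. I would then take $s = \tfrac{9}{2}\, CE \ln(CE)$ (rounded up to an integer). A short estimate shows that with these choices the Chernoff tail $4(CE')^2\, e^{-s/(CE' \cdot 2\ln 2)}$ of Lemma \ref{lem:general-to-nk} is bounded above by $\tfrac{4\Delta_{E,\infty}}{\sqrt{KE}}$ in the regime where the statement is meaningful (in particular the regime $\varepsilon > 16\Delta_{E,\infty}/\sqrt{KE}$ used in Theorem \ref{thm:locking-intro}). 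Crucially, the net-size factor $2sCE$ of Theorem \ref{thm:concentration} then becomes $9(CE)^2 \ln(CE)$, matching the announced exponent.

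Assembling the pieces, if the supremum on the left-hand side exceeds $\varepsilon$, Lemma \ref{lem:general-to-nk} forces the maximum over $\mathcal{L}(s,2)$ of the same quantity to exceed $\varepsilon - \tfrac{4\Delta_{E,\infty}}{\sqrt{KE}}$. Applying Theorem \ref{thm:concentration} with this shifted threshold, the native correction term $\varepsilon - \tfrac{4\Delta_{E,\infty}}{\sqrt{KE}}$ of Theorem \ref{thm:concentration} is replaced by $\varepsilon - \tfrac{8\Delta_{E,\infty}}{\sqrt{KE}}$, producing exactly the claimed expression. The main obstacle I anticipate is the bookkeeping step of verifying that the chosen $s$ grows fast enough in $CE$ to dominate the Chernoff error uniformly in the other dimensions; this is precisely what forces the extra logarithmic factor $\ln(CE)$ into the net-size exponent relative to Theorem \ref{thm:concentration}, and it is responsible for the specific constant $9$ appearing in the statement.
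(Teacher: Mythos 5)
Your proposal follows the paper's proof essentially step for step: the paper likewise feeds Lemma \ref{lem:general-to-nk} into Theorem \ref{thm:concentration}, fixes $\eta=2$ (so that $2^8\eta^2$ becomes $2^{10}$), and picks $s=(6\ln 2)\,CE\ln(CE)$ --- numerically almost your $\tfrac{9}{2}CE\ln(CE)$ --- so that the Chernoff tail equals $4/(CE)\leqslant 4\Delta_{E,\infty}/\sqrt{KE}$ and $2sCE\leqslant 9(CE)^2\ln(CE)$. The final absorption of the tail into the offset, turning $\varepsilon-\tfrac{4\Delta_{E,\infty}}{\sqrt{KE}}$ into $\varepsilon-\tfrac{8\Delta_{E,\infty}}{\sqrt{KE}}$, is exactly the paper's step.

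The one ingredient you are missing is the treatment of $E'$. This theorem deliberately makes no assumption on $|E'|$ (the text just before the corollaries emphasizes this), whereas Theorem \ref{thm:concentration} is stated for $E'\cong E$, and the Chernoff tail in Lemma \ref{lem:general-to-nk} is $4(CE')^2e^{-s(\eta-1)^2/(CE'\cdot 2\ln 2)}$ with $E'$, not $E$. If $|E'|\gg|E|$, your choice $s=\Theta(CE\ln(CE))$ no longer makes that tail small, and the discretization underlying Theorem \ref{thm:concentration} lives on the too-large space $CE'$. The paper dispatches this at the outset: since $\omega^{EE'}$ is pure, one may replace $E'$ by the support of $\omega^{E'}$, whose dimension is at most $|E|$, without altering any measurement statistics. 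With that one-line reduction inserted, your argument goes through as written.
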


\begin{proof}
We may assume without loss of generality that $|E'| \leq |E|$. If not, let $E''$ be the range of $\rho^{E'} = \omega^{E'}$. Because $\omega$ is pure, $|E''| = \rank \omega^{E'} \leq |E|$. Let $V$ be the isometric embedding $E'' \hookrightarrow E'$ and $\rho^{MCE''}$ the projection of $\rho$ to $MCE''$. Then for any POVM measurement superoperator $\mcal{M}^{CE' \rightarrow X}$,
\begin{equation*}
\mcal{M}(\rho^{MCE'}) = \mcal{M}( V \rho^{MCE''} V^\dagger )
\end{equation*}
so measuring $\mcal{M}$ or $M \circ (V \cdot V^\dagger)$ will yield exactly the same measurement statistics. But the latter is a POVM on $CE''$ and $E''$ satisfies the desired dimension bound.

	Substituting the results of Lemma \ref{lem:general-to-nk} into those of Theorem \ref{thm:concentration}, we get the following:
\begin{multline}
	\PrU \lb \underset{\mcal{M}}{\mathrm{sup}} \lno \mcal{M} \lp \rho^{MCE'} - \rho^M \otimes \rho^{CE'} \rp \rno_1 \geqslant \varepsilon \rb \leqslant \mathrm{exp} \lp 2sCE \ln \lp\frac{40\sqrt{CE}}{ \varepsilon} \sqrt{\Delta_{M,2}\Delta_{E,2}} \rp  \right. \\
\left. - \frac{(CKE)^2}{2^8\eta^2 \Delta_{M, \infty} \Delta_{E, \infty}} \lp \varepsilon - 4(CE)^2 e^{-s(\eta-1)^2/(CE(2\ln 2))}  - \frac{4 \Delta_{E, \infty}}{\sqrt{KE}} \rp^2 \rp. \label{eqn:concentrationPOVMproof}
\end{multline}
We now choose $\eta=2$ and $s=(6 \ln 2) CE \ln CE$ and note that this immediately implies
\begin{equation*}
	 2(CE)^2 e^{-s(\eta-1)^2/CE2\ln 2} = \frac{2}{CE}.
\end{equation*}
We absorb this factor into our ``offset" for the $\varepsilon$ factor,
\begin{equation*}
\lp \varepsilon - 4(CE)^2 e^{-s(\eta-1)^2/CE2\ln 2} - \frac{4 \Delta_{E, \infty}}{\sqrt{KE}} \rp^2 \geqslant \lp \varepsilon - \frac{8 \Delta_{E, \infty}}{\sqrt{KE}}\rp^2.
\end{equation*} 
Substituting the choices for $s$ and $\eta$ into Equation \ref{eqn:concentrationPOVMproof} reveals the desired result.
\end{proof}

We now wish to express, in qubits, a lower bound for the key size for a given probability $p$ and a given $\varepsilon$. The relevant variables are $M=2^n$, $C=2^c$, $K=2^k$, and $E=2^e$. Unlike in the previous section, it is unnecessary to make any assumptions about the dimension of $E'$.
%%%%%%%%%%%%%%%%%%%%%%%%%%
%%%%%%%%%%%%%%%%%%%%%%%%%%
\begin{cor}[Locking against POVMs for a uniform message with maximal entanglement] \label{cor:unihighPOVM}
Consider the locking scheme described in Definition \ref{def:locking} for a uniform message and maximal entanglement available at the measurement. Choose $p$ and $\epsilon$ such that $\varepsilon > 16 \sqrt{1/KE}$ and $p > 2^{-9(CE)^2}$. Then the scheme will be an $\varepsilon$-locking locking scheme except with probability $p$ so long as
\begin{equation*}
	11 + 2 \log \frac{1}{\varepsilon} + \log (c+e) < k.
\end{equation*}
\end{cor}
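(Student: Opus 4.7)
The plan is to mirror the proof of Corollary \ref{cor:unihigh} almost verbatim, but feeding in the POVM concentration bound from Theorem \ref{thm:concentrationPOVM} instead of the projective bound from Theorem \ref{thm:concentration}. First I would substitute the assumptions of a uniform message and maximal entanglement, which sets $\Delta_{M,\infty}=\Delta_{M,2}=\Delta_{E,\infty}=\Delta_{E,2}=1$, so that Theorem \ref{thm:concentrationPOVM} collapses to
\begin{equation*}
\Pr_U\{\cdots\} \leqslant \exp\lp 9(CE)^2 \ln(CE)\,\ln\lp \tfrac{40\sqrt{CE}}{\varepsilon} \rp - \tfrac{(CKE)^2}{2^{10}} \lp \varepsilon - \tfrac{8}{\sqrt{KE}} \rp^2 \rp.
\end{equation*}
The hypothesis $\varepsilon > 16/\sqrt{KE}$ ensures that the offset $8/\sqrt{KE}$ is at most $\varepsilon/2$, so the subtracted term is at least $\varepsilon^2/4$. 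This is the exact analogue of how Corollary \ref{cor:unihigh} absorbed its $4/\sqrt{KE}$ offset.

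Next I would require the exponent to be at most $\ln p$, rearrange to isolate $K^2$, and use the hypothesis $p > 2^{-9(CE)^2}$ (the POVM analogue of $p > 2^{-2(CE)^2}$) to bound $(1/p)^{1/9(CE)^2\ln(CE)} \leqslant 2$, so that the $p$-dependence disappears into a constant. This gives a sufficient condition of the form
\begin{equation*}
\tfrac{2^{12}}{\varepsilon^2}\,\ln\lp \tfrac{40\sqrt{CE}}{\varepsilon}\,\cdot 2 \rp \cdot 9 \ln(CE) < K^2,
\end{equation*}
at which point I would take $\log_2$ of both sides and apply the identity $\log(x+y)\leqslant x+\log y$ from Equation (\ref{eqn:logrel}) twice, exactly as was done in Corollary \ref{cor:unihigh}: once to strip the outer $\log\log$ term off of $\log(40\sqrt{CE}/\varepsilon)$, and once to handle the product $\log(c+e)+\log\log(1/\varepsilon)$. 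Collecting constants should yield $2k > 22 + 4\log(1/\varepsilon) + 2\log(c+e) + \text{lower-order terms}$, i.e., the advertised bound $k > 11 + 2\log(1/\varepsilon) + \log(c+e)$.

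The main obstacle, relative to the projective case, is bookkeeping around the extra $\ln(CE)$ factor in the net-size term: in Corollary \ref{cor:unihigh} the net term contributed $\tfrac{1}{2}\log(c+e)$ to the key-size bound, whereas here the additional $\ln(CE) \sim (c+e)\ln 2$ factor doubles its contribution to $\log(c+e)$. One has to be careful to apply $\log(x+y)\leqslant x+\log y$ in the right order so that the $(c+e)$ term ends up inside an outer logarithm rather than sitting linearly in the bound; otherwise one would pick up a spurious $c+e$ in $k$, which is much worse than $\log(c+e)$. Apart from this, the calculation is routine arithmetic matching the constants $11$ (versus $9$ in the projective case) and $\log(c+e)$ (versus $\tfrac{1}{2}\log(c+e)$).
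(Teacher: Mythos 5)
Your proposal is correct and follows essentially the same route as the paper: apply Theorem \ref{thm:concentrationPOVM} with all $\Delta$'s set to $1$, use $p > 2^{-9(CE)^2}$ together with $\ln(CE)\geqslant 1$ to absorb the $\ln(1/p)$ term into the logarithm (turning $40$ into $80$), use $\varepsilon > 16/\sqrt{KE}$ to replace $(\varepsilon')^2$ by $\varepsilon^2/4$, and then take logarithms and apply Equation (\ref{eqn:logrel}) to collect the constants into $k > 11 + 2\log(1/\varepsilon) + \log(c+e)$. Your observation that the extra $\ln(CE)$ factor coming from $s = O(CE\log(CE))$ is what promotes the $\tfrac{1}{2}\log(c+e)$ of the projective case to $\log(c+e)$ is exactly the right accounting.
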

\begin{proof}
From Theorem \ref{thm:concentrationPOVM} we can ensure $\varepsilon$-locking except with probability $p$ given
\begin{equation*}
	9 \ln(CE) \ln \lp \frac{40\sqrt{CE}}{\varepsilon} \rp + \frac{1}{9(CE')^2} \ln \frac{1}{p} < \frac{K^2(\varepsilon')^2}{2^{10}},
\end{equation*}
where we've defined for the time being $\varepsilon'$ as $\varepsilon - 8/\sqrt{KE}$. We now make use of our lower bound for $p$ as well as the assumption that $\ln(CE) \geqslant 1$ to show that the above can satisfied provided
\begin{equation*}
	9\ln(CE) \ln \lp \frac{80\sqrt{CE}}{\varepsilon} \rp < \frac{K^2 (\varepsilon')^2}{2^{10}}.
\end{equation*}
Solving the above equation for $k$ and applying the condition on $\varepsilon$ reveals that the bound can be satisfied by the statement in the lemma.
\end{proof}

%%%%%%%%%%%%%%%%%%%%%%%%%%%%%%%%%%%%%%%%%%%%%%%%
\begin{cor}[Locking against POVMs for messages of bounded entropy with imperfect entanglement] \label{cor:modmodPOVM}
Consider the locking scheme described in Definition \ref{def:locking} for a uniform message and maximal entanglement available at the measurement. Choose $p$ and $\epsilon$ such that
\begin{equation*}
	\varepsilon >\frac{ 16 \Delta_{E,\infty}}{\sqrt{KE}}, \hspace{1.0in} p > 2^{-9(CE)^2}.
\end{equation*}
Then the scheme will be an $\varepsilon$-locking locking scheme except with probability $p$ so long as
\begin{equation}
	k' + \demi \Big( n - H_{\mathrm{min}}(M)_{\sigma} \Big) + \demi \Big( e - H_{\mathrm{min}}(E)_{\omega} \Big) < k, \label{eqn:cormodmodPOVM}
\end{equation}
where we've defined $k'$ as the lower bound given in Corollary \ref{cor:unihighPOVM}, i.e.: $k' = 11 + 2 \log (1/\varepsilon) + \log (c+e)$.
\end{cor}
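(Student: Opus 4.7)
The approach parallels that of Corollary \ref{cor:modmod}, with the POVM concentration bound Theorem \ref{thm:concentrationPOVM} in place of Theorem \ref{thm:concentration}. I would begin by requiring the exponent on the right-hand side of Theorem \ref{thm:concentrationPOVM} to be at most $\ln p$, and rearrange this into the sufficient condition
\[ \frac{2^{10}\,\Delta_{M,\infty}\Delta_{E,\infty}}{(\varepsilon')^2}\left[ 9\ln(CE)\,\ln\!\left(\frac{40\sqrt{CE\,\Delta_{M,2}\Delta_{E,2}}}{\varepsilon}\right) + \frac{1}{9(CE)^2}\ln\!\frac{1}{p} \right] < K^2, \]
where $\varepsilon' := \varepsilon - 8\Delta_{E,\infty}/\sqrt{KE}$ serves as shorthand.

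The two hypotheses of the corollary collapse several clutter terms. The bound $\varepsilon > 16\Delta_{E,\infty}/\sqrt{KE}$ gives $\varepsilon' \geq \varepsilon/2$, so $(\varepsilon')^{-2} \leq 4\varepsilon^{-2}$. The bound $p > 2^{-9(CE)^2}$ forces $\frac{1}{9(CE)^2}\ln(1/p) \leq \ln 2$, which is dominated by $\ln(CE) \geq 1$ and therefore absorbed into the leading log term by at most a constant factor. After these simplifications I would, exactly as in Corollary \ref{cor:modmod}, bound the ``$2$-flavoured'' deviations inside the log by their maxima $\Delta_{M,2} \leq M = 2^n$ and $\Delta_{E,2} \leq E = 2^e$, and then take a base-$2$ logarithm of the whole inequality.

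At this point repeated use of the identity $\log(x+y) \leq x + \log y$ from Equation (\ref{eqn:logrel}) eliminates the nested $\log\log$ and $\log(c+e)$-type contributions. The factor $\log(\Delta_{M,\infty}\Delta_{E,\infty}) = (n - H_{\min}(M)_\sigma) + (e - H_{\min}(E)_\omega)$ arising from the prefactor then appears additively on the right, and dividing through by $2$ produces precisely the two half-entropy-deficit terms in the statement. The remaining purely numerical and $(c+e)$-dependent pieces should match, up to rounding of integer constants, the value $k' = 11 + 2\log(1/\varepsilon) + \log(c+e)$ of Corollary \ref{cor:unihighPOVM}.

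The main obstacle is bookkeeping rather than conceptual: one must carry every constant carefully through the chain of inequalities so that in the limit of a uniform message and maximal entanglement ($\Delta_{M,\infty} = \Delta_{E,\infty} = \Delta_{M,2} = \Delta_{E,2} = 1$) the bound collapses cleanly to Corollary \ref{cor:unihighPOVM}. Verifying this limiting case is both the sanity check on the constants and, effectively, the only nontrivial step of the argument — the entropy-deficit corrections then attach to $k'$ for free through the $\Delta_{\cdot,\infty}$ prefactor.
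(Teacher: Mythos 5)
Your proposal is correct and follows essentially the same route as the paper's proof: apply Theorem \ref{thm:concentrationPOVM}, use the hypotheses on $p$ and $\varepsilon$ to absorb the $\ln(1/p)$ term and replace $(\varepsilon')^{-2}$ by $4\varepsilon^{-2}$, bound $\Delta_{M,2}\leq M$ and $\Delta_{E,2}\leq E$, take logarithms using Equation (\ref{eqn:logrel}), and identify the residual numerical and $\log(c+e)$ terms with $k'$ from Corollary \ref{cor:unihighPOVM}. No gaps.
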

\begin{proof}
From Theorem \ref{thm:concentration}, we can ensure $\varepsilon$-locking with probability $p$ by satisfying,
From Theorem \ref{thm:concentrationPOVM} we can ensure $\varepsilon$-locking with probability $p$ given
\begin{equation*}
	9 \ln(CE) \ln \lp \frac{40\sqrt{CE}\sqrt{\Delta_{M,2} \Delta_{E,2}}}{\varepsilon} \rp + \frac{1}{9(CE')^2} \ln \frac{1}{p} < \frac{K^2(\varepsilon')^2}{2^{10}\Delta_{M, \infty} \Delta_{E, \infty}},
\end{equation*}
where we've defined for the time being $\varepsilon'$ as $\varepsilon - 8/\sqrt{KE}$.  We now make use of our lower bound for $p$ as well as the assumption that $\ln(CE) \geqslant 1$ to show that the above can satisfied provided
\begin{equation*}
	9\ln(CE) \ln \lp \frac{80\sqrt{CE}\sqrt{\Delta_{M,2} \Delta_{E,2}}}{\varepsilon} \rp < \frac{K^2 (\varepsilon')^2}{2^{10}\Delta_{M, \infty} \Delta_{E, \infty}} .
\end{equation*}
Next, we use our definition for $\varepsilon'$ and our bound for $\varepsilon$ and we solve for $k$ to find that the bound is satisfied provided
\begin{equation*}
	21 + 3 \log \frac{1}{\varepsilon} + 2 \log(c+e) + \log (\Delta_{M, \infty} \Delta_{E, \infty}) < 2k .
\end{equation*}
Finally, we can identify $k'$ and give the result as desired.
\end{proof}

The lower bound requirement on $\varepsilon$ in Corollary~\ref{cor:modmodPOVM} limits the corollary's range of applicability to situations in which $H_{\mathrm{min}}(E)_\omega$ is not too small. Specifically, the requirement can be rewritten in light of (\ref{eqn:cormodmodPOVM}) as
\begin{equation*}
2 \log (c+e) + \left( n - H_{\mathrm{min}}(M)_\sigma \right) + 3 H_{\mathrm{min}}(E)_\omega
>
e + \mathrm{const}.
\end{equation*}
So, at least when the message is uniform, the requirement is roughly that $H_{\mathrm{min}}(E)_\sigma > e/3$. We suspect that this requirement can be eliminated but leave it as an open problem to find a way to do so.

%%%%%%%%%%%%%%%%%%%%%%%%%%%%%%%%%%%%%%%%%%%%%%%%%
%%%%%%%%%%%%%%%%%%%%%%%%%%%%%%%%%%%%%%%%%%%%%%%%%
%%%%%%%%%%%%%%%%%%%%%%%%%%%%%%%%%%%%%%%%%%%%%%%%%
\section{Locking versus decodability}\label{sec:packing}
%%%%%%%%%%%%%%%%%%%%%%%%%%%%%%%%%%%%%%%%%%%%%%%%%
%%%%%%%%%%%%%%%%%%%%%%%%%%%%%%%%%%%%%%%%%%%%%%%%%
%%%%%%%%%%%%%%%%%%%%%%%%%%%%%%%%%%%%%%%%%%%%%%%%%
The previous sections have shown that, under certain conditions, no classical information is recoverable by the receiver. Here we aim to show that, in many regimes, these results are essentially optimal. We do this by showing that if we make the key only very slightly smaller, then with overwhelming probability, the classical message will be decodable with a negligible error probability. In fact we prove even more: in this regime where the information is decodable, the decoder can even decode a \emph{purification} of the classical message. In other words, in this generic scenario where $U$ is chosen with no preferred basis, either all \emph{classical} information is locked away, or we can decode \emph{quantum} information.  This is formalized in the next theorem.

In order to study decodability, we must discard the identifications made in Figure \ref{fig:locking} to study locking and return to the original scenario described by Figure \ref{fig:scenario}. Whereas $k$ was previously the number of qubits in system $K$, there is no system $K$ in Figure \ref{fig:locking}. Instead, we define $k=n-c$, which is consistent with its earlier definition. Now, however, it might be the case that $k$ is negative since decoding could require the cyphertext to be longer than the message.

The following theorem generalizes the discussion of Section~\ref{sec:how-to-lock} to nonuniform messages and imperfect entanglement.
\begin{thm} \label{thm:decode}
	If $U$ is chosen according to the Haar measure, then the information in the scheme described in Figure \ref{fig:scenario} is such that there exists a decoding CPTP map $\mathcal{D}^{CE' \rightarrow N}$ such that
\begin{equation*}
	\left\| \mathcal{D}\left( \tr_{D} \left[U^{NE \rightarrow CD}\left( \sigma^{RMN} \otimes \omega^{E'E} \right) (U^{NE \rightarrow CD})\mdag\right] \right) - \sigma^{RMN}\right\|_1 \leqslant \varepsilon
\end{equation*}
asymptotically almost surely, where $\sigma^{RMN}$ is a purification of $\sigma^{MN}$, as long as
\[ k \leqslant \demi \Big( n - H_{\max}(M)_{\sigma} \Big) - \demi \Big( e - H_2(E)_{\omega} \Big) - 2 \log(1/\varepsilon) - 4 \]
\end{thm}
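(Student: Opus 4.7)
The plan is a standard decoupling-plus-Uhlmann argument, together with concentration of measure for the ``asymptotically almost surely'' conclusion. First, I would reduce the existence of a decoder to an approximate decoupling of the complementary system $D$ from the reference $RM$. Because $|\sigma\rangle^{RMN} \otimes |\omega\rangle^{EE'}$ is pure and $U^{NE \to CD}$ is an isometry, the post-unitary state $|\Phi\rangle^{RMCDE'}$ is also pure. By Uhlmann's theorem combined with Fuchs--van de Graaf, if $\lno \Phi^{RMD} - \sigma^{RM} \otimes \pi^D \rno_1 \leqslant \delta$ then there exists a CPTP map $\mcal{D}^{CE' \to N}$ with $\lno \mcal{D}(\Phi^{RMCE'}) - \sigma^{RMN} \rno_1 \leqslant 2\sqrt{\delta}$, so it suffices to bound the decoupling error on the right-hand side.

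Second, I would apply the FQSW decoupling theorem (Theorem IV.1 of the cited FQSW paper) to the input state $\sigma^{RMN} \otimes \omega^E$ with ``input'' system $NE$ (acted on by $U$) and ``reference'' system $RM$; the system $E'$ plays no role here, since we only need decoupling from $RM$. The FQSW bound produces an expected decoupling error of the form $\sqrt{|D|/|C|}$ times an entropy-dependent factor, and thanks to the product structure of the input state this factor splits into independent contributions from $\sigma$ and $\omega$. A short calculation --- using that $\sigma^{RMN}$ is pure, that $\sigma^{RM}$ has spectrum $\{p_m\}$, and that the $M$-side contribution collects as $\tr\sqrt{\sigma^{M}} = \sum_m \sqrt{p_m}$ --- yields
\begin{equation*}
\expU \lno \Phi^{RMD}(U) - \sigma^{RM} \otimes \pi^D \rno_1 \leqslant C_0 \cdot 2^{\, k \,-\, \demi(n - H_{\max}(M)_\sigma) \,+\, \demi(e - H_2(E)_\omega)}
\end{equation*}
for an absolute constant $C_0$, after substituting $|D|/|C| = 2^{e + 2k - n}$.

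Third, I would promote this expectation bound to an a.a.s.\ statement using Theorem~\ref{thm:levy}. The function $U \mapsto \lno \Phi^{RMD}(U) - \sigma^{RM} \otimes \pi^D \rno_1$ is Lipschitz in $U$ with respect to the Hilbert--Schmidt norm (with Lipschitz constant comparable to the one computed in Lemma~\ref{lipfUunent}), so its deviation from the mean is exponentially concentrated in the total dimension $|C||K||E|$; the probability of exceeding twice the expectation is then doubly-exponentially small, so the bound holds asymptotically almost surely. Setting $2\sqrt{\delta} \leqslant \varepsilon$ and rearranging yields the claimed condition
\begin{equation*}
k \leqslant \demi(n - H_{\max}(M)_\sigma) - \demi(e - H_2(E)_\omega) - 2\log(1/\varepsilon) - 4.
\end{equation*}

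The main obstacle is obtaining $H_{\max}(M)_\sigma$ rather than the weaker $H_2(M)_\sigma$ that the most direct form of FQSW would produce. The sharpening exploits the classical--quantum structure of $\sigma^{MN}$: the $M$-side contribution to the decoupling integral comes from a $\sqrt{\sigma^M}$ factor (giving $\sum_m \sqrt{p_m} = 2^{\demi H_{\max}(M)_\sigma}$) rather than from $(\sigma^M)^2$ (which would give $\sum_m p_m^2 = 2^{-H_2(M)_\sigma}$). Carrying the gap $H_{\max} \geqslant H_2$ through the swap-operator integral without collapsing it by Cauchy--Schwarz --- either by factoring that integral directly, in analogy with Lemma~\ref{expfUunent}, or by appealing to a smooth one-shot decoupling theorem with $\sigma^{RM}$ replaced by its $H_{\max}$-optimal smoothing --- is the delicate step.
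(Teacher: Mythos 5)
Your proposal follows essentially the same route as the paper: a decoupling bound in expectation for the $RM$--$D$ marginal, concentration of measure to upgrade it to an a.a.s.\ statement, Uhlmann's theorem to convert decoupling into a decoder $\mathcal{D}^{CE'\to N}$, and the same arithmetic with $|D|/|C| = 2^{e+2k-n}$ to land on the stated condition on $k$. The one ``delicate step'' you flag --- extracting $2^{\frac{1}{2}H_{\max}(M)_\sigma - \frac{1}{2}H_2(E)_\omega}\sqrt{D/C}$ rather than an $H_2(M)$ version --- is handled in the paper simply by citing a ready-made decoupling theorem (Theorem 3.7 of the Dupuis thesis) that already yields exactly this factor, so no additional sharpening argument is needed.
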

\begin{proof}
	Using Theorem 3.7 from \cite{fred-these}, we get that
\begin{equation*}
	\mbE_U \left\| \tr_{C}\left[ U^{NE \rightarrow CD}\left( \sigma^{RMN} \otimes \omega^E \right) (U^{NE \rightarrow CD})\mdag \right] - \sigma^{RM} \otimes \rho^{D} \right\|_1 \leqslant 2^{\demi H_{\max}(M)_{\sigma} - \demi H_{2}(E)_{\omega}} \sqrt{\frac{D}{C}}.
\end{equation*}
It can also be shown that the value of this trace distance will asymptotically almost surely not exceed twice this bound. Under this condition, we have that:
\begin{equation*}
	\left\| \tr_{C}\left[ U^{NE \rightarrow CD}\left( \sigma^{RMN} \otimes \omega^E \right) (U^{NE \rightarrow CD})\mdag \right] - \sigma^{RM} \otimes \rho^{D} \right\|_1 \leqslant 2 \times 2^{\demi H_{\max}(M)_{\sigma} - \demi H_2(E)_{\omega}} \sqrt{\frac{D}{C}}.
\end{equation*}
Uhlmann's Theorem then implies the existence of a partial isometry $V^{CE' \rightarrow NG}$ and of a purification of $\rho^{D}$ on system $G$ that we call $\theta^{DG}$ such that
\begin{equation*}
	\left\| VU \left( \sigma^{RMN} \otimes \omega^{E'E} \right) U\mdag V\mdag - \sigma^{RMN}  \otimes \theta^{DG} \right\|_1 \leqslant 4 \left( 2^{H_{\max}(M)_{\sigma} - H_2(E)_{\omega}} \frac{D}{C} \right)^{1/4}.
\end{equation*}
Defining $\mathcal{D}^{CE' \rightarrow N}$ as $\mathcal{D}(\xi) = \tr_{G}\left[ V \xi V\mdag \right]$ and tracing out system $D$, we get that
\begin{equation*}
	\left\| \mathcal{D}\left( \tr_{D}\left[ U^{NE \rightarrow CD}\left( \sigma^{RMN} \otimes \omega^{E'E} \right) (U^{NE \rightarrow CD})\mdag \right] \right) - \sigma^{RMN}\right\|_1 \leqslant 4 \left( 2^{H_{\max}(M)_{\sigma} - H_2(E)_{\omega}} \frac{D}{C} \right)^{1/4}.
\end{equation*}

Now, to satisfy the theorem statement, we need to ensure that
\[ 4 \left( 2^{H_{\max}(M)_{\sigma} - H_2(E)_{\omega}} \frac{D}{C} \right)^{1/4} \leqslant \varepsilon. \]
Taking logarithms on both sides and using the fact that $\log D = k + e$, we get that

\begin{equation*}
	2 + \frac{1}{4} \left[ H_{\max}(M)_{\sigma} - H_2(E)_{\omega} + e + k - c \right] \leqslant \log \varepsilon.
\end{equation*}
Substituting in the fact that $c = n - k$, we arrive at the statement of the theorem.
\end{proof}

%%%%%%%%%%%%%%%%%%%%%%%%%%%%%%%%%%%%%%%%%%%%%%%%%
%%%%%%%%%%%%%%%%%%%%%%%%%%%%%%%%%%%%%%%%%%%%%%%%%
%%%%%%%%%%%%%%%%%%%%%%%%%%%%%%%%%%%%%%%%%%%%%%%%%
\section{Implications for the security of quantum protocols against quantum adversaries}\label{sec:qkd}
%%%%%%%%%%%%%%%%%%%%%%%%%%%%%%%%%%%%%%%%%%%%%%%%%
%%%%%%%%%%%%%%%%%%%%%%%%%%%%%%%%%%%%%%%%%%%%%%%%%
%%%%%%%%%%%%%%%%%%%%%%%%%%%%%%%%%%%%%%%%%%%%%%%%%
When designing quantum cryptographic protocols, it is often necessary to show that a quantum adversary (``Eve'') is left with only a negligible amount of information on some secret string. An initial attempt at formalizing this idea is to say that, at the end of the protocol, regardless of what measurement Eve makes on her quantum system, the mutual information between her measurement result and the secret string is at most $\varepsilon$ (in other words, her accessible information about the message is at most $\varepsilon$). This was often taken as the security definition for quantum key distribution, usually implicitly by simply not considering that the adversary might keep quantum data at the end of the protocol \cite{lc99,sp00,NC2000,gl03,lca05} (see also discussion in \cite{bhlmo05,RK04,krbm07}). In \cite{krbm07}, it is shown that this definition of security is inadequate, precisely because of possible locking effects. Indeed, this security definition does not exclude the possibility that Eve, upon gaining partial knowledge of $S$ after the end of the protocol, could then gain more by making a measurement on her quantum register that depends on the partial information that she has learned. In \cite{krbm07}, the authors exhibit an admittedly contrived quantum key distribution protocol which generates a secret $n$-bit key such that, if Eve learns the first $n-1$ bits, she can then learn the remaining bit by measuring her own quantum register. 

The locking scheme presented above allows us to demonstrate a much more spectacular failure of this security definition. We will show that there exists a quantum key distribution protocol that ensures that an adversary has negligible accessible information about the final key, but with which an adversary can recover the entire key upon learning only a very small fraction of it.

\subsection{Description of the protocol}
We will derive this faulty protocol by starting with a protocol that is truly secure, and then making Alice send a locked version of the secret string directly to Eve. We will be able to prove that regardless of what measurement Eve makes on her state, she will learn essentially no information on the string, but of course, she only needs to learn a tiny amount of information to unlock what Alice sent her. More precisely, let $P$ be a quantum key distribution protocol such that, at the end of its execution, Alice and Bob share an $n$-bit string, and Eve has a quantum state representing everything that she has managed to learn about the string. We will also assume that $P$ is a truly secure protocol: the string together with Eve's quantum state can be represented as a quantum state $\sigma^{SE}$ such that $\| \sigma^{SE} - \pi^S \otimes \sigma^E \|_1 \leqslant \varepsilon$, where $S$ is a quantum register holding the secret string, and $E$ is Eve's quantum register. Now, we will define the protocol $P'$ to be the following quantum key distribution protocol: Alice and Bob first run $P$ to generate a string $s$ of length $n$, and then Alice splits $s$ into two parts: the first part $s_k$ is of size $O(\log n)$, and the second part $s_c$ contains the rest of the key. Alice then uses the classical key $s_k$ to create a quantum state in register $C$ that contains a locked version of $s_c$ and sends the system $C$ to Eve.

How secure is $P'$? It is clearly very insecure, since, if Eve ever ends up learning $s_k$ (via a known plaintext attack, for instance), she can then completely recover $s_c$. However, the next theorem shows that, right after the execution of $P'$, Eve cannot make any measurement that will reveal information about the key. In particular, $P'$ satisfies the requirement that Eve's accessible information on the key be very low.

\begin{thm}
Let $P$ and $P'$ be quantum key distribution protocols as defined as above, and let $\rho^{CES}$ be the state at the end of the execution of $P'$: $S$ contains the $n$-bit string $s$, $E$ is Eve's quantum register after the execution of $P$, and $C$ contains the locked version of $s_c$ that Alice sent to Eve. Then, for any measurement superoperator $\mathcal{M}^{CE \rightarrow X}$, there exists a state $\xi^X$ such that
\[ \left\| \mathcal{M}(\rho^{CES}) - \xi^X \otimes \pi^S \right\|_1 \leqslant 2\varepsilon. \]
This also entails that
\[ I_{\acc}(S;CE) \leqslant 8\varepsilon n + 2\eta(1-2\varepsilon) + 2\eta(2\varepsilon) \]
via the Alicki-Fannes inequality (see Lemma \ref{lem:alicki-fannes}).
\end{thm}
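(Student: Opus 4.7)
I would prove the trace-distance claim by inserting an idealized hybrid between $\rho^{CES}$ and the desired product form $\xi^X \otimes \pi^S$, then applying the locking theorem to the hybrid. Let $\mathcal{L}^{S \to SC}$ be the CPTP map that, on a classical input $s = (s_k, s_c)$, keeps $s$ in $S$ and writes the $s_k$-locked encoding of $s_c$ into $C$ (and acts as identity on $E$). Define the ideal version $\tilde\rho^{CES} := \mathcal{L}(\pi^S \otimes \sigma^E) = \tilde\rho^{CS} \otimes \sigma^E$ corresponding to a perfectly secure run of $P$. Since $\rho^{CES} = \mathcal{L}(\sigma^{SE})$, two applications of the monotonicity of the trace distance (under $\mathcal{L}$ and then under $\mathcal{M}$) give
\[
\| \mathcal{M}(\rho^{CES}) - \mathcal{M}(\tilde\rho^{CES}) \|_1 \;\leqslant\; \| \sigma^{SE} - \pi^S \otimes \sigma^E \|_1 \;\leqslant\; \varepsilon.
\]

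Next I would bound the second piece $\| \mathcal{M}(\tilde\rho^{CES}) - \xi^X \otimes \pi^S \|_1$ by reducing the joint measurement on $CE$ to a measurement on $C$ alone. Given the POVM $\{M_i^{CE}\}$ implementing $\mathcal{M}$, set $\tilde M_i^C := \tr_E[M_i^{CE}(\ident^C \otimes \sigma^E)]$. The $\tilde M_i^C$ are positive and sum to $\tr[\sigma^E]\,\ident^C = \ident^C$, so they form a valid POVM; the associated measurement superoperator $\mathcal{M}'^{C \to X}$ satisfies $\mathcal{M}(\tau^C \otimes \sigma^E) = \mathcal{M}'(\tau^C)$ for every $\tau^C$. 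In particular $\mathcal{M}(\tilde\rho^{CES}) = \mathcal{M}'(\tilde\rho^{CS})$.

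Because $s_k$ is of length $O(\log n)$ and the encoding is built from the paper's locking scheme, Corollary~\ref{cor:unihighPOVM} guarantees that for every measurement superoperator on $C$, and hence in particular for $\mathcal{M}'$,
\[
\| \mathcal{M}'(\tilde\rho^{CS}) - \mathcal{M}'(\tilde\rho^C) \otimes \pi^S \|_1 \;\leqslant\; \varepsilon,
\]
using $\tilde\rho^S = \pi^S$ and the fact that $\mathcal{M}'$ acts only on $C$ so that $\mathcal{M}'(\tilde\rho^C \otimes \pi^S) = \mathcal{M}'(\tilde\rho^C) \otimes \pi^S$. Taking $\xi^X := \mathcal{M}'(\tilde\rho^C) = \mathcal{M}(\tilde\rho^C \otimes \sigma^E)$ and combining with the previous display via the triangle inequality yields the desired $2\varepsilon$ bound. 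The accessible-information statement then follows by treating $S$ as classical and applying Lemma~\ref{lem:alicki-fannes} with $\log|S| = n$ and parameter $2\varepsilon$.

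The argument presents no serious mathematical obstacle; it is essentially hybrid-argument bookkeeping once the right idealization is identified. The only point requiring genuine care is the reduction of the joint measurement on $CE$ to a measurement on $C$, which is valid precisely because $E$ tensor-factorizes off $SC$ in the idealized state $\tilde\rho^{CES}$, so the fixed state $\sigma^E$ can be absorbed into the POVM elements; this is exactly what allows a guarantee originally phrased for a no-side-information adversary to apply against an adversary holding the quantum register $E$.
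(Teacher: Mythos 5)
Your proposal is correct and follows essentially the same route as the paper's own proof: a hybrid argument replacing $\sigma^{SE}$ by $\pi^S\otimes\sigma^E$ via monotonicity of the trace distance under the locking-encoding map and the measurement, followed by an application of the $\varepsilon$-locking guarantee to the induced measurement on $C$, and a triangle inequality. The only difference is cosmetic: you make explicit the POVM reduction $\tilde M_i^C = \tr_E[M_i^{CE}(\ident^C\otimes\sigma^E)]$, which the paper handles in a single sentence by observing that preparing $\rho^E$ and measuring $\mathcal{M}$ is itself a measurement on $C$ alone.
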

\begin{proof}
From the definition of $P$, we have that
\begin{equation}
\left\| \rho^{ES} - \pi^S \otimes \rho^E \right\|_1 \leqslant \varepsilon.
\end{equation}
Now, let $\mathcal{C}^{S \rightarrow CS}$ be a superoperator that takes a classical string in $S$, splits it into $s_k$ and $s_c$, creates a locked version of $s_c$ with $s_k$ as the key into the quantum system $C$, and leaves the classical string in $S$ unchanged; this is simply the operation that Alice performs when preparing $C$ for Eve. The above inequality, combined with the monotonicity of the trace distance under CPTP maps yields
\begin{equation}
\left\| \rho^{CES} -  \mathcal{C}(\pi^S) \otimes \rho^E \right\|_1 \leqslant \varepsilon
\label{eqn:qkd1}
\end{equation}
and hence, for any measurement superoperator $\mathcal{M}^{CE \rightarrow X}$,
\begin{equation}
\left\| \mathcal{M}(\rho^{CES}) -  \mathcal{M}(\mathcal{C}(\pi^S) \otimes \rho^E) \right\|_1 \leqslant \varepsilon
\label{eqn:qkd2}
\end{equation}
Consider now the expression $\mathcal{M}^{CE \rightarrow X}(\mathcal{C}(\pi^S) \otimes \rho^E)$: it can be viewed as a measurement on the $C$ system of $\mathcal{C}^{S \rightarrow CS}(\pi^S)$ alone that is implemented by creating the state $\rho^E$ and then measuring $\mathcal{M}^{CE \rightarrow X}$. Furthermore, note that, by the definition of an $\varepsilon$-locking scheme, we have that, for every measurement superoperator $\mathcal{N}^{C \rightarrow X}$,
\begin{equation}
\left\| \mathcal{N}(\mathcal{C}(\pi^S)) - \mathcal{N}(\tr_S[\mathcal{C}(\pi^S)]) \otimes \pi^S \right\|_1 \leqslant \varepsilon.
\end{equation}
Applying this to $\mathcal{M}^{CE \rightarrow X}(\mathcal{C}(\pi^S) \otimes \rho^E)$, we get that
\begin{equation}
\left\| \mathcal{M}(\mathcal{C}(\pi^S) \otimes \rho^E) -  \mathcal{M}(\tr_S[\mathcal{C}(\pi^S)] \otimes \rho^E) \otimes \pi^S \right\|_1 \leqslant \varepsilon.
\label{eqn:qkd3}
\end{equation}
We now use the triangle inequality on Equations (\ref{eqn:qkd2}) and (\ref{eqn:qkd3}) to obtain
\begin{equation}
\left\| \mathcal{M}(\rho^{CES}) - \mathcal{M}(\tr_S[\mathcal{C}(\pi^S)] \otimes \rho^E) \otimes \pi^S \right\|_1 \leqslant 2\varepsilon
\end{equation}
which yields the theorem with $\xi^X := \mathcal{M}(\tr_S[\mathcal{C}(\pi^S)] \otimes \rho^E)$.
\end{proof}

Hence, we have shown that requiring that Eve's accessible information on the generated key be low is not an adequate definition of security for quantum key distribution. We have exhibited a protocol $P'$ which guarantees low accessible information and yet is clearly insecure due to locking effects.

%%%%%%%%%%%%%%%%%%%%%%%%%%%%%%%%%%%%%%%%%%%%%%%
%%%%%%%%%%%%%%%%
%%%%%%%%%%%%
%%%%%%%%%%%%%%%%%%%%%%%%%%%%%%
\section{Discussion}\label{sec:discussion}

It is natural in physics to  measure the ``correlation'' between two quantum physical systems using the correlation between the outcomes of measurements on those two systems. Two-point correlation functions are but the most ubiquitous examples. The results in this article demonstrate that this practice can sometimes be very misleading. The $\epsilon$-locking quantum states exhibited in this article would reveal no correlations using any type of measurement, but enlarging one of the two systems by a small number of qubits would expose near-perfect correlation between the two systems. This is an important and counterintuitive property of information in quantum mechanical systems: measurements can be distressingly bad ways to detect correlation.

The extensive literature on quantum discord is essentially devoted to exploring the relationship between accessible, or classical, and quantum mutual information~\cite{OlZur01,HenVed01,BluZur06}. Since the discord is defined as the gap between the quantum and classical mutual informations, locking can be viewed as the extreme case where classical mutual information doesn't detect any of the very abundant quantum mutual information.
Previous work had demonstrated that transmitting a constant number of physical qubits can cause the classical mutual information to increase from a fixed small constant to an arbitrarily large value. In this article, we have strengthened the definition of locking, replacing the mutual information by the trace distance to a product distribution. Moreover, we have shown that the locking effect still exists even when the trace distance (or the classical mutual information) is made arbitrarily small. In light of these results, claims that the discord is a robust measure of quantum correlation~\cite{WerSou09} should treated with skepticism. While discord is certainly a signature of quantumness, its susceptibility to locking means that it is in this important respect not robust.

Previous studies of information locking had also always focused on the example of sending classical information in one of a small number of different bases unknown to the receiver. The intuition was that a receiver ignorant of the basis could not do much better than guessing the basis and then measuring. Most of the time, he would guess incorrectly and his measurement would then destroy the information. Moving away from that paradigm, in this article we consider classical information encoded using a single generic unitary transformation mixing the input information with half of an entangled state shared with the receiver. The ``key'' then becomes a quantum system. While the original paradigm can be recovered by eliminating the entanglement and encrypting the key quantum system with a private quantum channel, the setting considered here is strictly more general. 

Indeed, we find that, for an $n$-bit uniform message and maximal entanglement, the information is generically $\epsilon$-locked until the receiver is within $O(\log n/\epsilon)$ qubits of being able to completely decode the message. Our definition of locking is stronger than those previously studied and our results imply, for the first time, that the classical mutual information can be made arbitrarily small. Our method of proof in the case of projective measurements was a fairly standard discretization argument but the extension to POVM measurements required a new strategy exploiting the operator Chernoff bound. In contrast to previous studies of locking, we do not require the message to be uniformly distributed, working instead with a min-entropy bound on the distribution of messages. In that case, we found that the key size was at most the gap between the max- and min-entropies of the message, modulo the logarithmic terms that dominate in the uniform situation. %We leave it as an open question whether this larger gap in the non-uniform case is an artifact of our proof technique or an unavoidable additional penalty. 

For information theorists, this may appear reminiscent of a strong converse to a channel capacity problem. Roughly, a strong converse theorem states that any attempt to transmit above the channel capacity will result in the decoding error probability approaching one. In our setting, the analog of the strong converse would be a matching lower bound to Equation (\ref{eqn:measurement-existence}) of the form
\begin{equation} \label{eqn:strong-converse}
1 - \epsilon < \frac{1}{M} \sum_m \sum_{m' \neq m} p(m' | m )
\end{equation}
whenever $C < M$, indicating the the probability of incorrectly decoding the message is at least $1-\epsilon$. What we prove here is {much} stronger. Equation (\ref{eqn:strong-converse}) doesn't rule out the possibility of being able to pin the message down to some relatively small set. More generally, it doesn't imply a small mutual information between the message and the measurement outcome. Information locking does imply these stronger statements.

As such, information locking has a natural cryptographic interpretation even if we haven't emphasized it in this article. The special case of our scenario mentioned above, with no entanglement and a quantum key encrypted using a private quantum channel, leads to a method for encrypting classical messages using a secret key of size independent of the length of the message. Similarly, information locking schemes can be used to construct string commitment protocols with surprisingly good parameters~\cite{BuhChrHay06,BuhChrHay08}. These cryptographic applications are emphasized in the companion article~\cite{FawHaySen10}.

To the extent that random unitary transformations provide good models of black hole evaporation, our results might also have implications for that process. Oppenheim and Smolin had previously suggested that information locking could rescue the long-lived remnant hypothesis~\cite{OppSmo06}. In essence, their idea was that a remnant with a small number of states could lock all the information of a large black hole, thereby evading the inconsistencies with low energy physics that arise from having large  numbers of remnant species~\cite{AhaCasNus,CarWil}. Their proposal, however, relied on previously studied locking states that treated the encoded message and the key very differently. Consequently, the proposal required that the black hole keep hold of the key until the very last moments of its evaporation, implying some ad hoc dynamical distinction between encoded message and key in the evaporation process. Our results imply that \emph{if} the dynamics are well-modeled by a Haar random unitary transformation, then any small portion of the output system can be used as the key. No ad hoc distinction is necessary. 

Ironically, the information locking effect is also perfectly compatible with the rapid release of information from a black hole predicted in~\cite{HayPre07}, assuming a unitary evaporation process. That article observed that if a black hole is already highly entangled with Hawking radiation from an earlier time, then messages would be released from the black hole in the Hawking radiation once the black hole dynamics had sufficiently ``scrambled'' the message with internal black hole degrees of freedom. By virtue of the fact that we treat generic unitary transformations acting on a message and half of an entangled state, our results apply to the setting of that paper and the followup~\cite{SusSek}. Specifically, our results imply that in the case of a larger message, \emph{no} information about the message could be obtained from the Hawking radiation until moments before it could \emph{all} be obtained. The conclusion depends, of course, on whether the random unitary transformation is a good model of the evaporation process. While the generic unitary transformations considered here would take exponential time to implement on a quantum computer, the follow-up article~\cite{FawHaySen10} shows, at least, that locking can be achieved with a quantum circuit of depth only slightly superlinear in the number of qubits in the system. Other attempts to apply random unitary transformations to the black hole information problem, such as~\cite{Lloyd06,BraunSomZ09}, will be affected similarly by information locking.

To summarize, this article defined information locking more stringently than previously and nonetheless found that this stronger form of locking is generic: if information is encoded using a random unitary transformation, then it will either be decodable or locked. Almost no middle ground occurs. This observation has implications for cryptography and, potentially, for black hole physics. 

\section*{Acknowledgments}

Andreas Winter has independently established some locking results for generic unitary transformations. We would like to thank Jonathan Oppenheim for helpful discussions and the Mittag-Leffler Institute for its kind hospitality.  This research was supported by the Canada Research Chairs program, the Perimeter Institute, CIFAR, CFI, FQRNT's INTRIQ, MITACS, NSERC, ORF, ONR through grant N000140811249, QuantumWorks, and the Swiss National Science Foundation through grant no. 200021-119868.

\bibliographystyle{alpha}
\bibliography{big}

\newpage
\appendix
%%%%%%%%%%%%%%%%%%%%%%%%%%%%%%%%%%%%%%%%%%%%%%%%%
%%%%%%%%%%%%%%%%%%%%%%%%%%%%%%%%%%%%%%%%%%%%%%%%%
%%%%%%%%%%%%%%%%%%%%%%%%%%%%%%%%%%%%%%%%%%%%%%%%%
\section{Miscellany}\label{app}
%%%%%%%%%%%%%%%%%%%%%%%%%%%%%%%%%%%%%%%%%%%%%%%%%
%%%%%%%%%%%%%%%%%%%%%%%%%%%%%%%%%%%%%%%%%%%%%%%%%
\begin{defin}[Lipschitz constant]\label{def:lipschitz}
	Let $f: \mfX \rightarrow \mfY$ be a function from the metric space $(\mfX, d_{\mfX})$ to the metric space $(\mfY, d_{\mfY})$. Then, the Lipschitz constant of $f$ is defined as
\[ \sup_{x_1, x_2 \in \mfX} \frac{d_{\mfY}(f(x_1), f(x_2))}{d_{\mfX}(x_1,x_2)}. \]
If the above quantity is not bounded, the constant is not defined.
\end{defin}

%%%%%%%%%%%%%%%%%%%%%%%%%%%%%%%%%%%%%%%%%%%%%%%%%
%%%%%%%%%%%%%%%%%%%%%%%%%%%%%%%%%%%%%%%%%%%%%%%%%
\begin{lem}[Lemma IV.3 in~\cite{FQSW}]\label{Schur}
	For any matrix $X^{A\overline{A}R}$ and for $\mathrm{d}U$ the Haar measure over unitaries, we have the following property:
\begin{equation*}
	\int_U \lp U_A \ox U_{\overline{A}} \otimes \ident^R \rp X^{A\overline{A}R} \lp U_A^{\dagger} \ox U_{\overline{A}}^{\dagger} \otimes \ident^R \rp \mathrm{d}U = \alpha_{+} \lp X \rp \otimes \Pi_+^A + \alpha_{-} \lp X \rp \otimes \Pi_-^A
\end{equation*}
where
\begin{equation*}
	\alpha_{\pm}\lp X \rp = \frac{\Tr_{A\overline{A}}\left[ X (\Pi^A_{\pm} \otimes \ident^R) \right] }{\mathrm{rank} \lp \Pi^A_{\pm} \rp} \hspace{0.6in} \Pi^A_{\pm} = \frac{1}{2} \lp \ident^{A\overline{A}} \pm F^{A}_{\overline{A}} \rp \hspace{0.6in} \mathrm{rank} \lp \Pi^A_{\pm} \rp = \frac{|A| (|A| \pm 1)}{2}.
\end{equation*}
\end{lem}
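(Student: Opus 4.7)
\textbf{Proof plan for Lemma \ref{Schur}.} The identity is a standard twirl computation, and the plan is to derive it from Schur–Weyl duality applied to the diagonal representation $U \mapsto U \otimes U$ of $\mathcal{U}(|A|)$ on $A \otimes \overline{A}$. First, because the $R$ register is untouched by the integrand, I will treat the left-hand side as a linear map $\mathcal{T}$ acting on operators over $A \otimes \overline{A}$, promoted to $A\overline{A}R$ by tensoring with the identity channel on $R$. Concretely, I would expand a generic $X^{A\overline{A}R}$ in a product basis as $X = \sum_j Y_j^{A\overline{A}} \otimes Z_j^R$ and use linearity of the Haar integral to reduce the problem to evaluating $\mathcal{T}(Y)$ for a generic operator $Y$ on $A \otimes \overline{A}$.

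Second, the output $\mathcal{T}(Y)$ commutes with $U \otimes U$ for every $U \in \mathcal{U}(|A|)$, since the Haar measure is invariant under left (and right) multiplication. By Schur–Weyl duality, the commutant of $\{U \otimes U : U \in \mathcal{U}(|A|)\}$ on $A^{\otimes 2}$ is the two-dimensional algebra spanned by $\ident$ and the swap $F$, equivalently by the symmetric/antisymmetric projectors $\Pi_+^A$ and $\Pi_-^A$. Hence $\mathcal{T}(Y) = \alpha_+(Y)\, \Pi_+^A + \alpha_-(Y)\, \Pi_-^A$ for some scalars depending linearly on $Y$.

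Third, to identify the coefficients I will take the Hilbert–Schmidt inner product with $\Pi_\pm^A$ on both sides. On the one hand $\Tr[\Pi_\pm^A\, \mathcal{T}(Y)] = \alpha_\pm(Y) \operatorname{rank}(\Pi_\pm^A)$, using $\Tr[\Pi_+ \Pi_-] = 0$ and $\Tr[(\Pi_\pm)^2] = \operatorname{rank}(\Pi_\pm)$. On the other hand, since $\Pi_\pm^A$ itself commutes with $U \otimes U$, cyclicity of the trace and unitary invariance give $\Tr[\Pi_\pm^A\, \mathcal{T}(Y)] = \Tr[\Pi_\pm^A\, Y]$. Solving for $\alpha_\pm$ and reintroducing the $R$ system by linearity yields
\[
\alpha_\pm(X) = \frac{\Tr_{A\overline{A}}\bigl[ X\,(\Pi_\pm^A \otimes \ident^R)\bigr]}{\operatorname{rank}(\Pi_\pm^A)},
\]
which is exactly the claimed formula. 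The rank values $\operatorname{rank}(\Pi_\pm^A) = |A|(|A|\pm 1)/2$ are a standard dimension count for the symmetric and antisymmetric subspaces of a tensor square.

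The only genuinely nontrivial input is Schur–Weyl duality, and beyond that the argument is a routine exercise in extracting coefficients in a two-dimensional commutant; so there is no real obstacle. If one preferred to avoid quoting Schur–Weyl, one could instead prove the identity directly by evaluating Haar moments of the form $\int U_{ij} U_{k\ell} \overline{U_{i'j'} U_{k'\ell'}}\, dU$ via the Weingarten formula and recognizing the resulting two-term expression as $\alpha_+\Pi_+ + \alpha_-\Pi_-$, but the symmetry argument above is substantially cleaner.
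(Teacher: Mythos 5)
Your argument is correct: the commutant characterization via Schur--Weyl duality, the invariance of $\mathcal{T}(Y)$ under conjugation by $V\otimes V$, and the extraction of $\alpha_\pm$ by tracing against $\Pi_\pm^A$ (using $\Tr[\Pi_+^A\Pi_-^A]=0$ and the unitary invariance $\Tr[\Pi_\pm^A\,\mathcal{T}(Y)]=\Tr[\Pi_\pm^A\,Y]$) are exactly the standard twirling computation. Note that the paper itself gives no proof of this lemma --- it is imported verbatim as Lemma IV.3 of~\cite{FQSW} --- and your derivation coincides with the standard one found there, including the correct handling of the spectator register $R$ by linearity.
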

\bigskip

%%%%%%%%%%%%%%%%%%%%%%%%%%%%%%%%%%%%%%%%%%%%%%%%
%%%%%%%%%%%%%%%%%%%%%%%%%%%%%%%%%%%%%%%%%%%%%%%%
\begin{lem}[Operator Chernoff bound~ \cite{ahlswede-winter}]\label{lem:operator-chernoff}
  Let $X_1,\ldots,X_M$ be i.i.d.\ random variables taking values in the operators $\Pos(\sfA)$, with $0\leqslant X_j\leqslant \ident$, with $A=\mbE X_j\geqslant\alpha \ident$, and let $0<\eta \leqslant 1/2$. Then
  \begin{equation}
    \Pr \left\{ \frac{1}{M}\sum_{j=1}^M X_j \nleqslant (1+\eta)A \right\} \leqslant 2|A| \exp\left( -M\frac{\alpha\eta^2}{2\ln 2} \right).
  \end{equation}
\end{lem}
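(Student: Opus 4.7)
The plan is to follow the Ahlswede–Winter strategy: reduce to an operator Markov-type inequality, use Golden–Thompson to peel off the $M$ i.i.d.\ factors, bound the resulting matrix moment-generating function, and finally optimize the free exponential parameter.

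First I would normalize by the mean. Because $A \geq \alpha \ident > 0$, the matrix $A$ is invertible. Set $T_j := A^{-1/2} X_j A^{-1/2}$, so that $T_j \in \Pos(\sfA)$, $\mbE T_j = \ident$, and $T_j \leqslant A^{-1} \leqslant \alpha^{-1} \ident$. Conjugation by the invertible $A^{-1/2}$ preserves the operator ordering, so the event $\{\frac{1}{M}\sum X_j \nleqslant (1+\eta)A\}$ is exactly $\{\frac{1}{M}\sum T_j \nleqslant (1+\eta)\ident\}$. Now I use the fact that an Hermitian operator $Z$ fails to satisfy $Z \leqslant t\,\ident$ iff $\lambda_{\max}(Z) > t$, which gives $\tr e^{sZ} \geqslant e^{st}$ for any $s > 0$. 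Combined with Markov's inequality applied to the non-negative random variable $\tr e^{sS_M}$ (where $S_M = \sum_{j=1}^M T_j$), this yields
\begin{equation*}
\Pr\!\left\{ S_M \nleqslant (1+\eta)M\,\ident \right\} \leqslant e^{-s(1+\eta)M}\, \mbE \tr e^{sS_M}.
\end{equation*}

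Next, to control $\mbE \tr e^{sS_M}$ I would iterate Golden–Thompson ($\tr e^{H+K} \leqslant \tr(e^H e^K)$ for Hermitian $H,K$). Peeling off one $T_j$ at a time and using the cyclic-trace inequality $\tr(PQ) \leqslant \|Q\|_\infty \tr P$ for $P,Q \geqslant 0$,
\begin{equation*}
\mbE \tr e^{sS_M} \leqslant \bigl\| \mbE[e^{sT}] \bigr\|_\infty \cdot \mbE \tr e^{sS_{M-1}} \leqslant \cdots \leqslant |A| \cdot \bigl\| \mbE[e^{sT}] \bigr\|_\infty^{M}.
\end{equation*}
I then bound the one-variable matrix mgf using the scalar convexity inequality $e^{sx} \leqslant 1 + (e^{s/\alpha} - 1)\alpha x$ for $x \in [0, 1/\alpha]$ (chord above the convex exponential). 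Applied spectrally to $T$ and taking expectations, this gives $\mbE[e^{sT}] \preceq \bigl(1 + \alpha(e^{s/\alpha}-1)\bigr)\ident$, so $\bigl\| \mbE[e^{sT}] \bigr\|_\infty \leqslant \exp\!\bigl(\alpha(e^{s/\alpha}-1)\bigr)$.

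Plugging back in gives the bound $|A|\exp\!\bigl(M\alpha(e^{s/\alpha}-1) - (1+\eta)Ms\bigr)$. Writing $\lambda = s/\alpha$ and minimizing the exponent in $\lambda$, the optimum is $\lambda = \ln(1+\eta)$, yielding the Chernoff-type exponent $-M\alpha\bigl[(1+\eta)\ln(1+\eta) - \eta\bigr]$. Finally, for $0 < \eta \leqslant 1/2$ a routine Taylor-remainder estimate gives $(1+\eta)\ln(1+\eta) - \eta \geqslant \eta^{2}/(2\ln 2)$, producing the stated bound; the extra factor of $2$ in front of $|A|$ absorbs the analogous lower-tail estimate (obtained by the same argument applied to $\ident - T_j$) or looseness in passing from $1 + (e^\lambda - 1)\alpha$ to its exponential upper bound.

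I expect the main technical obstacle to be the application of Golden–Thompson on a product of matrices some of which do not commute with $A$; this is precisely why the normalization step is crucial and why the proof cannot simply invoke a scalar Chernoff bound eigenvalue-by-eigenvalue. Verifying the explicit constant $2\ln 2$ via the elementary inequality on $(1+\eta)\ln(1+\eta)-\eta$ is the only other place where care is required.
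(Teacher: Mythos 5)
The paper itself does not prove this lemma --- it is imported verbatim from Ahlswede and Winter --- so there is no internal proof to compare against. Your overall strategy (normalize by $A^{-1/2}$, apply Markov's inequality to $\tr e^{sS_M}$, peel off the i.i.d.\ factors with Golden--Thompson, bound the matrix moment-generating function, optimize $s$) is exactly the Ahlswede--Winter route, and every step up to and including the optimization $\lambda=\ln(1+\eta)$ is sound.

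The gap is in the very last step: the inequality $(1+\eta)\ln(1+\eta)-\eta \geqslant \eta^2/(2\ln 2)$ is false for every $\eta\in(0,1/2]$. The left-hand side equals $\eta^2/2-\eta^3/6+O(\eta^4)$, so its ratio to $\eta^2$ tends to $1/2$ as $\eta\to 0$, while $1/(2\ln 2)\approx 0.72$; at $\eta=1/2$ one has $1.5\ln 1.5-0.5\approx 0.108$ versus $0.25/(2\ln 2)\approx 0.180$. As written, your argument therefore proves only the weaker tail $|A|\exp\lp-M\alpha\ls(1+\eta)\ln(1+\eta)-\eta\rs\rp$, and since $f''(\eta)=1/(1+\eta)\geqslant 2/3$ on the stated range this is at best $|A|\exp(-M\alpha\eta^2/3)$ --- not the advertised constant. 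The loss is introduced where you pass to the Poisson-type bound $\mbE[e^{sT}]\preceq\exp\lp\alpha(e^{s/\alpha}-1)\rp\ident$, which discards the contribution of the complementary part of the distribution. To recover $1/(2\ln 2)$ one should instead set $T'=\alpha T$, which satisfies $0\leqslant T'\leqslant\ident$ and $\mbE T'=\alpha\ident$, and use the Bernoulli moment bound $\mbE[e^{sT'}]\preceq(1-\alpha+\alpha e^{s})\ident$ (the chord of $e^{sx}$ over $[0,1]$); optimizing $s$ then produces the exponent $-M\,D\lp(1+\eta)\alpha\,\|\,\alpha\rp\ln 2$ with $D$ the \emph{two-term} binary relative entropy in bits, and it is this full divergence --- not its first term alone, which is all your route retains --- that Ahlswede and Winter lower-bound by $\alpha\eta^2/(2\ln 2)$. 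Your remark that the prefactor $2$ is slack for the one-sided event is correct.
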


%%%%%%%%%%%%%%%%%%%%%%%%%%%%%%%%%%%%%%%%%%%%%%%%
%%%%%%%%%%%%%%%%%%%%%%%%%%%%%%%%%%%%%%%%%%%%%%%%
\begin{lem}[Trace distance versus Euclidean norm for pure states (See, e.g. ~\cite{NC2000}.)] \label{1normto2norm}
Consider any two quantum states $\ket{\varphi}, \ket{\tilde{\varphi}}$ with density associated operators  $\varphi, \tilde{\varphi}$ respectively. We can relate the $1$-norm distance between the operators to the $2$-norm distance of the states as follows,
\begin{equation*}
	\lno \varphi -\tilde{\varphi} \rno_1 \leq 2 \lno \ket{\varphi} - \ket{\tilde{\varphi}} \rno_2.
\end{equation*}
\end{lem}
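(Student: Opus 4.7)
My plan is to prove this via an algebraic decomposition of the difference of two rank-one projectors. The key observation is that, by adding and subtracting $\ket{\varphi}\bra{\tilde\varphi}$, one can write
\[ \varphi - \tilde\varphi \;=\; \ket{\varphi}\bra{\varphi} - \ket{\tilde\varphi}\bra{\tilde\varphi} \;=\; \ket{\varphi}\bigl(\bra{\varphi} - \bra{\tilde\varphi}\bigr) \;+\; \bigl(\ket{\varphi} - \ket{\tilde\varphi}\bigr)\bra{\tilde\varphi}. \]
This expresses the difference as a sum of two rank-one operators, to which the triangle inequality for the trace norm applies cleanly.

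From there, I would invoke the elementary identity $\lno \ket{u}\bra{v} \rno_1 = \lno \ket{u} \rno_2 \lno \ket{v} \rno_2$ for rank-one operators. Since $\ket{\varphi}$ and $\ket{\tilde\varphi}$ are unit vectors, each of the two terms in the decomposition has trace norm exactly $\lno \ket{\varphi} - \ket{\tilde\varphi} \rno_2$, and summing the two contributions produces the claimed factor of $2$.

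There is really no obstacle here; the entire argument is essentially a two-line calculation once the decomposition is written down, and this lemma is folklore stated only for reference. As a sanity check, one can verify the bound by diagonalizing $\varphi - \tilde\varphi$ directly: it is traceless Hermitian and supported on the two-dimensional span of $\ket{\varphi}$ and $\ket{\tilde\varphi}$, so its nonzero eigenvalues are $\pm\sqrt{1 - |\braket{\varphi}{\tilde\varphi}|^2}$, giving $\lno \varphi - \tilde\varphi \rno_1 = 2\sqrt{1 - |\braket{\varphi}{\tilde\varphi}|^2}$, while a direct computation yields $\lno \ket{\varphi} - \ket{\tilde\varphi} \rno_2 = \sqrt{2 - 2\,\mathrm{Re}\,\braket{\varphi}{\tilde\varphi}}$. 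Writing $z := \braket{\varphi}{\tilde\varphi}$, the inequality then reduces to $(1-|z|)(1+|z|) \leq 2(1-\mathrm{Re}\,z)$, which is immediate from $1+|z| \leq 2$ together with $\mathrm{Re}\,z \leq |z|$. I would present the decomposition proof rather than the diagonalization one, because it is basis-free, avoids splitting into real and imaginary parts of the inner product, and extends cleanly to subnormalized vectors should that ever be needed elsewhere in the paper.
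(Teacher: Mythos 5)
Your proof is correct. Note that the paper itself offers no proof of this lemma---it is stated in the appendix with only a citation to Nielsen and Chuang---so there is no argument of the authors' to compare against. Your decomposition $\varphi - \tilde\varphi = \ket{\varphi}\bigl(\bra{\varphi}-\bra{\tilde\varphi}\bigr) + \bigl(\ket{\varphi}-\ket{\tilde\varphi}\bigr)\bra{\tilde\varphi}$, combined with the triangle inequality and the identity $\lno \ket{u}\bra{v}\rno_1 = \lno\ket{u}\rno_2\,\lno\ket{v}\rno_2$ for rank-one operators, is exactly the standard route and yields the factor of $2$ cleanly; the eigenvalue-based sanity check is also correct, since $(1-|z|)(1+|z|) \leq 2(1-|z|) \leq 2(1-\mathrm{Re}\,z)$ for $|z|\leq 1$. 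Your observation that the decomposition argument survives for subnormalized vectors is a genuine (if minor) advantage over the diagonalization route, which leans on both vectors being normalized.
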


%%%%%%%%%%%%%%%%%%%%%%%%%%%%%%%%%%%%%%%%%%%%%%%%
%%%%%%%%%%%%%%%%%%%%%%%%%%%%%%%%%%%%%%%%%%%%%%%%
\begin{lem}[A bound for the $1$-norm in terms of conditional entropy~\cite{renner-phd,fred-these}] \label{lem:tight12}
Let $\rho \in \mathrm{L}(A)$ be any Hermitian operator and let $\gamma \in \mathrm{Pos}(A)$ be a positive definite operator. Then,
\begin{equation*}
	\lno \rho \rno_1 \leqslant \sqrt{\Tr \ls \gamma \rs \Tr \ls \lp \gamma^{-1/4} \rho \gamma^{-1/4} \rp^2 \rs}.
\end{equation*}
\end{lem}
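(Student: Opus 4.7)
The plan is to reduce this to a Hölder-type inequality for Schatten norms by a similarity transformation. Since $\gamma$ is positive definite, $\gamma^{-1/4}$ exists and is Hermitian, so the operator
\[
M := \gamma^{-1/4}\,\rho\,\gamma^{-1/4}
\]
is Hermitian whenever $\rho$ is. The key algebraic identity is the factorization $\rho = \gamma^{1/4} M \gamma^{1/4}$, which lets us isolate the ``weighted'' middle factor $M$ (whose Hilbert--Schmidt norm appears on the right side of the claimed bound) from two copies of $\gamma^{1/4}$ (whose Schatten 4-norms will contribute the $\Tr[\gamma]$ factor).

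With this factorization in hand, I would apply the noncommutative H\"older inequality $\|ABC\|_1 \leq \|A\|_p\,\|B\|_q\,\|C\|_r$ for any $p,q,r\geq 1$ satisfying $1/p+1/q+1/r = 1$, specialized to $p=r=4$ and $q=2$. This yields
\[
\|\rho\|_1 \;=\; \|\gamma^{1/4} M \gamma^{1/4}\|_1 \;\leq\; \|\gamma^{1/4}\|_4 \,\|M\|_2\, \|\gamma^{1/4}\|_4 \;=\; \|\gamma^{1/4}\|_4^{\,2}\,\|M\|_2.
\]
Then I would compute directly $\|\gamma^{1/4}\|_4^{\,4} = \Tr[(\gamma^{1/4})^4] = \Tr[\gamma]$, so $\|\gamma^{1/4}\|_4^{\,2} = \sqrt{\Tr[\gamma]}$, and $\|M\|_2^{\,2} = \Tr[M^\dagger M] = \Tr[M^2]$ since $M$ is Hermitian. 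Substituting these expressions gives exactly the stated inequality.

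If one prefers not to invoke the three-term H\"older inequality as a black box, the same bound can be obtained by two applications of Cauchy--Schwarz. Using trace-norm duality (or the spectral decomposition of $\rho$) write $\|\rho\|_1 = \Tr[W\rho]$ for a suitable unitary $W$, so that $\|\rho\|_1 = \Tr[(\gamma^{1/4} W \gamma^{1/4}) M]$. The Hilbert--Schmidt Cauchy--Schwarz inequality bounds this by $\sqrt{\Tr[\gamma^{1/2} W^\dagger \gamma^{1/2} W]}\cdot\sqrt{\Tr[M^2]}$, and a second Cauchy--Schwarz applied to the two positive operators $\gamma^{1/2}$ and $W^\dagger\gamma^{1/2}W$ (both of which have Hilbert--Schmidt norm squared equal to $\Tr[\gamma]$) shows $\Tr[\gamma^{1/2}W^\dagger\gamma^{1/2}W] \leq \Tr[\gamma]$, giving the same conclusion.

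I do not expect a serious obstacle here: the only subtlety is justifying the factorization $\rho = \gamma^{1/4} M \gamma^{1/4}$, which requires positive definiteness of $\gamma$ so that $\gamma^{-1/4}$ is well defined on the entire space, and noting that $M$ inherits Hermiticity from $\rho$ so that $\|M\|_2^2 = \Tr[M^2]$ (without absolute value) matches the form on the right-hand side of the statement. The computation of $\|\gamma^{1/4}\|_4$ and the invocation of noncommutative H\"older (or the two Cauchy--Schwarz steps) are each routine.
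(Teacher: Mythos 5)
Your proof is correct, and your second (double Cauchy--Schwarz) argument is essentially identical to the paper's own proof, which writes $\|\rho\|_1 = \max_U |\Tr[U\rho]|$, inserts $\gamma^{1/4}\cdot\gamma^{-1/4}$ on both sides, and applies Cauchy--Schwarz twice. Your primary route via the three-term noncommutative H\"older inequality with exponents $(4,2,4)$ is just a cleaner packaging of the same idea, and the supporting computations $\|\gamma^{1/4}\|_4^2=\sqrt{\Tr[\gamma]}$ and $\|M\|_2^2=\Tr[M^2]$ are all correct.
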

\begin{proof}
\begin{eqnarray*}
	\lno \rho \rno_1 & = & \underset{U \in \mcal{U}(A)}{\mathrm{max}} \left| \Tr \ls U \rho \rs \right| \\
	& = & \underset{U \in \mcal{U}(A)}{\mathrm{max}} \left| \Tr \ls \lp \gamma^{1/4} U \gamma^{1/4} \rp \lp \gamma^{-1/4} \: \rho \: \gamma^{-1/4} \rp \rs \right| \\
	& \leqslant & \underset{U \in \mcal{U}(A)}{\mathrm{max}} \sqrt{ \Tr \ls \lp \gamma^{1/4} U \gamma^{1/4} \rp \lp \gamma^{1/4} U^{\dagger} \gamma^{1/4} \rp \rs \Tr \ls \gamma^{-1/4} \: \rho \: \gamma^{-1/2} \: \rho^{\dagger} \: \gamma^{-1/4} \rs } \\
	& = & \sqrt{\underset{U \in \mcal{U}(A)}{\mathrm{max}} \Tr \ls \gamma^{1/2} U \gamma^{1/2} U^{\dagger} \rs \Tr \ls \gamma^{-1/4} \: \rho \: \gamma^{-1/2} \: \rho^{\dagger} \: \gamma^{-1/4} \rs } \\
	& = & \sqrt{\Tr \ls \gamma \rs \Tr \ls \gamma^{-1/4} \: \rho \: \gamma^{-1/2} \: \rho^{\dagger} \: \gamma^{-1/4} \rs } ,
\end{eqnarray*}
where the first equality is an application of Lemma I.6 in~\cite{fred-these} and the inequality results from an application of Cauchy-Schwarz, and the maximizations are over all unitaries on $A$. The last equality follows from
\begin{eqnarray*}
	\underset{U \in \mcal{U}(A)}{\mathrm{max}} \Tr \ls \gamma^{1/2} U \gamma^{1/2} U^{\dagger} \rs & \leqslant & \underset{U \in \mcal{U}(A)}{\mathrm{max}} \sqrt{\Tr \ls \gamma \rs \Tr \ls U \gamma^{1/2} U^{\dagger} U \gamma ^{1/2} U^{\dagger} \rs} \\
	& = & \Tr \ls \gamma \rs \\
	& \leqslant & \underset{U \in \mcal{U}(A)}{\mathrm{max}} \Tr \ls \gamma^{1/2} U \gamma^{1/2} U^{\dagger} \rs.
\end{eqnarray*}
\end{proof}

\end{document}